\documentclass[a4paper,USenglish,cleveref, autoref, thm-restate]{lipics-v2021}

\usepackage{bbm}
\usepackage{cancel}
\usepackage{csquotes}
\usepackage{mathtools}
\usepackage{multicol}
\setlength{\multicolsep}{6.0pt plus 2.0pt minus 1.5pt}\usepackage{todonotes}

\usepackage{tikz}
\usetikzlibrary{arrows.meta,
                calc,
                decorations,
                decorations.pathreplacing,
                positioning,
                shapes}

\tikzset{vertex/.style={circle, fill=black, inner sep=2pt}
}

\bibliographystyle{plainurl}

\title{Graph Similarity and Homomorphism Densities}

\author{Jan Böker}{RWTH Aachen University, Aachen, Germany}{boeker@informatik.rwth-aachen.de}{https://orcid.org/0000-0003-4584-121X}{}

\authorrunning{J.\ Böker} 

\Copyright{Jan Böker} 

\ccsdesc[500]{Mathematics of computing~Graph theory} 

\keywords{graph similarity, homomorphism densities, cut distance}

\relatedversion{}

\acknowledgements{The author would like to thank Yufei Zhao for pointing out how a non-quantitative inverse counting lemma for the cut distance follows from the compactness of the graphon space and the anonymous reviewers for their helpful comments.}

\nolinenumbers 

\hideLIPIcs  

\EventEditors{John Q. Open and Joan R. Access}
\EventNoEds{2}
\EventLongTitle{42nd Conference on Very Important Topics (CVIT 2016)}
\EventShortTitle{CVIT 2016}
\EventAcronym{CVIT}
\EventYear{2016}
\EventDate{December 24--27, 2016}
\EventLocation{Little Whinging, United Kingdom}
\EventLogo{}
\SeriesVolume{42}
\ArticleNo{23}

\newcommand{\dist}{\delta_\square}
\newcommand{\fracdist}{d_\square}
\newcommand{\dens}{t}
\newcommand{\Hom}{\mathsf{Hom}}
\renewcommand{\hom}{\mathsf{hom}}

\newcommand{\N}{\mathbb{N}}
\newcommand{\R}{\mathbb{R}}

\newcommand{\Rnn}{\mathbb{R}_{\ge 0}}
\newcommand{\es}{\mathsf{e}}
\newcommand{\vs}{\mathsf{v}}
\newcommand{\numTo}[1]{[#1]}

\renewcommand{\epsilon}{\varepsilon}

\DeclareMathOperator{\lcm}{lcm}

\newcommand{\trees}{\mathcal{T}}

\newcommand{\groheDist}{\text{dist}_{\lVert \cdot \rVert}}\newcommand{\Fiso}{\mathsf{F}_{\text{iso}}(G, H)}

\newcommand{\neiDistance}{\delta^\mathcal{T}_{\square}}\newcommand{\neiDist}{\neiDistance}\newcommand{\neiDistNo}{\delta^\mathcal{T}}

\newcommand{\kernels}{\mathcal{W}}\newcommand{\graphons}{\mathcal{W}_0}\newcommand{\graphonsWI}{\widetilde{\mathcal{W}}_0}\newcommand{\measPres}{S_{[0,1]}}\newcommand{\intd}[1]{\,d#1}\newcommand{\dx}{\intd{x}}\newcommand{\dy}{\intd{y}}\newcommand{\dz}{\intd{z}}\newcommand{\markov}{\mathcal{M}}\newcommand{\normC}[1]{\lVert #1 \rVert_\square}\newcommand{\normCT}[1]{\lVert #1 \rVert_{\square,2}}\newcommand{\normO}[1]{\lVert #1 \rVert_1}\newcommand{\normI}[1]{\lVert #1 \rVert_\infty}\newcommand{\normT}[1]{\lVert #1 \rVert_2}\newcommand{\normTT}[1]{\lVert #1 \rVert_{\C, 2\rightarrow2}}\newcommand{\normRTT}[1]{\lVert #1 \rVert_{2\rightarrow2}}\newcommand{\normpq}[1]{\lVert #1 \rVert_{\C, p\rightarrow q}}\newcommand{\normRpq}[1]{\lVert #1 \rVert_{p\rightarrow q}}\newcommand{\normIO}[1]{\lVert #1 \rVert_{\C, \infty\rightarrow1}}\newcommand{\normRIO}[1]{\lVert #1 \rVert_{\infty\rightarrow1}}\newcommand{\normOI}[1]{\lVert #1 \rVert_{\C, 1\rightarrow\infty}}\newcommand{\normOO}[1]{\lVert #1 \rVert_{\C, 1\rightarrow 1}}\newcommand{\normII}[1]{\lVert #1 \rVert_{\C, \infty\rightarrow\infty}}\newcommand\restr[2]{{\left.\kern-\nulldelimiterspace #1 \vphantom{\big|} \right|_{#2} }}

\newcommand{\treeDist}{\delta^\mathcal{T}}\newcommand{\someClass}{\mathcal{F}}\newcommand{\someDist}{\delta^\someClass}

\newcommand{\C}{\mathbb{C}}\newcommand{\Lp}{L_p[0,1]}\newcommand{\Lq}{L_q[0,1]}\newcommand{\LO}{L_1[0,1]}\newcommand{\LT}{L_2[0,1]}\newcommand{\LI}{L_\infty[0,1]}\newcommand{\allOne}{\boldsymbol{1}}
\newcommand{\cFun}{\mathbbm{1}}
\newcommand{\supFG}{\sup_{f, g \colon [0,1] \to [0,1]}}\newcommand{\neiDistEu}{\delta^\mathcal{T}_{\square, 2}}\newcommand{\neiDistOp}{\delta^\mathcal{T}_{\C, 2 \rightarrow 2}}\newcommand{\neiDistIO}{\delta^\mathcal{T}_{\C, \infty \rightarrow 1}}\newcommand{\neiDistIOR}{\delta^\mathcal{T}_{\infty \rightarrow 1}}\newcommand{\neiDistpq}{\delta^\mathcal{T}_{\C, p \rightarrow q}}\newcommand{\neiDistRpq}{\delta^\mathcal{T}_{p \rightarrow q}}\renewcommand{\Re}{\operatorname{Re}}\renewcommand{\Im}{\operatorname{Im}}\newcommand{\iu}{{i\mkern1mu}}\newcommand{\neiDistSpec}{\delta^{\mathcal{T}}_{2}}\newcommand{\neiDistROp}{\delta^\mathcal{T}_{2 \rightarrow 2}}

\newcommand{\pathDistNoNorm}{\delta^\mathcal{P}}\newcommand{\pathDist}{\delta^\mathcal{P}_{\square}}\newcommand{\signedMarkovs}{\mathcal{S}}\newcommand{\pathDistROp}{\delta_{2 \rightarrow 2}^\mathcal{P}}\newcommand{\pathDistSpec}{\delta_{2}^\mathcal{P}}

\newcommand{\ColG}{G/C_\infty^G}\newcommand{\Col}[1]{{#1}/C_\infty^{#1}}\newcommand{\colDist}{\dist^{\mathcal{C}}}

\begin{document}

\maketitle

\begin{abstract}
    We introduce the tree distance,
    a new distance measure on graphs.
    The tree distance can be
    computed in polynomial time with standard methods
    from convex optimization.
    It is based on the notion of fractional isomorphism,
    a characterization based on a natural system of linear equations
    whose integer solutions correspond to graph isomorphism.
    By results of Tinhofer ($1986$, $1991$) and Dvořák ($2010$),
    two graphs $G$ and $H$ are fractionally isomorphic
    if and only if, for every tree $T$,
    the number of homomorphisms from $T$ to $G$
    equals the corresponding number from $T$ to $H$,
    which means that the tree distance of $G$ and $H$ is zero.
    Our main result is that this
    correspondence between the equivalence relations
    \enquote{fractional isomorphism}
    and \enquote{equal tree homomorphism densities}
    can be extended to a correspondence
    between the associated distance measures.
    Our result is inspired by a similar result due to
    Lovász and Szegedy ($2006$) and
Borgs, Chayes, Lovász, Sós, and Vesztergombi ($2008$)
    that connects the cut distance of graphs to their homomorphism
    densities (over all graphs), which is a
    fundamental theorem in the theory of graph limits.
    We also introduce the path distance of graphs and take
    the corresponding result of Dell, Grohe, and Rattan ($2018$)
    for exact path homomorphism counts to an approximate level.
    Our results answer an open question of Grohe ($2020$) and help to build
    a theoretical understanding of vector embeddings of graphs.

    The distance measures we define turn out be closely related
    to the cut distance.
    We establish our main results by generalizing our definitions to graphons,
    which are limit objects of sequences of graphs,
    as this allows us to apply techniques from functional analysis.
    We prove the fairly general statement that,
    for every \enquote{reasonably}
    defined graphon pseudometric,
    an exact correspondence to homomorphism densities
    can be turned into an approximate one.
    We also provide an example of a distance measure that violates
    this reasonableness condition.
    This incidentally answers an open question of
    Grebík and Rocha ($2021$).
\end{abstract}

\section{Introduction}
\label{sec:introduction}
Vector representations of graphs allow to apply standard machine learning
techniques to graphs, and a variety of methods to generate such embeddings
has been studied in the machine learning literature.
However, from a theoretical point of view, these embeddings have not received
much attention and are not well understood.
Some machine learning methods only implicitly operate on such
vector representations as they only access the inner products
of these vectors.
These methods are known as \textit{kernel methods} and most
graph kernels are based on counting
occurrences of certain substructures, e.g., walks or trees.
See \cite{Grohe2020} for a recent survey  on vector embeddings.

Many kinds of substructure counts in a graph such as graph motifs
are actually just \textit{homomorphism}
counts \enquote{in disguise}, and hence,
homomorphisms provide a formal and flexible framework for counting
all kinds of substructures in graphs \cite{Curticapean2017};
a homomorphism from a graph $F$ to a graph $G$ is a mapping
from the vertices of $F$ to the vertices of $G$ such that every edge
of $F$ is mapped to an edge of $G$.
A theorem of Lovász from 1967 \cite{Lovasz1967}, which states
that two graphs $G$ and $H$ are isomorphic if and only if,
for every graph $F$, the number $\hom(F, G)$ of homomorphisms from $F$ to $G$
equals the corresponding number $\hom(F, H)$ from $F$ to $H$,
led to the development of the theory of graph limits \cite{BorgsEtAl2006GraphLimits, Lovasz2012}, where one considers convergent sequences of graphs and their limit objects,
graphons.
In terms of the \textit{homomorphism vector}
$\Hom(G) \coloneqq (\hom(F, G))_{F \text{ graph}}$
of a graph $G$, the result of Lovász states that graphs are mapped to the same
vector if and only if they are isomorphic.

Computing an entry of $\Hom(G)$
is $\#P$-complete and recent results have
mostly focused on restrictions
$\Hom_\mathcal{F}(G) \coloneqq (\hom(F, G))_{F \in \mathcal{F}}$
of these vectors to classes $\mathcal{F}$
for which computing these entries is actually tractable.
Under a natural assumption from parameterized complexity theory,
this is the case for precisely the classes
$\mathcal{F}$ of bounded tree width \cite{DalmauJonson2004}.
This has led to various surprisingly clean results, e.g.,
for trees and, more general, graphs of bounded treewidth \cite{Dvorak2010},
cycles and paths \cite{Dell2018}, planar graphs \cite{MancinskaEtAl2019},
and, most recently, graphs of bounded tree-depth \cite{Grohe2020TreeDepth}.
These results only show what it means
for graphs to be mapped to the same homomorphism vector;
they do not say anything about the similarity of two graphs
if the homomorphism vectors are not exactly the same but \textit{close}.
Grohe formulated the vague hypothesis that, for suitable classes $\mathcal{F}$,
the embedding $\Hom_\mathcal{F}$ combined with a suitable inner product
on the latent space
induces a natural similarity measure on graphs \cite{Grohe2020}.
This is supported by initial experiments, which show that homomorphism
vectors in combination with support vector machines perform well on
standard graph classification.
Our results further support this hypothesis from a theoretical standpoint
by showing that tree homomorphism counts provide a robust similarity measure.

For the class $\trees$ of trees and two graphs $G$ and $H$, we have
$\Hom_\trees(G) = \Hom_\trees(H)$
if and only if $G$ and $H$ are
not distinguished by \textit{color refinement} (also known as the
\textit{$1$-dimensional Weisfeiler-Leman algorithm}) \cite{Dvorak2010},
a popular heuristic for graph isomorphism.
Another characterization of this equivalence
due to Tinhofer
is that of \textit{fractional isomorphism} \cite{Tinhofer1986}, \cite{Tinhofer1991}.
Let $A \in \R^{V(G) \times V(G)}$
and $B \in \R^{V(H) \times V(H)}$ be the adjacency matrices of $G$ and $H$, respectively,
and consider the following system $\Fiso$ of linear equations:
\begin{equation*}
    \Fiso:
    \begin{cases}
        AX = XB\\
        X \allOne_{V(H)} = \allOne_{V(G)}\\
        \allOne_{V(G)}^TX = \allOne_{V(H)}^T
    \end{cases}
\end{equation*}
Here, $X$ denotes a ($V(G) \times V(H)$)-matrix of variables, and $\allOne_{U}$
denotes the all-$1$ vector over the index set $U$.
The non-negative integer solutions to $\Fiso$
are precisely the permutation matrices that describe isomorphisms between
$G$ and $H$. The non-negative real solutions are called \textit{fractional
isomorphisms} of $G$ and $H$.
Tinhofer proved that $G$ and $H$ are not distinguished by the color refinement
algorithm if and only if there is a fractional isomorphism of $G$ and $H$.
Grohe proposed to define a similarity measure based on this characterization \cite{Grohe2020}:
For a matrix norm $\lVert \cdot \rVert$ that is invariant under permutations
of the rows and columns,
consider
\begin{equation*}
    \groheDist(G, H) \coloneqq \min_{\substack{X \in [0,1]^{V(G) \times V(H)},\\ X \text{ doubly stochastic}}}\lVert AX - XB \rVert.
\end{equation*}
Most graph distance measures based on matrix norms are highly intractable
as the problem of their computation is
related to notoriously hard
maximum quadratic assignment problem \cite{Nagarajan2009}.
This hardness, which stems from the minimization over the set
of all permutation matrices,
motivated Grohe to propose $\groheDist$,
where the set of all permutation matrices is relaxed
to the the convex set of doubly stochastic matrices,
yielding a convex optimization problem.
With the results of Tinhofer and Dvořák, we know that
the graphs of distance zero w.r.t.\ $\groheDist$ are precisely those
that cannot be distinguished by tree homomorphism counts.

So far, the only known connection between a graph distance measure
based on matrix norms and graph homomorphisms
is between the \textit{cut distance} and normalized homomorphism numbers
(called \textit{homomorphism densities}) \cite{BorgsEtAl2006GraphLimits}.
Grohe asks whether a similar correspondence between
$\groheDist$
and restricted homomorphism vectors can be established,
and we give a positive answer to this question.
We introduce the \textit{tree distance} $\treeDist$ of graphs, which is a normalized variant
of $\groheDist$ and show the following theorem,
which is stated here only informally.
We also introduce the \textit{path distance} $\pathDistNoNorm$ of graphs
and prove the analogous theorem to \Cref{th:informal}
for $\pathDistNoNorm$ and normalized path homomorphism counts.
\begin{theorem}[Informal \Cref{th:countingLemmaTreesGraphs} and \Cref{th:inverseCountingLemmaTreesGraphs}]
    \label{th:informal}
    Two graphs $G$ and $H$ are similar w.r.t.\ $\treeDist$
    if and only if
    the homomorphism densities $t(T, G)$ and $t(T, H)$ are close for
    trees $T$.
\end{theorem}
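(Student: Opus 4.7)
The plan is to prove the two directions separately and to work at the level of graphons throughout, so that compactness of the graphon space under the cut distance $\dist$ is available as a tool. I would identify a graph $G$ with its associated graphon $W_G$, check that both $t(T,\cdot)$ and $\treeDist$ extend to graphons consistently with the finite definitions, and then reduce the graph-level theorem to its graphon counterpart.

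For the forward direction (a counting lemma), I would establish a bound of the shape $|t(T,G)-t(T,H)|\le c\,|E(T)|\,\treeDist(G,H)$ for every tree $T$, with an absolute constant $c$. Picking a near-optimal doubly stochastic witness $X$ for $\treeDist(G,H)$ (or, at the graphon level, a measure-preserving Markov kernel), I would root $T$ at an arbitrary vertex and expand $t(T,G)-t(T,H)$ by processing the tree from the leaves inward. At each edge one uses the approximate intertwining $AX\approx XB$ to swap an $A$-factor on the $G$-side for a $B$-factor on the $H$-side, with error controlled by the norm of $AX-XB$ appearing in the definition of $\treeDist$. The tree structure makes these errors add up cleanly over $|E(T)|$ edges. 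What makes this work specifically for trees, rather than arbitrary graphs, is that rooting a tree reduces the homomorphism computation to a nested product of one-variable operators, so a single norm bound suffices per edge.

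For the backward direction (an inverse counting lemma), I would use a compactness argument of the kind acknowledged to Zhao. Suppose the statement fails, so that there exist $\varepsilon>0$ and sequences $(G_n),(H_n)$ with $|t(T,G_n)-t(T,H_n)|\to 0$ for every tree $T$, yet $\treeDist(G_n,H_n)\ge\varepsilon$. By compactness of the graphon space under $\dist$, pass to subsequences with $W_{G_n}\to U$ and $W_{H_n}\to V$ in cut distance. Homomorphism density is continuous under $\dist$ for every fixed finite $T$, so $t(T,U)=t(T,V)$ for all trees $T$. The graphon analog of the Tinhofer--Dvořák characterization then gives $\treeDist(U,V)=0$. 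If $\treeDist$ is lower semi-continuous with respect to $\dist$, one concludes $\liminf_n \treeDist(G_n,H_n)=0$, contradicting the assumption.

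The main obstacle lies in setting up the graphon-level theory of fractional isomorphism that the backward direction depends on: extending doubly stochastic matrices to measure-preserving Markov kernels on $[0,1]$, writing down the graphon analog of the system $AX=XB$ together with the right stochasticity constraints, proving the graphon version of the Tinhofer--Dvořák equivalence (equality of all tree homomorphism densities iff fractional isomorphism in this infinite-dimensional sense), and verifying lower semi-continuity of $\treeDist$ under the cut distance. The counting lemma itself is a tidy linear-algebraic bookkeeping; it is the compactness-based inverse direction and its functional-analytic scaffolding that carry the real weight of the argument, and this is also where the proof genuinely needs the extension from graphs to graphons.
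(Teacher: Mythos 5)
Your high-level plan---work at the level of graphons, use compactness under $\dist$, and reduce both directions to properties of a graphon-level tree distance together with the Grebík--Rocha characterization of fractional isomorphism for graphons---is exactly the architecture of the paper. But both of your concrete directions have problems.

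For the counting lemma (forward direction), the quantitative edge-by-edge telescoping you propose does \emph{not} go through for trees. Rooting $T$ and processing it from the leaves inward reduces the homomorphism density to a nested expression involving pointwise products of functions at every branching vertex: if $v$ has children $c_1,\dots,c_k$ then the function attached to $v$ is $\prod_i T_U(f_{c_i})$. A Markov operator $S$ satisfies $S(\allOne)=\allOne$ and contracts $L_p$ norms, but it does \emph{not} respect pointwise products, so the approximate intertwining $T_U\circ S\approx S\circ T_W$ cannot simply be threaded through a branching node; there is no single norm bound ``per edge.'' This is precisely why the paper proves a quantitative counting lemma only for \emph{paths} (where the nesting really is a composition of operators, with no pointwise products) and explicitly poses a quantitative counting lemma for trees as an open question in the conclusion. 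For the tree counting lemma the paper falls back on the same compactness argument it uses for the inverse direction (Theorem~4.1 with hypotheses ``compatible with $\dist$'' and ``distance zero implies equal $t(T,\cdot)$''). Your statement that ``the tree structure makes these errors add up cleanly over $|E(T)|$ edges'' is where the proposal breaks.

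For the inverse counting lemma (backward direction), your compactness skeleton is right, but the invoked continuity property is in the wrong direction. Lower semi-continuity of $\treeDist$ along $\dist$-convergent sequences would give $\liminf_n \treeDist(G_n,H_n)\ge\treeDist(U,V)=0$, which is vacuous and does not contradict $\treeDist(G_n,H_n)\ge\epsilon$. What you actually need is that $\dist$-convergence forces $\treeDist$-convergence (the paper's notion of ``compatible with $\dist$,'' supplied by $\treeDist\le\dist$, Lemma~5.7), from which the triangle inequality gives $\treeDist(G_n,H_n)\le\treeDist(G_n,U)+\treeDist(U,V)+\treeDist(V,H_n)\to 0$, the contradiction. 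You should also record that the chain from ``$\treeDist(U,V)=0$'' back to equality of tree densities (and conversely) needs an argument that the infimum over Markov operators is attained; in the paper this requires switching to a Hilbert-space variant of the norm and using compactness of the set of Markov operators in the weak operator topology (Lemmas~5.4 and~5.6 plus Lemma~A.7 in the appendix), a nontrivial step your sketch elides.
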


In the theory of graph limits, \textit{graphons} serve as limit objects
for sequences of graphs.
By defining distance measures on the more general graphons,
we are able to use techniques from functional analysis to show that any
\enquote{reasonably} defined pseudometric on graphons satisfying an exact
correspondence to homomorphism densities also has to satisfy
an approximate one.
As an application, we get that both the tree and the path distance
satisfy this correspondence to tree and path homomorphism densities, respectively.
For the case of trees, we rely on
a generalization of the notion of fractional isomorphism
to graphons by Grebík and Rocha \cite{GrebikRocha2019}.
For the case of paths,
we prove this generalization of the result of Dell, Grohe, and Rattan \cite{Dell2018}
by ourselves.

This paper is organized as follows.
In the preliminaries, \Cref{sec:preliminaries},
we collect the definitions of graphs,
the space $\LT$, graphons,
and the cut distance.
In \Cref{sec:pseudometricsForGraphs},
we define the tree distance and the path distance for graphs
and formally state \Cref{th:informal} and its path counterpart.
In \Cref{sec:pseudometrics}, we state and prove
the theorems that allow us to show these correspondences
for graphon pseudometrics.
\Cref{sec:trees} provides the first application of these tools
for the tree distance:
we first state the needed result of
fractional isomorphism of graphons due to Grebík and Rocha
and then use this to define the tree distance of graphons.
These definitions and results specialize to the ones
presented in \Cref{sec:pseudometricsForGraphs} for graphs.
The treatment of the path distance for graphons
is similar to the one of the tree distance,
except for the fact that we prove
a characterization of graphons with the same path homomorphism densities
ourselves,
and can be found in \Cref{sec:paths}.
In \Cref{subsec:cutDistanceInvariant},
we define another distance measure on graphs based on the invariant
computed by the color refinement algorithm
and show that it only satisfies one direction of the approximate
correspondence to tree homomorphism densities.
Our counterexample incidentally
answers an open question of Grebík and Rocha \cite{GrebikRocha2019}.
\Cref{sec:conclusion} poses some interesting open questions
that come up during the study of these distance measures.
All missing proofs are collected in \Cref{sec:appendix}
together with a compilation of results
on operators, graphons, and Markov operators
used in these proofs.

\section{Preliminaries}
\label{sec:preliminaries}

\subsection{Graphs}
\label{subsec:graphs}
By the term graph, we refer to a
simple, undirected, and finite graph.
For a graph $G$, we denote its vertex set by $V(G)$ and its edge set by $E(G)$,
and we let $\vs(G) \coloneqq \lvert V(G) \rvert$ and $\es(G) \coloneqq \lvert E(G) \rvert$.
We usually view the adjacency matrix $A$ of a graph $G$ as a matrix
$A \in \R^{V(G) \times V(G)}$, i.e., it is indexed by the vertices of $G$.
Sometimes, we assume without loss of generality that the vertex set of a graph is
$[n] \coloneqq \{1, \dots, n\}$, where $n \in \N$ is a natural number.
A homomorphism from a graph $F$ to a graph $G$ is a mapping
$\varphi \colon V(F) \to V(G)$ such that
$\varphi(u) \varphi(v) \in E(G)$ for every $uv \in E(F)$.
We denote the number of homomorphisms from $F$ to $G$ by
$\hom(F, G)$.
The homomorphism density from $F$ to $G$ is given by
$t(F, G) \coloneqq {\hom(F, G)}/{\vs(G)^{\vs(F)}}$.

A \textit{weighted graph} $G = (V, a, B)$ consists of a vertex set $V$,
a positive real vector $a = (\alpha_v)_{v \in V} \in \R^V$ of vertex weights
and a real symmetric matrix $B  = (\beta_{uv})\in [0,1]^{V \times V}$ of edge weights;
that is, we restrict ourselves to edge weights from $[0,1]$.
We write $\vs(G) = |V|$, $V(G) = V$, $\alpha_v(G) = \alpha_v$, $\alpha_G = \sum_{v \in V(G)} \alpha_v(G)$ and $\beta_{uv}(G) = \beta_{uv}$.
A weighted graph is called \textit{normalized} if $\alpha_G = 1$.
For a simple graph $F$ and a weighted graph $G$, we define the \textit{homomorphism number}
\begin{equation*}
    \hom(F,G) = \sum_{\varphi \colon V(F) \to V(G)} \prod_{v \in V(F)} \alpha_{\varphi(v)}(G) \prod_{uv \in E(F)} \beta_{\varphi(u)\varphi(v)}(G)
\end{equation*}
and the \textit{homomorphism density}
$t(F,G) = {\hom(F,G)}/{\alpha_G^{\vs(F)}}$.
When viewing a graph as a weighted graph in the obvious way,
these notions coincide with the ones for graphs.

\subsection{The Space \texorpdfstring{$L_2[0,1]$}{L2[0,1]} and Graphons}
\label{subsec:graphons}
A detailed introduction to functional analysis can be found in \cite{Dudley2002};
here, we only repeat some notions we use throughout the main body of the paper.
Let $\LT$ denote the space of $\R$-valued $2$-integrable functions on $[0,1]$
(modulo equality almost anywhere).
We could consider consider an arbitrary standard Borel space instead,
but for the sake of convenience, we stick to $[0,1]$ with the Lebesgue
measure just as \cite{Lovasz2012}.
The space $\LT$ is a Hilbert space
with the inner product defined by
$\langle f, g \rangle \coloneqq \int_{[0,1]} f(x) g(x) \dx$
for functions $f, g \in \LT$.
Let $T \colon \LT \to \LT$ be a bounded linear operator, or operator for short.
We write $\normRTT{T}$ for its operator norm, i.e.,
$\normRTT{T} = \sup_{\lVert g \rVert_2 \le 1} \lVert T g \rVert_2$.
The \textit{Hilbert adjoint} of $T$
is the unique operator $T^* \colon \LT \to \LT$ such that
$\langle T f, g \rangle = \langle f, T^* g \rangle$ for all $f, g \in \LT$,
and $T$ is called self-adjoint if $T^* = T$.

Let $\kernels$ denote the set of all bounded symmetric measurable functions
$W \colon [0,1]^2 \to \R$, called \textit{kernels}.
Let $\graphons \subseteq \kernels$ denote all such $W$ that satisfy
$0 \le W \le 1$;
such a $W$ is called a \textit{graphon}.
Every kernel $W \in \kernels$ defines a self-adjoint
operator $T_W \colon \LT \to \LT$ by
setting
$(T_W f)(x) = \int_{[0,1]} W(x,y) f(y) \dy$
for every $x \in [0,1]$,
which then is a Hilbert-Schmidt operator,
and in particular, compact \cite{Lovasz2012}.

A kernel $W \in \kernels$ is called a \textit{step function} if
there is a partition $S_1 \cup \dots \cup S_k$ of $[0,1]$ such that
$W$ is constant on $S_i \times S_j$ for all $i,j \in \numTo{k}$.
For a weighted graph $H$ on $[n]$,
one can define a step function $W_H \in \kernels$
by splitting $[0,1]$ into $n$ intervals $I_1, \dots, I_n$,
where $I_i$ has length $\lambda(I_i) = {\alpha_i(H)}/{\alpha(H)}$
for every $i \in [n]$,
and letting $W_H(x,y) \coloneqq \beta_{ij}(H)$ for all $x \in I_i, y \in I_j$
and $i,j \in [n]$.
Of course, $W_H$ depends on the labeling of the vertices of $H$.
Note that $W_H$ is a graphon,
and in particular, $W_G$ is a graphon for every graph $G$.

\subsection{The Cut Distance}
\label{subsec:cutdistance}

See \cite{Lovasz2012} for a thorough introduction to the cut distance.
The usual definition of the cut distance involves the \textit{blow-up} $G(k)$
of a graph $G$ by $k \ge 0$, where every vertex of $G$ is replaced
by $k$ identical copies, to get graphs on the same number of vertices.
Going this route is rather cumbersome,
and we directly define the cut distance for weighted graphs
via \textit{fractional overlays};
this definition also applies to graphs in the straightforward way.
A \textit{fractional overlay} of weighted graphs $G$ and $H$
is a matrix $X \in \mathbb{R}^{V(G) \times V(H)}$ such that
    $X_{uv} \ge 0$ for all $u \in V(G)$, $v \in V(H)$,
    $\sum_{v \in V(H)} X_{u v} = {\alpha_u(G)}/{\alpha_G}$ for every $u \in V(G)$, and
    $\sum_{u \in V(G)} X_{u v} = {\alpha_v(H)}/{\alpha_H}$ for every $v \in V(H)$.
Let $\mathcal{X}(G, H)$ denote the set of all fractional overlays of $G$ and $H$.
Note that, for graphs $G$ and $H$,
the second and third condition just say that the row and column
sums of $X$ are ${1}/{\vs(G)}$ and ${1}/{\vs(H)}$, respectively.
For weighted graphs $G$ and $H$ and a fractional overlay $X \in \mathcal{X}(G, H)$, let
    \begin{equation*}
        \fracdist(G, H, X) \coloneqq \max_{Q, R \subseteq V(G) \times V(H)} \Big\lvert \sum_{\substack{iu \in Q,\\ jv \in R}} X_{iu} X_{jv} (\beta_{ij}(G) - \beta_{uv}(H)) \Big\rvert.
    \end{equation*}
Then, define the \textit{cut distance}
$\dist(G, H) \coloneqq \min_{X \in \mathcal{X}(G, H)} \fracdist(G, H, X)$.

Defining the cut distance of graphons is actually much simpler.
Define the \textit{cut norm} on the linear space $\kernels$ of kernels by
$\lVert W \rVert_\square \coloneqq \sup_{S, T \subseteq [0,1]}\left\lvert \int_{S \times T} W(x,y) \,dx \,dy \right\rvert$
for $W \in \kernels$;
here, as in the whole of the paper,
we tacitly assume sets (and functions) we take an infimum or supremum over
to be measurable.
Let $\measPres$ denote the group of all invertible measure-preserving maps
$\varphi \colon [0,1] \to [0,1]$.
For a kernel $W \in \kernels$ and a $\varphi \in \measPres$,
let $W^\varphi$ be the kernel defined by
$W^\varphi(x,y) \coloneqq W(\varphi(x), \varphi(y))$.
For kernels $U, W \in \kernels$, define
their \textit{cut distance} by setting
$\dist(U, W) \coloneqq \inf_{\varphi \in \measPres} \lVert U - W^\varphi \rVert_\square$.
This coincides with the previous definition
when viewing weighted graphs as graphons \cite[Lemma $8.9$]{Lovasz2012}.
We can also express $\dist(U, W)$ via
the kernel operator as
$\dist(U, W) = \inf_{\varphi \in \measPres} \sup_{f,g \colon [0,1] \to [0,1]} \big\lvert \langle f, T_{U - W^\varphi} g \rangle \big\rvert$ \cite[Lemma $8.10$]{Lovasz2012}.
The definition of the cut distance is quite robust.
For example, allowing $f$ and $g$ in the previous definition to be complex-valued
or choosing a different operator norm does not make a difference in most cases \cite[Appendix E]{Janson2013}.

For a graph $F$ and a kernel $W \in \kernels$,
define the \textit{homomorphism density}
\begin{equation*}
    t(F, W) \coloneqq \int_{[0,1]^{V(F)}} \prod_{ij \in E(F)} W(x_i, x_j)\prod_{i \in V(F)} dx_i,
\end{equation*}
which coincides with the previous definition when viewing
weighted graphs as graphons \cite[Equation ($7.2$)]{Lovasz2012}.
\Cref{th:countingLemmaGraphons} and \Cref{th:inverseCountingLemmaGraphons}
state the connection between the cut distance and homomorphism densities:
Informally, the \Cref{th:countingLemmaGraphons} states that graphons that are close in the
cut distance have similar homomorphism densities, while \Cref{th:inverseCountingLemmaGraphons}
states that graphs that have similar homomorphism densities
are close in the cut distance.
We refer to such statements as a \textit{counting lemma} and an \textit{inverse counting lemma},
respectively.
\begin{lemma}[Counting Lemma \cite{LovaszSzegedy2006}]
    \label{th:countingLemmaGraphons}
    Let $F$ be a simple graph, and let $U,W \in \graphons$ be graphons.
    Then,
$\lvert t(F, U) - t(F, W) \rvert \le \es(F) \cdot \dist(U, W)$.
\end{lemma}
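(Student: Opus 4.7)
The plan is a standard telescoping argument over the edges of $F$, combined with a layer-cake decomposition to exploit the cut norm. As a preliminary reduction, note that $\dens(F, W) = \dens(F, W^\varphi)$ for every $\varphi \in \measPres$, so it suffices to establish the sharper inequality $\lvert \dens(F, U) - \dens(F, W) \rvert \le \es(F) \cdot \normC{U - W}$; the claim then follows by taking the infimum over $\varphi$ applied to $W$.

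Enumerate the edges as $E(F) = \{e_1, \dots, e_m\}$ with $m = \es(F)$ and telescope
$$\prod_{i=1}^{m} U(x_{e_i}) - \prod_{i=1}^{m} W(x_{e_i}) = \sum_{k=1}^{m} \Bigl(\prod_{i < k} U(x_{e_i})\Bigr)\bigl(U(x_{e_k}) - W(x_{e_k})\bigr)\Bigl(\prod_{i > k} W(x_{e_i})\Bigr).$$
Integrating over $[0,1]^{V(F)}$ and applying the triangle inequality, it suffices to bound each resulting summand $\Delta_k$ by $\normC{U - W}$ in absolute value.

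Fix $k$ and write $e_k = ab$. Because $F$ is simple, the remaining edges split cleanly into three groups: those incident only to $a$, those incident only to $b$, and those incident to neither; crucially, no other edge connects $a$ to $b$. Grouping the factors accordingly, for every fixed value of $y = (x_v)_{v \notin \{a, b\}}$ the portion of the integrand depending on the remaining variables factors as $f_y(x_a)\, g_y(x_b)\, c(y)$, where $f_y, g_y \colon [0,1] \to [0,1]$ and $c(y) \in [0,1]$ because $U, W$ are graphons. Hence
$$\Delta_k = \int c(y) \Bigl(\int \bigl(U(x_a, x_b)-W(x_a, x_b)\bigr) f_y(x_a)\, g_y(x_b) \intd{x_a} \intd{x_b}\Bigr) \intd{y}.$$

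Finally, a layer-cake expansion $f_y(x) = \int_0^1 \one{f_y(x) \ge s} \intd{s}$ (and similarly for $g_y$) rewrites the inner integral as an average, over $s, t \in [0,1]$, of quantities of the form $\int_{S_{y,s} \times T_{y,t}}(U - W)$ for measurable sets $S_{y,s}, T_{y,t} \subseteq [0,1]$; each such quantity is bounded in absolute value by $\normC{U - W}$ by definition of the cut norm. Together with $\int c(y) \intd{y} \le 1$, this yields $\lvert \Delta_k \rvert \le \normC{U - W}$, completing the proof. The main technical point is the factorization step, which genuinely relies on $F$ being simple; once that is in place, the layer-cake trick neatly upgrades the cut-norm bound from rectangular indicator sets to products of $[0,1]$-valued functions, yielding exactly the tight constant $\es(F)$.
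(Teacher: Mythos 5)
The paper itself gives no proof of this lemma—it is stated as a cited result from Lovász and Szegedy—so there is nothing internal to compare against. Your proof, however, is correct and is precisely the standard telescoping argument from the reference (see also Lemma 10.23 in Lovász's book). All the steps check out: the reduction to the cut-norm inequality via $t(F,W) = t(F,W^\varphi)$ and then infimizing over $\varphi$, the telescoping decomposition over the $m$ edges, the factorization of each summand into $f_y(x_a)\,g_y(x_b)\,c(y)$ against $U(x_a,x_b)-W(x_a,x_b)$ (which indeed uses that $F$ has no parallel edge to $e_k$), and the layer-cake reduction of the bilinear form against $[0,1]$-valued functions to the cut norm over measurable rectangles. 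You could shorten the last step by citing the equivalence between the two forms of the cut norm, which appears as \Cref{le:cutNormAttained} in the paper's appendix, but reproving it inline is equally valid.
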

\begin{lemma}[Inverse Counting Lemma \cite{BorgsEtAl2008}, \cite{Lovasz2012}]
    \label{th:inverseCountingLemmaGraphons}
    Let $k > 0$, let $U,W \in \graphons$ be graphons,
    and assume that, for every graph $F$ on $k$ vertices,
    we have $\lvert t(F, U) - t(F, W) \rvert \le 2^{-k^2}$.
    Then,
$\dist(U,W) \le {50}/{\sqrt{\log{k}}}$.
\end{lemma}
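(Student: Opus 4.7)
The plan is to follow the classical sampling-based argument of Borgs, Chayes, Lovász, Sós, and Vesztergombi. The guiding idea is that if all $k$-vertex homomorphism densities of $U$ and $W$ are close, then the distributions of random $k$-vertex samples drawn from $U$ and $W$ are close in total variation; combining this with the \emph{first sampling lemma}, which asserts that such a sample is close to its underlying graphon in cut distance with high probability, yields the stated bound via the triangle inequality.

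First I would pass from ordinary to \emph{induced} subgraph densities. For a $k$-vertex graph $F$ and graphon $W$, let $t_{\mathrm{ind}}(F,W)$ denote the probability that sampling $k$ i.i.d.\ uniform points $x_1,\dots,x_k \in [0,1]$ and drawing each edge $ij$ independently with probability $W(x_i,x_j)$ yields exactly the labelled graph $F$ on $[k]$. The identity $t(F,W) = \sum_{F' \supseteq F} t_{\mathrm{ind}}(F',W)$, where $F'$ ranges over the at most $2^{\binom{k}{2}}$ spanning supergraphs of $F$, inverts by Möbius inversion to a signed sum of the same size. The hypothesis therefore gives $\lvert t_{\mathrm{ind}}(F,U) - t_{\mathrm{ind}}(F,W) \rvert \le 2^{\binom{k}{2}} \cdot 2^{-k^2}$ for each labelled $k$-vertex $F$, and summing over the $2^{\binom{k}{2}}$ such graphs bounds the total variation distance between the distributions of $\mathbb{G}(k,U)$ and $\mathbb{G}(k,W)$ by $\tfrac12 \cdot 2^{-k}$, which is well below $1$.

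Next I would invoke the first sampling lemma: for the random labelled graph $\mathbb{G}(k,W)$ defined above, one has $\mathbb{E}[\dist(W, W_{\mathbb{G}(k,W)})] \le C/\sqrt{\log k}$ for an absolute constant $C$, and analogously for $U$. This is the technically demanding ingredient and is proved using the weak (Frieze--Kannan) regularity lemma to decompose $W$ into a bounded number of cut-norm-controlled pieces, together with standard concentration estimates comparing $W$ and its empirical restriction on the sampled points. By Markov's inequality, with probability at least $3/4$ the sample $G$ satisfies $\dist(W, W_G) \le 4C/\sqrt{\log k}$, and the same for $U$.

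Finally I would couple the two sampling experiments. Because the total variation distance between $\mathbb{G}(k,U)$ and $\mathbb{G}(k,W)$ is exponentially small in $k$, there exists a joint distribution under which both coordinates coincide with a common labelled graph $G$ except on an event of negligible probability. Intersecting this agreement event with the two high-probability sampling estimates leaves a positive-probability event on which a single $G$ satisfies $\dist(U,W_G), \dist(W,W_G) \le 4C/\sqrt{\log k}$ simultaneously; the triangle inequality then yields $\dist(U,W) \le 8C/\sqrt{\log k}$, and tracking the constants coming out of the weak regularity lemma recovers the explicit $50/\sqrt{\log k}$. The main obstacle is the first sampling lemma itself: upgrading the $L^1$-type bounds produced by a weak regularity decomposition to a genuine cut-norm estimate on $W - W_{\mathbb{G}(k,W)}$ requires a careful union bound over the exponentially many pairs of subsets witnessing the cut norm.
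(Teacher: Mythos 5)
The paper does not give its own proof of this lemma: it is quoted directly from the cited references (Borgs, Chayes, Lov\'asz, S\'os, Vesztergombi and Lov\'asz's book), so there is no in-paper argument to compare against. Your sketch follows the standard proof in those sources and the main moves are all correct. Passing from homomorphism densities to induced densities by inclusion--exclusion, bounding the total variation distance between $\mathbb{G}(k,U)$ and $\mathbb{G}(k,W)$ (your arithmetic $2 \cdot 2^{\binom{k}{2}} \cdot 2^{\binom{k}{2}} \cdot 2^{-k^2} = 2^{-k+1}$ is right up to a harmless constant), invoking the first sampling lemma $\mathbb{E}\!\left[\dist\!\left(W, W_{\mathbb{G}(k,W)}\right)\right] \le C/\sqrt{\log k}$, and then coupling plus Markov plus the triangle inequality is exactly the structure of the proof in Lov\'asz's book (his Lemma 10.32 via Lemma 10.16). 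You are also right to flag the first sampling lemma as the technically heavy step; the rest is bookkeeping. One small remark: you do not really need a full coupling argument once the total variation distance is below, say, $1/2$ --- a union bound showing that with positive probability a single sampled graph $G$ is simultaneously in the high-probability set for $U$ and assigned positive probability under both $\mathbb{G}(k,U)$ and $\mathbb{G}(k,W)$ suffices --- but the coupling phrasing is also standard and correct. Since the paper itself offers no proof, there is nothing here to diverge from; your account is a faithful reconstruction of the cited argument.
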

In particular, graphons $U$ and $W$ have cut distance zero
if and only if, for every graph $F$, we have $t(F, U) = t(F, W)$.
Call a sequence $(W_n)_{n \in \N}$ of graphons \textit{convergent} if,
for every graph $F$, the sequence $(t(F, W_n))_{n \in \N}$ is Cauchy.
The two theorems above yield that $(W_n)_{n \in \N}$ is convergent
if and only if $(W_n)_{n \in \N}$ is Cauchy in $\dist$.
Let $\graphonsWI$ be obtained from $\graphons$ by identifying graphons
with cut distance zero;
such graphons are called \textit{weakly isomorphic}.
One of the main results from graph limit theory is the compactness
of the space $(\graphonsWI, \dist)$.
\begin{theorem}[\cite{LovaszSzegedy2007}]
    \label{th:compact}
    The space $(\graphonsWI,\dist)$ is compact.
\end{theorem}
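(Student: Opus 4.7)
The plan is to prove sequential compactness of $(\graphonsWI, \dist)$, which in a metric space is equivalent to compactness. So given any sequence $(W_n)_{n \in \N}$ of graphons, the task is to extract a subsequence that converges in $\dist$ to some graphon (up to weak isomorphism). The argument naturally splits into two parts: first extract a $\dist$-Cauchy subsequence, then show the Cauchy sequence has a limit inside $\graphonsWI$.

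For the first part, I would use the inverse counting lemma to reduce Cauchy-ness in $\dist$ to convergence of homomorphism densities. Since there are only countably many isomorphism classes of finite simple graphs, enumerate them as $F_1, F_2, \dots$. The values $t(F_k, W_n)$ lie in the compact interval $[0,1]$, so a standard diagonal argument over $k$ produces a subsequence $(W_{n_j})_{j \in \N}$ along which $t(F_k, W_{n_j})$ converges for every $k$. By \Cref{th:inverseCountingLemmaGraphons}, this subsequence is Cauchy in $\dist$: given $\epsilon > 0$, pick $k$ with $50/\sqrt{\log k} < \epsilon$, and then $j_0$ large enough that all of the finitely many graphs $F$ on $k$ vertices satisfy $\lvert t(F, W_{n_j}) - t(F, W_{n_{j'}}) \rvert < 2^{-k^2}$ for all $j, j' \ge j_0$; the inverse counting lemma then gives $\dist(W_{n_j}, W_{n_{j'}}) \le \epsilon$.

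The main obstacle is the second part: constructing a limit graphon $W \in \graphons$ for the Cauchy subsequence, i.e., establishing completeness of $(\graphonsWI, \dist)$. The standard approach relies on the Frieze--Kannan weak regularity lemma, which for every scale parameter $m$ approximates each $W_{n_j}$ in cut norm to within $1/m$ by a step function on a partition into $k(m)$ parts of equal Lebesgue measure. After precomposing with suitable $\varphi \in \measPres$—which does not change the equivalence class in $\graphonsWI$—these step-function approximations can be aligned on a common dyadic partition of $[0,1]$. The Cauchy property then forces the $k(m) \times k(m)$ symmetric matrices in $[0,1]^{k(m) \times k(m)}$ that represent the step functions to converge along a further subsequence, and by diagonalizing over $m$ one obtains a sequence of step-function limits that are compatible across scales (each is the conditional expectation of the next).

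Finally, a martingale/dyadic-limit construction assembles this compatible family into a measurable symmetric function $W \colon [0,1]^2 \to [0,1]$ whose restrictions to the dyadic partitions at scale $m$ are exactly the constructed step functions. A routine triangle-inequality estimate in the cut norm then shows $\dist(W_{n_j}, W) \to 0$: the error splits into the approximation error of $W_{n_j}$ by its scale-$m$ step function, the closeness of the scale-$m$ step function to the limiting step function, and the cut-norm distance between the limiting step function and $W$, each of which can be made arbitrarily small by choosing $m$ and $j$ large. Combining both parts yields sequential compactness and hence the theorem.
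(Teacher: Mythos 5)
The paper does not prove this theorem; it cites it from Lovász--Szegedy~\cite{LovaszSzegedy2007}, so there is no in-paper proof to compare against. Your sketch is essentially the Lovász--Szegedy argument, and Part~2 (weak regularity, aligning partitions into intervals via measure-preserving maps, extracting a martingale of step-function limits, and assembling a limit graphon) is the heart of their proof and is correct. However, Part~1 is a detour you do not need: once you have the weak regularity machinery of Part~2, you can extract a subsequence along which the $k(m)\times k(m)$ step matrices converge for each scale $m$ simply by compactness of $[0,1]^{k(m)\times k(m)}$ and a diagonal argument over $m$ --- no appeal to the inverse counting lemma or to Cauchy-ness of the $W_n$ in $\dist$ is required. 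Relatedly, the sentence ``The Cauchy property then forces the matrices to converge along a further subsequence'' misattributes what drives the matrix convergence: it is finite-dimensional compactness, not the $\dist$-Cauchy property from Part~1, that gives you the convergent subsequence. (Historically this matters too: Lovász--Szegedy 2007 predates the quantitative inverse counting lemma of Borgs et al.\ 2008, so their compactness proof could not have used it.) With Part~1 dropped and that sentence corrected, what remains is a faithful sketch of the known proof.
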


\section{Similarity Measures of Graphs}
\label{sec:pseudometricsForGraphs}

In this section, we define the tree and path distances
of graphs and formally state the correspondences
to tree and path homomorphism densities, respectively.
All presented results are specializations
of the results for graphons proven in
\Cref{sec:trees} and \Cref{sec:paths}.

\subsection{The Tree Distance of Graphs}
\label{subsec:treeDistanceGraphs}

Recall that two graphs $G$ and $H$ have the same tree homomorphism counts
if and only if the system $\Fiso$ of linear equations has a non-negative solution.
Based on this, Grohe proposed
$\groheDist$
as a similarity measure of graphs.
This is nearly what we define as the tree distance of graphs.
What is missing is, first, a more general definition for graphs with
different numbers of vertices and, second, an appropriate choice
of a matrix norm with an appropriate normalization factor;
analogously to the cut distance, we normalize the tree distance to values in $[0,1]$.
As in the definition of the cut distance in the preliminaries,
we handle graphs on different numbers of vertices by
considering fractional overlays instead of blow-ups (and doubly stochastic matrices).
Recall that a fractional overlay of graphs $G$ and $H$
is a matrix $X \in \mathbb{R}^{V(G) \times V(H)}$ such that
    $X_{uv} \ge 0$ for all $u \in V(G)$, $v \in V(H)$,
    $\sum_{v \in V(H)} X_{u v} = {1}/{\vs(G)}$ for every $u \in V(G)$, and
    $\sum_{u \in V(G)} X_{u v} = {1}/{\vs(H)}$ for every $v \in V(H)$.
If $\vs(G) = \vs(H)$, then the difference between
a fractional overlay and a doubly stochastic matrix is just a factor of $\vs(G)$.
Also recall that $\mathcal{X}(G, H)$ denotes the set of all fractional overlays of $G$ and $H$.

We consider two matrix norms for the tree distance:
First, just like in the definition of the cut distance,
we use the \textit{cut norm} for matrices,
introduced by Frieze and Kannan \cite{FriezeKannan1999},
defined as $\normC{A} \coloneqq \max_{S \subseteq [m], T \subseteq [n]} \lvert \sum_{i \in S, j \in T} A_{ij} \rvert$
for $A \in \R^{m \times n}$.
Second, we also consider the more standard spectral norm
$\normT{A} \coloneqq \sup_{x \in \R^n, \normT{x} \le 1} \normT{Ax}$
of a matrix $A \in \R^{m \times n}$.
From a computational point of view,
the Frobenius norm might also be appealing,
but this would lead to a different topology,
cf.\ \cite[Appendix E]{Janson2013}.
\begin{definition}[Tree Distance of Graphs]
    \label{def:treeDistanceGraphs}
    Let $G$ and $H$ be graphs with adjacency matrices $A \in \R^{V(G) \times V(G)}$
    and $B \in \R^{V(H) \times V(H)}$, respectively.
    Then, define
    \begin{align*}
        \neiDistance(G, H) &\coloneqq \inf_{X \in \mathcal{X}(G, H)} \frac{1}{\vs(G) \cdot \vs(H)} \lVert \vs(H) \cdot AX - \vs(G) \cdot XB \rVert_{\square} \text{ and}\\\neiDistSpec(G, H) &\coloneqq \inf_{X \in \mathcal{X}(G, H)} \frac{1}{\sqrt{\vs(G) \vs(H)}} \normT{\vs(H) \cdot AX - \vs(G) \cdot XB}.
    \end{align*}
\end{definition}
Note that the spectral norm requires an adapted normalization factor
in \Cref{def:treeDistanceGraphs}.
The advantage of $\neiDistance$ is the close connection to the cut distance,
which also utilizes the cut norm.
However, the crucial advantage of the spectral norm is that
minimization of the spectral norm of a matrix is a standard application
of interior-point methods in convex optimization.
In particular, an $\epsilon$-solution to $\neiDistSpec$ can be
computed in polynomial time \cite[Section $6.3.3$]{NesterovNemirovskii1994}.
For $\neiDistance$, it is not clear whether this is possible.

From the results of \Cref{sec:trees}, we get that
$\neiDist$ and $\neiDistSpec$ are pseudometrics (\Cref{le:neiDistPseudoMetric})
and that two graphs have distance zero if and only if their tree homomorphism
densities are the same (\Cref{le:neiDistZero}).
Moreover, we have $\neiDist \le \dist$ (\Cref{le:neiDistLeCutDist}),
and these
pseudometrics are invariant under blow-ups.
Finally, we get the following counting lemma (\Cref{co:countingNeighborhood})
and inverse counting lemma (\Cref{co:invCountingNeighborhood}).

\begin{theorem}[Counting Lemma for $\treeDist$, Graphs]
    \label{th:countingLemmaTreesGraphs}
    Let $\neiDistNo \in \{\neiDist, \neiDistSpec\}$.
    For every tree $T$ and every $\epsilon > 0$, there is an $\eta > 0$ such that,
    for all graphs $G$ and $H$, if
    $\neiDistNo(G, H) \le \eta$,
    then
    $\lvert t(T, G) - t(T, H) \rvert \le \epsilon$.
\end{theorem}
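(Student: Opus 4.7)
The plan is to lift the statement to graphons and exploit compactness of the cut distance. Every graph $G$ has a graphon representation $W_G$ with $t(T, G) = t(T, W_G)$, and I would take as given the graphon-level analogue of $\neiDistNo$ constructed in \Cref{sec:trees}, together with the following three ingredients: $\neiDistNo$ is a pseudometric on graphons that agrees with its graph version on $W_G, W_H$ (\Cref{le:neiDistPseudoMetric}); it is dominated, up to a constant depending on which of the two norms we use, by the cut distance $\dist$ (\Cref{le:neiDistLeCutDist}); and $\neiDistNo(U, V) = 0$ characterizes equal tree homomorphism densities of $U$ and $V$ (\Cref{le:neiDistZero}). Given these, it suffices to prove a graphon counting lemma, whose specialization to $U = W_G$, $V = W_H$ is exactly the stated theorem.

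The graphon counting lemma I would argue by contradiction. Suppose the conclusion fails for some tree $T$: there exist $\epsilon_0 > 0$ and sequences of graphons $(U_n), (V_n)$ with $\neiDistNo(U_n, V_n) \to 0$ but $|t(T, U_n) - t(T, V_n)| \ge \epsilon_0$ for every $n$. By compactness of $(\graphonsWI, \dist)$ (\Cref{th:compact}), pass to subsequences along which $U_n \to U$ and $V_n \to V$ in $\dist$. Since $\dist$ dominates $\neiDistNo$, both $\neiDistNo(U_n, U) \to 0$ and $\neiDistNo(V_n, V) \to 0$, and the triangle inequality together with $\neiDistNo(U_n, V_n) \to 0$ forces $\neiDistNo(U, V) = 0$; by \Cref{le:neiDistZero} this gives $t(T, U) = t(T, V)$. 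On the other hand, the counting lemma for the cut distance (\Cref{th:countingLemmaGraphons}) yields $t(T, U_n) \to t(T, U)$ and $t(T, V_n) \to t(T, V)$, so passing to the limit along the subsequence contradicts $|t(T, U_n) - t(T, V_n)| \ge \epsilon_0$.

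The main obstacle is not the compactness argument above, which is essentially a standard topological exercise, but rather establishing the graphon-level ingredients it invokes. In particular, extending the characterization \enquote{$\neiDistNo(U,V) = 0$ iff $t(T, U) = t(T, V)$ for every tree $T$} from graphs to arbitrary graphons requires Grebík and Rocha's generalization of fractional isomorphism to graphons, and the triangle inequality for $\neiDistNo$ on graphons requires a careful composition of fractional overlays. Once these are in place, the meta-theorem of \Cref{sec:pseudometrics} codifies the compactness argument in a form that converts any exact graphon correspondence to homomorphism densities into an approximate one, and the present theorem drops out as an instance; the proof is qualitative because compactness supplies no explicit dependence of $\eta$ on $\epsilon$ and $T$.
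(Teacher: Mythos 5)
Your proposal is correct and follows the paper's proof essentially verbatim: lift to graphons via the compatibility with $\dist$ and the characterization of distance zero by tree homomorphism densities (\Cref{le:neiDistZero}), then run the compactness-plus-contradiction argument that the paper packages once and for all as the meta-theorem \Cref{th:countingTrees}. One small imprecision worth noting: for the spectral variant the domination by $\dist$ is via a square root ($\neiDistROp \le 4\neiDist^{1/2} \le 4\,\dist^{1/2}$ by \Cref{le:treeDistInequalitiesSimple} and \Cref{le:neiDistLeCutDist}), not a constant factor, but this still yields the required compatibility.
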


\begin{theorem}[Inverse Counting Lemma for $\treeDist$, Graphs]
    \label{th:inverseCountingLemmaTreesGraphs}
    Let $\neiDistNo \in \{\neiDist, \neiDistSpec\}$.
    For every $\epsilon > 0$, there are $k > 0$ and $\eta > 0$ such that,
    for all graphs $G$ and $H$,
    if
    $\lvert t(T, G) - t(T, H) \rvert \le \eta$
    for every tree $T$ on at most $k$ vertices, then
    $\neiDistNo(G, H) \le \epsilon$.
\end{theorem}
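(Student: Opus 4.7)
The plan is to derive the graph statement as a direct specialization of a graphon analogue proved in \Cref{sec:trees}. Concretely, I would introduce graphon versions of $\neiDist$ and $\neiDistSpec$ on $\graphons$ that are compatible with the embedding $G \mapsto W_G$, in the sense that $\neiDistNo(G, H) = \neiDistNo(W_G, W_H)$; this mirrors how the cut distance of graphs agrees with the cut distance of the associated graphons. Combined with the identity $t(T, G) = t(T, W_G)$, this reduces the claim to the following graphon statement: for every $\epsilon > 0$ there exist $k > 0$ and $\eta > 0$ such that, for all graphons $U, W \in \graphons$, if $\lvert t(T, U) - t(T, W) \rvert \le \eta$ for every tree $T$ on at most $k$ vertices, then $\neiDistNo(U, W) \le \epsilon$.

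To prove this graphon version, I would follow the standard compactness strategy used in the original inverse counting lemma for the cut distance. Assume for contradiction that some $\epsilon_0 > 0$ admits no suitable $k, \eta$. Then for each $n \in \N$ there are graphons $U_n, W_n$ with $\lvert t(T, U_n) - t(T, W_n) \rvert \le 1/n$ for every tree $T$ on at most $n$ vertices, yet $\neiDistNo(U_n, W_n) > \epsilon_0$. By \Cref{th:compact}, the space $(\graphonsWI, \dist)$ is compact, so after passing to a subsequence I may assume $U_n \to U$ and $W_n \to W$ in $\dist$ for some graphons $U, W$.

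Next, I would combine continuity on both sides of the correspondence. \Cref{th:countingLemmaGraphons} shows that $t(T, \cdot)$ is continuous in $\dist$, so $t(T, U_n) \to t(T, U)$ and $t(T, W_n) \to t(T, W)$ for every tree $T$; together with the fact that for each fixed $T$ the hypothesis eventually forces $\lvert t(T, U_n) - t(T, W_n) \rvert \to 0$, this yields $t(T, U) = t(T, W)$ for every tree $T$. The graphon analogue of \Cref{le:neiDistZero} then forces $\neiDistNo(U, W) = 0$. On the other hand, the graphon analogue of \Cref{le:neiDistLeCutDist} states $\neiDistNo \le \dist$, which, via the triangle inequality, makes $\neiDistNo$ a $1$-Lipschitz function of pairs in $\dist$, so $\neiDistNo(U, W) \ge \limsup_{n} \neiDistNo(U_n, W_n) \ge \epsilon_0 > 0$, a contradiction.

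The compactness step is short; the real work lies in the auxiliary graphon facts that underpin it. The main obstacle is establishing that two graphons agree on all tree homomorphism densities if and only if their graphon tree distance is zero. This is the graphon analogue of Tinhofer's fractional-isomorphism characterization, and in the plan it would rely on the framework of fractional isomorphism of graphons developed by Grebík and Rocha \cite{GrebikRocha2019}; the remaining properties $\neiDistNo \le \dist$ and the graph-graphon compatibility $\neiDistNo(G, H) = \neiDistNo(W_G, W_H)$ are comparatively routine, mirroring the corresponding statements for the cut distance.
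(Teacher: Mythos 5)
Your proposal follows the same compactness strategy as the paper: define graphon analogues of the distances that restrict correctly under $G \mapsto W_G$, establish the zero-distance characterization via Grebík--Rocha fractional isomorphism and compatibility with $\dist$, then argue by contradiction using \Cref{th:compact} and \Cref{th:countingLemmaGraphons}; the paper merely packages the compactness step as a general lemma for arbitrary graph classes (\Cref{th:invCountingTrees}). One small imprecision: the inequality $\neiDistNo \le \dist$ that you invoke holds only for the cut-norm variant $\neiDist$ (that is \Cref{le:neiDistLeCutDist}); for the spectral variant one only has $\neiDist \le \neiDistROp \le 4\neiDist^{1/2}$ (\Cref{le:treeDistInequalitiesSimple}), hence $\neiDistROp \le 4\dist^{1/2}$, which still gives compatibility with $\dist$ --- the weaker property that the compactness argument actually needs --- but not the $1$-Lipschitz bound you state.
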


\subsection{The Path Distance of Graphs}
\label{subsec:pathDistanceGraphs}

Dell, Grohe, and Rattan proved that two graphs $G$ and $H$
have the same path homomorphism counts if and only if
the system $\Fiso$ of linear equations has a real solution \cite{Dell2018}.
This transfers to the definition of the path distance, i.e.,
we define the path distance analogously to the tree distance
but relax the non-negativity condition of fractional overlays.
For graphs $G$ and $H$, we call
a matrix $X \in \mathbb{R}^{V(G) \times V(H)}$ a
\textit{signed fractional overlay} of $G$ and $H$ if
    $\normT{X y} \le \normT{y}/\sqrt{\vs(G) \vs(H)}$ for every $y \in \R^{V(H)}$,
    $\sum_{v \in V(H)} X_{u v} = {1}/{\vs(G)}$ for every $u \in V(G)$, and
    $\sum_{u \in V(G)} X_{u v} = {1}/{\vs(H)}$ for every $v \in V(H)$.
Let $\mathcal{S}(G, H)$ denote the set of all signed fractional overlays of $G$ and $H$.
The first condition requires that $X$ is a contraction (up to a scaling factor)
in the spectral norm;
we need this to guarantee that our definition of the path distance
actually yields a pseudometric.
This restriction to the spectral norm
stems from the fact that
the proof of Dell, Grohe, and Rattan \cite{Dell2018}
(and our generalization thereof to graphons)
only guarantees that the constructed solution
is a contraction in the spectral norm,
cf.\ \Cref{sec:paths} for the details.

\begin{definition}[Path Distance of Graphs]
    \label{def:pathDistanceGraphs}
    Let $G$ and $H$ be graphs with adjacency matrices $A \in \R^{V(G) \times V(G)}$
    and $B \in \R^{V(H) \times V(H)}$, respectively.
    Then, define
    \begin{align*}
        \pathDistSpec(G, H) &\coloneqq \inf_{X \in \mathcal{S}(G, H)} \frac{1}{\sqrt{\vs(G) \vs(H)}} \normT{\vs(H) \cdot AX - \vs(G) \cdot XB}.
    \end{align*}
\end{definition}
From \Cref{sec:paths}, we get that
$\pathDistSpec$ is a pseudometric (\Cref{le:pathDistTPseudoMetric}) that is invariant under blow-ups and that
has as graphs of distance zero precisely these
with the same path homomorphism densities.
Moreover,
we get the following (quantitative) counting lemma (\Cref{th:countingLemmaPathDistT})
and inverse counting lemma (\Cref{co:invCountingPathT}).

\begin{theorem}[Counting Lemma for $\pathDistSpec$, Graphs]
    \label{th:countingLemmaPathDistSpec}
    Let $P$ be a path, and let $G$ and $H$ be graphs.
    Then, $\lvert t(P, G) - t(P, H) \rvert \le \es(P) \cdot \pathDistSpec(G, H)$.
\end{theorem}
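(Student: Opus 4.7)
The plan is to prove the bound for an arbitrary $X \in \mathcal{S}(G, H)$ and then take the infimum. Let $n = \vs(G)$, $m = \vs(H)$, $k = \es(P)$, and write $A$ and $B$ for the adjacency matrices of $G$ and $H$. The row and column sum conditions on $X$ translate directly into the identities $X\allOne_{V(H)} = \allOne_{V(G)}/n$ and $\allOne_{V(G)}^T X = \allOne_{V(H)}^T/m$. Setting $D := \vs(H) \cdot AX - \vs(G) \cdot XB$, a one-line rearrangement gives $AX = (n/m)\, XB + D/m$, and iterating this (or inducting on $k$) produces the telescoping expansion
\[
    A^k X = (n/m)^k\, XB^k + \frac{1}{m}\sum_{i=0}^{k-1} (n/m)^i\, A^{k-1-i}\, D\, B^i.
\]

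Now $\hom(P, G) = \allOne_{V(G)}^T A^k \allOne_{V(G)}$ since $P$ is a path on $k$ edges, so substituting $\allOne_{V(G)} = n \cdot X\allOne_{V(H)}$ on the right-hand copy of $\allOne_{V(G)}$ gives $t(P, G) = \frac{1}{n^k}\,\allOne_{V(G)}^T A^k X \allOne_{V(H)}$. Plugging the expansion in, the leading term contributes $\frac{1}{n^k}(n/m)^k \allOne_{V(G)}^T X B^k \allOne_{V(H)} = \frac{1}{m^{k+1}}\,\allOne_{V(H)}^T B^k \allOne_{V(H)} = t(P, H)$ after applying $\allOne_{V(G)}^T X = \allOne_{V(H)}^T/m$. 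Consequently, $t(P, G) - t(P, H)$ is a sum of $k$ defect terms, each of the form $\frac{1}{m^{i+1}\, n^{k-i}}\, (A^{k-1-i}\allOne_{V(G)})^T\, D\, (B^i \allOne_{V(H)})$. I estimate each one by Cauchy--Schwarz together with the naïve bounds $\normT{A^j \allOne_{V(G)}} \le n^{j+1/2}$ and $\normT{B^j \allOne_{V(H)}} \le m^{j+1/2}$, which hold because each coordinate of $A^j \allOne_{V(G)}$ counts walks of length $j$ in $G$ and is therefore at most $n^j$. After the powers of $n$ and $m$ cancel, each defect term is at most $\normT{D}/\sqrt{nm}$, and summing the $k$ of them yields $|t(P, G) - t(P, H)| \le \es(P) \cdot \normT{D}/\sqrt{\vs(G)\vs(H)}$. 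Taking the infimum over $X \in \mathcal{S}(G, H)$ delivers exactly $\es(P) \cdot \pathDistSpec(G, H)$.

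The main obstacle is just the bookkeeping: tracking the powers of $n/m$ through the telescoping expansion and verifying that the leading term collapses to $t(P, H)$. Notably, the contraction condition built into $\mathcal{S}(G, H)$ plays no role here---only the two sum identities are used---which is consistent with its stated purpose of ensuring the pseudometric property of $\pathDistSpec$ rather than this counting lemma. The choice of the spectral norm is essential, because Cauchy--Schwarz bounds $x^T D y$ by $\normT{D}\cdot\normT{x}\cdot\normT{y}$ and hence matches the $\normT{\cdot}$ appearing in the definition of $\pathDistSpec$ with no slack. As an alternative route, the statement can also be derived by passing to the associated graphons $W_G$ and $W_H$ and invoking the graphon-level counting lemma \Cref{th:countingLemmaPathDistT}, but the finite-dimensional telescoping sketched above is essentially the same calculation unpacked.
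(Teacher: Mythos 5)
Your proof is correct, and it takes a genuinely different route from the paper's. The paper obtains this statement as a specialization of its graphon-level counting lemma (\Cref{th:countingLemmaPathDistT}), combined with the two ``stepping function'' identities $t(P, G) = t(P, W_G)$ (\Cref{le:steppingFunctionDensity}) and $\pathDistSpec(G, H) = \pathDistROp(W_G, W_H)$ (\Cref{le:steppingFunctionPathDistSpec}); the latter requires the operator-level machinery showing that signed fractional overlays of $G, H$ correspond bijectively to signed Markov operators between the step-function graphons (\Cref{le:signedFracOverlayToSignedMarkov,le:signedMarkovToSignedFracOverlay}), with technical Cauchy--Schwarz bookkeeping to match the two notions of norm. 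By contrast, you prove the statement directly at the finite-dimensional level with a telescoping expansion of $A^k X$ in terms of the defect $D = m\,AX - n\,XB$, and the identities $X\allOne_{V(H)} = \allOne_{V(G)}/n$ and $\allOne_{V(G)}^T X = \allOne_{V(H)}^T / m$ do exactly the work that $S\allOne = \allOne$ and $S^*\allOne = \allOne$ do in the paper's proof of \Cref{th:countingLemmaPathsGraphons}. The underlying mechanism---a commutator telescoping sum plus Cauchy--Schwarz against the $2\to 2$ norm, with the contraction property of $X$ (resp.\ $S$) playing no role in the counting direction---is the same, and your parenthetical remark about this is accurate. What your route buys is a fully self-contained, elementary proof of the graph-level statement that avoids the graphon formalism entirely; what the paper's route buys is the graphon version of the lemma, which is needed elsewhere because the compactness argument behind the (non-quantitative) inverse counting lemma lives in the graphon space. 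Your computations check out: the expansion $A^k X = (n/m)^k\, X B^k + \frac{1}{m}\sum_{i=0}^{k-1}(n/m)^i A^{k-1-i} D B^i$ follows by induction, the leading term collapses to $t(P, H)$ via the column-sum identity, and the per-term estimate $\frac{1}{m^{i+1} n^{k-i}}\,\normT{D}\cdot n^{k-i-1/2}\cdot m^{i+1/2} = \normT{D}/\sqrt{nm}$ is exact, so the final bound is tight against the definition of $\pathDistSpec$.
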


\begin{theorem}[Inverse Counting Lemma for $\pathDistSpec$, Graphs]
    For every $\epsilon > 0$, there are $k > 0$ and $\eta > 0$ such that,
    for all graphs $G$ and $H$,
    if
    $\lvert t(P, G) - t(P, H) \rvert \le \eta$
    for every path $P$ on at most $k$ vertices, then
    $\pathDistSpec(G, H) \le \epsilon$.
\end{theorem}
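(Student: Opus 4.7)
The plan is to derive this statement as a direct corollary of the graphon-level inverse counting lemma for $\pathDistSpec$ (\Cref{co:invCountingPathT}) via the standard embedding $G \mapsto W_G$. Given $\epsilon > 0$, I would first apply the graphon result to obtain $k > 0$ and $\eta > 0$ such that any two graphons $U, W$ with $\lvert t(P, U) - t(P, W) \rvert \le \eta$ for every path $P$ on at most $k$ vertices satisfy $\pathDistSpec(U, W) \le \epsilon$. Then for graphs $G, H$ satisfying the graph hypothesis for these $k, \eta$, I would set $U \coloneqq W_G$ and $W \coloneqq W_H$; since homomorphism densities are preserved by this embedding, the graphon hypothesis holds, yielding $\pathDistSpec(W_G, W_H) \le \epsilon$. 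To conclude, one verifies $\pathDistSpec(G, H) \le \pathDistSpec(W_G, W_H)$ by checking that the graphon-level definition specializes to \Cref{def:pathDistanceGraphs} on step-function graphons, analogously to Lemma $8.9$ of \cite{Lovasz2012} for the cut distance.

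The substantial work therefore lies in establishing the graphon inverse counting lemma itself, which is the business of \Cref{sec:paths}. By the general framework of \Cref{sec:pseudometrics}, this reduces to verifying that $\pathDistSpec$ is a ``reasonable'' graphon pseudometric (in particular, continuous with respect to the cut distance) together with the exact characterization that $\pathDistSpec(U, W) = 0$ if and only if $t(P, U) = t(P, W)$ for every path $P$. The approximate version then follows by contradiction and compactness: a sequence of counterexamples $(U_n), (W_n)$ with path densities of $U_n$ and $W_n$ converging together but $\pathDistSpec(U_n, W_n) > \epsilon$ admits, via \Cref{th:compact}, cut-convergent subsequences with limits $U_\infty, W_\infty$; by \Cref{th:countingLemmaGraphons} the limits have equal path densities, hence $\pathDistSpec(U_\infty, W_\infty) = 0$ by the exact characterization, contradicting $\pathDistSpec(U_n, W_n) > \epsilon$ through continuity of $\pathDistSpec$ under cut convergence.

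The main obstacle is the exact characterization of graphons with equal path densities in terms of signed kernel overlays, which is the graphon generalization of the theorem of Dell, Grohe, and Rattan that the paper promises to prove directly in \Cref{sec:paths}. The discrete argument produces a signed matrix witnessing equality of path counts that is a contraction in the spectral norm; lifting this to graphons requires constructing a signed kernel whose associated operator is bounded on $\LT$ and fixes the constant function $\mathbf{1}_{[0,1]}$ (the continuous analogue of preserving constant vectors), which in turn motivates the spectral contraction condition built into the definition of $\mathcal{S}(G, H)$. Once this characterization is in hand, the residual obstacle—matching the signed fractional overlay definition for graphs with the kernel-level definition, including the normalization factor $1/\sqrt{\vs(G)\vs(H)}$—should be routine step-function manipulation, parallel to the treatment carried out for $\neiDistSpec$ in \Cref{sec:trees}.
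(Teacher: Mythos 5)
Your proposal is correct and follows essentially the same route as the paper: reduce to the graphon-level inverse counting lemma (Corollary~\ref{co:invCountingPathT}), which in turn is an instance of the general compactness-based Theorem~\ref{th:invCountingTrees} applied to the class of paths, using Lemma~\ref{le:pathDistTZero} (via Theorem~\ref{th:pathHomsGraphons}) for the exact characterization and Lemmas~\ref{le:pathDistTLeNeiDistT}, \ref{le:treeDistInequalitiesSimple}, and \ref{le:neiDistLeCutDist} for compatibility with $\dist$, then transfer to graphs through the step-function embedding. One small notational point: the paper in fact proves equality, $\pathDistSpec(G,H) = \pathDistROp(W_G,W_H)$ (Lemma~\ref{le:steppingFunctionPathDistSpec}), rather than merely the one-sided inequality you invoke, but the inequality you stated is in the direction needed and suffices.
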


\section{Graphon Pseudometrics and Homomorphism Densities}
\label{sec:pseudometrics}

In this section, we provide the main tools
we need to prove the correspondences between
the tree and path distances
and tree and path homomorphism densities, respectively.
Consider a pseudometric $\delta$ on graphons.
We say that \textit{$\delta$ is compatible with $\dist$}
if, for every sequence of graphons $(U_n)_n,$ $U_n \in \graphons$,
and every graphon $\widetilde{U} \in \graphons$,
$\dist(U_n, \widetilde{U}) \xrightarrow{n \rightarrow \infty} 0$
implies $\delta(U_n, \widetilde{U}) \xrightarrow{n \rightarrow \infty} 0$.
For example, this is the case if $\delta \le \dist$, i.e.,
graphons only get closer if we consider $\delta$ instead of $\dist$.
We anticipate that the pseudometrics we are interested in,
the tree distance and the path distance,
are compatible with $\dist$.

Together, the next two theorems state that
every pseudometric that is compatible with $\dist$ and whose
graphons of distance zero can be characterized by homomorphism densities
from a class of graphs $\mathcal{F}$
already has to satisfy both a counting lemma and an inverse counting lemma
for this class $\mathcal{F}$.
The proof of these theorems is a simple compactness argument,
utilizing the compactness of the graphon space, \Cref{th:compact},
and the counting lemma for $\dist$, \Cref{th:countingLemmaGraphons}.
Therefore, it is absolutely crucial
that we consider a pseudometric defined on graphons
as the limit of a sequence of graphs may not be
a graph.

\begin{theorem}[Counting Lemma for $\someClass$]
    \label{th:countingTrees}
    Let $\someClass$ be a class of graphs, and
    let $\someDist$ be a pseudometric on graphons such that
        (1) $\someDist$ is compatible with $\dist$ and (2),
        for all graphons $U, W \in \graphons$, $\someDist(U, W) = 0$ implies
              $t(F, U) = t(F, W)$ for every graph $F \in \someClass$.
    Then, for every graph $F \in \someClass$ and every $\epsilon > 0$, there is an $\eta > 0$ such that,
    for all graphons $U, W \in \graphons$, if
$\someDist(U, W) \le \eta$,
then
$\lvert t(F, U) - t(F, W) \rvert \le \epsilon$.
\end{theorem}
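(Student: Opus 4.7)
The plan is to argue by contradiction using the compactness of the graphon space together with the counting lemma for the cut distance. Suppose for contradiction that there exist a graph $F \in \someClass$ and an $\epsilon > 0$ such that, for every $n \in \N$, one finds graphons $U_n, W_n \in \graphons$ with $\someDist(U_n, W_n) \le 1/n$ but $\lvert t(F, U_n) - t(F, W_n) \rvert > \epsilon$. The goal is to extract limit objects in $\dist$ and combine the two hypotheses on $\someDist$ to contradict condition (2).

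First I would invoke \Cref{th:compact}: since $(\graphonsWI, \dist)$ is compact, passing to a subsequence (twice if needed, or diagonally), one may assume that $\dist(U_n, U) \to 0$ and $\dist(W_n, W) \to 0$ for some $U, W \in \graphons$. Next, compatibility with $\dist$ (condition (1)) gives $\someDist(U_n, U) \to 0$ and $\someDist(W_n, W) \to 0$. Applying the triangle inequality for the pseudometric $\someDist$ yields
\begin{equation*}
    \someDist(U, W) \le \someDist(U, U_n) + \someDist(U_n, W_n) + \someDist(W_n, W) \xrightarrow{n \to \infty} 0,
\end{equation*}
so $\someDist(U, W) = 0$. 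By condition (2) applied to the graph $F \in \someClass$, this forces $t(F, U) = t(F, W)$.

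Finally, to reach a contradiction, I would apply the counting lemma for the cut distance, \Cref{th:countingLemmaGraphons}: from $\dist(U_n, U) \to 0$ and $\dist(W_n, W) \to 0$, we deduce $t(F, U_n) \to t(F, U)$ and $t(F, W_n) \to t(F, W)$, hence
\begin{equation*}
    \lvert t(F, U_n) - t(F, W_n) \rvert \xrightarrow{n \to \infty} \lvert t(F, U) - t(F, W) \rvert = 0,
\end{equation*}
contradicting the assumption that this quantity stays above $\epsilon$ along the subsequence.

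I do not expect a substantial obstacle here; the argument is a standard non-quantitative compactness argument, and all the real work is encoded in \Cref{th:compact} and \Cref{th:countingLemmaGraphons}. The only point requiring mild care is that compactness is stated for the quotient $\graphonsWI$ rather than $\graphons$, but this is harmless: condition (1) applied to the constant sequence $U_n \equiv U$ and a weakly isomorphic limit $U'$ shows that $\someDist$ descends to $\graphonsWI$, and $t(F, \cdot)$ does so by \Cref{th:countingLemmaGraphons} as well, so choosing representatives of the limits in $\graphons$ presents no issue.
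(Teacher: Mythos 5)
Your proof is correct and follows essentially the same compactness argument as the paper's: extract a $\dist$-convergent subsequence via \Cref{th:compact}, use compatibility and the triangle inequality to force $\someDist$ of the limits to vanish, invoke condition (2), and contradict via \Cref{th:countingLemmaGraphons}. The extra remark about $\graphonsWI$ versus $\graphons$ is a reasonable point of care that the paper leaves implicit, and your handling of it is sound.
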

\begin{proof}[Proof of \Cref{th:countingTrees}]
    We proceed by contradiction and assume that the statement does not hold.
    Then, there is a graph $F \in \someClass$ and an $\epsilon > 0$ such that, for every
    $\eta > 0$, there are graphons $U, W \in \graphons$ such that
    $\someDist(U, W) \le \eta$ and $\lvert t(F, U) - t(F, W) \rvert > \epsilon$.

    Let $k > 0$.
    Then, by choosing $\eta = \frac{1}{k}$, we get that there are graphons
    $U_k, W_k \in \graphons$ such that
    $\someDist(U_k, W_k) \le \frac{1}{k}$ and
    $\lvert t(F, U_k) - t(F, W_k) \rvert > \epsilon$.
    By the compactness theorem, \Cref{th:compact}, we get that
    the sequence $(U_k)_{k}$ has a convergent
    subsequence $(U_{k_i})_{i}$ converging to a graphon $\widetilde{U}$
    in the metric $\dist$.
    By another application of that theorem, we get that $(W_{k_i})_{i}$
    has a convergent subsequence $(W_{\ell_i})_{i}$
    converging to a graphon $\widetilde{W}$ in the metric $\dist$.
    Then, $(U_{\ell_i})_i$ and $(W_{\ell_i})_i$ are sequences
    converging to $\widetilde{U}$ and $\widetilde{W}$ in the metric $\dist$,
    respectively.

    Now, for every $i > 0$, we have
    \begin{equation*}
        \someDist(\widetilde{U}, \widetilde{W}) \le \someDist(\widetilde{U}, U_{\ell_i})+ \someDist(U_{\ell_i}, W_{\ell_i})+ \someDist(W_{\ell_i}, \widetilde{W}).
    \end{equation*}
    By assumption, we have
    $\someDist(U_{\ell_i}, W_{\ell_i}) \le \frac{1}{\ell_i}$,
    which means that
    $\someDist(U_{\ell_i}, W_{\ell_i}) \xrightarrow{i \rightarrow \infty} 0$.
    Since $\dist(U_{\ell_i}, \widetilde{U}) \xrightarrow{i \rightarrow \infty} 0$
    and $\dist(W_{\ell_i}, \widetilde{W}) \xrightarrow{i \rightarrow \infty} 0$,
    the first assumption about $\someDist$ yields that
    we also have
    $\someDist(U_{\ell_i}, \widetilde{U}) \xrightarrow{i \rightarrow \infty} 0$
    and $\someDist(W_{\ell_i}, \widetilde{W}) \xrightarrow{i \rightarrow \infty} 0$.
    Hence, we must have $\someDist(\widetilde{U}, \widetilde{W}) = 0$.

    Since $\someDist(\widetilde{U}, \widetilde{W}) = 0$,
    we have $t(F, \widetilde{U}) = t(F, \widetilde{W})$ by
    the second assumption about $\someDist$.
    By the Counting Lemma, \Cref{th:countingLemmaGraphons},
    we get that
    $\lvert t(F, U_{\ell_i}) - t(F, \widetilde{U}) \rvert \xrightarrow{i \rightarrow \infty} 0$
    and
    $\lvert t(F, \widetilde{W}) - t(F, W_{\ell_i}) \rvert \xrightarrow{i \rightarrow \infty} 0$.
    Now, for every $i > 0$, we have
    \begin{equation*}
        \lvert t(F, U_{\ell_i}) - t(F, W_{\ell_i}) \rvert \le \lvert t(F, U_{\ell_i}) - t(F, \widetilde{U}) \rvert
            + \lvert t(F, \widetilde{U}) - t(F, \widetilde{W}) \rvert
            + \lvert t(F, \widetilde{W}) - t(F, W_{\ell_i}) \rvert
    \end{equation*}
    Hence, $\lvert t(F, U_{\ell_i}) - t(F, W_{\ell_i}) \rvert \xrightarrow{i \rightarrow \infty} 0$.
    This contradicts the fact that $\lvert t(F, U_{\ell_i}) - t(F, W_{\ell_i}) \rvert > \epsilon$
    for every $i$.
\end{proof}

Just as the proof of \Cref{th:countingTrees},
the proof of \Cref{th:invCountingTrees}
only relies on the compactness of the graphon space
and the counting lemma for $\dist$, and not on a counting lemma for a
specific class of graphs or the inverse counting lemma for $\dist$.
\begin{theorem}[Inverse Counting Lemma for $\someClass$]
    \label{th:invCountingTrees}
    Let $\someClass$ be a class of graphs, and
    let $\someDist$ be a pseudometric on graphons such that
        (1) $\someDist$ is compatible with $\dist$ and (2),
        for all graphons $U, W \in \graphons$, $t(F, U) = t(F, W)$
              for every graph $F \in \someClass$ implies $\someDist(U, W) = 0$.
    Then, for every $\epsilon > 0$, there are $k > 0$ and $\eta > 0$ such that,
    for all graphons $U, W \in \graphons$,
    if
$\lvert t(F, U) - t(F, W) \rvert \le \eta$
for every graph $F \in \someClass$ on at most $k$ vertices, then
$\someDist(U, W) \le \epsilon$.
\end{theorem}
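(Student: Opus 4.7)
The plan is to mirror the contradiction argument used for \Cref{th:countingTrees}, swapping the roles of $\someDist$ and the homomorphism densities. I would begin by assuming there is an $\epsilon > 0$ for which no pair $(k, \eta)$ works. For each integer $k \ge 1$, applying this assumption with the specific choice $\eta = 1/k$ produces graphons $U_k, W_k \in \graphons$ such that $\lvert t(F, U_k) - t(F, W_k) \rvert \le 1/k$ for every $F \in \someClass$ with $\vs(F) \le k$, yet $\someDist(U_k, W_k) > \epsilon$.

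Next, I would pass to limits in the cut distance. Applying the compactness theorem (\Cref{th:compact}) twice, first to $(U_k)_k$ and then to the corresponding subsequence of $(W_k)_k$, yields an index sequence $(\ell_i)_i$ and graphons $\widetilde{U}, \widetilde{W} \in \graphons$ with $\dist(U_{\ell_i}, \widetilde{U}) \to 0$ and $\dist(W_{\ell_i}, \widetilde{W}) \to 0$ as $i \to \infty$.

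The crux is to transfer density information to the limits. Fix any $F \in \someClass$. For all sufficiently large $i$ we have $\ell_i \ge \vs(F)$, so by construction $\lvert t(F, U_{\ell_i}) - t(F, W_{\ell_i}) \rvert \le 1/\ell_i \to 0$. On the other hand, the Counting Lemma for the cut distance (\Cref{th:countingLemmaGraphons}) gives $t(F, U_{\ell_i}) \to t(F, \widetilde{U})$ and $t(F, W_{\ell_i}) \to t(F, \widetilde{W})$. A triangle inequality then forces $t(F, \widetilde{U}) = t(F, \widetilde{W})$. Since $F \in \someClass$ was arbitrary, hypothesis (2) yields $\someDist(\widetilde{U}, \widetilde{W}) = 0$. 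Finally, compatibility of $\someDist$ with $\dist$ (hypothesis (1)) combined with the triangle inequality $\someDist(U_{\ell_i}, W_{\ell_i}) \le \someDist(U_{\ell_i}, \widetilde{U}) + \someDist(\widetilde{U}, \widetilde{W}) + \someDist(\widetilde{W}, W_{\ell_i})$ shows $\someDist(U_{\ell_i}, W_{\ell_i}) \to 0$, contradicting $\someDist(U_{\ell_i}, W_{\ell_i}) > \epsilon$.

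The only subtle point, and thus the main obstacle to watch out for, is that the hypothesis bounds densities only for graphs of size at most $k$, not uniformly over all of $\someClass$. The choice $\eta = 1/k$ is tailored to resolve this: every fixed $F \in \someClass$ is eventually included in the truncation $\vs(F) \le \ell_i$, so the $1/\ell_i$ bound kicks in and drives the density difference to zero in the limit, which is exactly what is needed to conclude $t(F,\widetilde{U}) = t(F,\widetilde{W})$. Beyond this bookkeeping, the proof is structurally identical to that of \Cref{th:countingTrees} and relies on exactly the same two ingredients: compactness of $(\graphonsWI, \dist)$ and the counting lemma for $\dist$.
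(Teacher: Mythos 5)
Your argument is the paper's own proof, step for step: contradiction, diagonal choice $\eta = 1/k$, two applications of the compactness theorem to extract a common convergent subsequence, the counting lemma for $\dist$ to transfer density equality to the limits $\widetilde{U}, \widetilde{W}$, hypothesis (2) to get $\someDist(\widetilde{U}, \widetilde{W}) = 0$, and hypothesis (1) plus the triangle inequality to contradict $\someDist(U_{\ell_i}, W_{\ell_i}) > \epsilon$. The one subtlety you flag (that the hypothesis bounds only graphs on at most $k$ vertices, resolved because any fixed $F$ is eventually covered along the subsequence) is the same point the paper handles with the phrase ``for large enough $i$.''
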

\begin{proof}
    We proceed by contradiction and assume that the statement does not hold.
    Then, there is an $\epsilon > 0$ such that, for every $k > 0$
    and every $\eta > 0$, there are graphons $U, W \in \graphons$ such that
    $\lvert t(F, U) - t(F, W) \rvert \le \eta$ for every graph $F \in \someClass$
    on at most $k$ vertices
    but $\someDist(U, W) > \epsilon$.

    Let $k > 0$.
    Then, by choosing $\eta = \frac{1}{k}$, we get that there are graphons
    $U_k, W_k \in \graphons$ such that
    $\lvert t(F, U_k) - t(F, W_k) \rvert \le \frac{1}{k}$ for every graph $F \in \someClass$
    on at most $k$ vertices and $\someDist(U_k, W_k) > \epsilon$.
    By the compactness theorem, \Cref{th:compact}, we get that
    the sequence $(U_k)_{k}$ has a convergent
    subsequence $(U_{k_i})_{i}$ converging to a graphon $\widetilde{U}$
    in the metric $\dist$.
    By another application of that theorem, we get that $(W_{k_i})_{i}$
    has a convergent subsequence $(W_{\ell_i})_{i}$
    converging to a graphon $\widetilde{W}$ in the metric $\dist$.
    Then, $(U_{\ell_i})_i$ and $(W_{\ell_i})_i$ are sequences
    converging to $\widetilde{U}$ and $\widetilde{W}$ in the metric $\dist$,
    respectively.

    Let $F \in \someClass$ be a graph.
    Now, for every $i > 0$, we have
    \begin{equation*}
        \lvert t(F, \widetilde{U}) - t(F, \widetilde{W}) \rvert \le \lvert t(F, \widetilde{U}) - t(F, U_{\ell_i}) \rvert + \lvert t(F, U_{\ell_i}) - t(F, W_{\ell_i}) \rvert + \lvert t(F, W_{\ell_i}) - t(F, \widetilde{W})\rvert
    \end{equation*}
    By the counting lemma for $\dist$, \Cref{th:countingLemmaGraphons},
    we get that
    $\lvert t(F, \widetilde{U}) - t(F, U_{\ell_i}) \rvert \xrightarrow{i \rightarrow \infty} 0$
    and
    $\lvert t(F, W_{\ell_i}) - t(F, \widetilde{W})\rvert \xrightarrow{i \rightarrow \infty} 0$.
    Moreover, by assumption, we have
    $\lvert t(F, U_{\ell_i}) - t(F, W_{\ell_i}) \rvert \le \frac{1}{\ell_i}$
    for large enough $i$, which means that also
    $\lvert t(F, U_{\ell_i}) - t(F, W_{\ell_i}) \rvert \xrightarrow{i \rightarrow \infty} 0$.
    Hence, we must have $t(F, \widetilde{U}) = t(F, \widetilde{W})$.

    As we have $t(F, \widetilde{U}) = t(F, \widetilde{W})$ for every graph $F \in \someClass$,
    the second assumption about $\someDist$ yields that
    $\someDist(\widetilde{U}, \widetilde{W}) = 0$.
    Since $\dist(U_{\ell_i}, \widetilde{U}) \xrightarrow{i \rightarrow \infty} 0$
    and $\dist(W_{\ell_i}, \widetilde{W}) \xrightarrow{i \rightarrow \infty} 0$,
    we also have
    $\someDist(U_{\ell_i}, \widetilde{U}) \xrightarrow{i \rightarrow \infty} 0$
    and $\someDist(W_{\ell_i}, \widetilde{W}) \xrightarrow{i \rightarrow \infty} 0$
    by the first assumption about $\someDist$.
    Now, for every $i > 0$, we have
    \begin{equation*}
        \someDist(U_{\ell_i}, W_{\ell_i}) \le \someDist(U_{\ell_i}, \widetilde{U})+ \someDist(\widetilde{U}, \widetilde{W})+ \someDist(\widetilde{W}, W_{\ell_i}).
    \end{equation*}
    Hence, $\someDist(U_{\ell_i}, W_{\ell_i}) \xrightarrow{i \rightarrow \infty} 0$.
    This contradicts the fact that $\someDist(U_{\ell_i}, W_{\ell_i}) > \epsilon$
    for every $i$.
\end{proof}
 
\section{Homomorphisms from Trees}
\label{sec:trees}

In this section, we define the tree distance of graphons.
To use the results from \Cref{sec:pseudometrics},
we prove that the graphons of distance zero
are precisely those with the same tree homomorphism densities
(\Cref{le:neiDistZero})
and that the tree distance is compatible with the cut distance
(\Cref{le:neiDistLeCutDist}).
As for graphs,
we define two variants of the tree distance,
which yield the same topology (\Cref{le:treeDistInequalitiesSimple}):
one using the analogue of the cut norm
and one using the analogue of the spectral norm.

\subsection{Fractional Isomorphism of Graphons}
\label{subsec:fracIsoGraphons}

Recall that two graphs $G$ and $H$ with adjacency matrices $A \in \R^{V(G) \times V(G)}$
and $B \in \R^{V(H) \times V(H)}$,
respectively, are called fractionally isomorphic
if there is a doubly stochastic matrix $X \in \R^{V(G) \times V(H)}$
such that $AX = XB$.
Grebík and Rocha proved \Cref{th:fracIsoGraphons},
which generalizes this to graphons \cite{GrebikRocha2019};
doubly stochastic matrices become \textit{Markov operators}
\cite{EisnerEtAl2015}.
An operator $S \colon \LT \to \LT$ is called a Markov operator if
$S \ge 0$, i.e., $f \ge 0$ implies $S(f) \ge 0$,
$S(\allOne) = \allOne$, and
$S^*(\allOne) = \allOne$,
where $\allOne$ is the all-one function on $[0,1]$.
We denote the set of all Markov operators $S \colon \LT \to \LT$
by $\markov$.

\begin{theorem}[\cite{GrebikRocha2019}, Part of Theorem $1.2$]
    \label{th:fracIsoGraphons}
    Let $U, W \in \graphons$ be graphons.
    There is a Markov operator $S \colon \LT \to \LT$ such that $T_U \circ S = S \circ T_W$
    if and only if $t(T, U) = t(T, W)$ for every tree $T$.
\end{theorem}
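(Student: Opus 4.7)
The plan is to prove both directions by combining Tinhofer's finite-dimensional theorem (a doubly stochastic intertwiner $AX = XB$ exists iff tree homomorphism counts of the finite graphs $G$ and $H$ agree) with graphon approximation and a weak-operator-topology compactness argument, exploiting that $T_U$ and $T_W$ are compact self-adjoint Hilbert-Schmidt operators.

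For the forward direction, assume $T_U \circ S = S \circ T_W$ with $S \in \markov$. Taking adjoints and using $T_U^* = T_U$, $T_W^* = T_W$ yields $S^* \circ T_U = T_W \circ S^*$ with $S^* \in \markov$, so intertwining holds in both directions. A naive induction on rooted trees $T$ attempting to show $S(h_T^W) = h_T^U$, where $h_T^U(x) = \prod_{i=1}^k (T_U h_{T_i}^U)(x)$, fails at branching vertices because Markov operators are not multiplicative; nevertheless, for a star $S_k$ the claim $t(S_k, U) = t(S_k, W)$ follows from $d_U = S d_W$ together with the $L^k$-contractivity of Markov operators (and the dual inequality via $S^*$). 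To extend this to arbitrary trees I would disintegrate $S$ as a Markov kernel $P(x, dy)$, producing a coupling $\mu$ on $[0,1]^2$ with Lebesgue marginals, and use the intertwining as a bilinear identity that transfers each edge factor in the iterated integral defining $t(T, U)$ from a $U$-factor to a $W$-factor, processing edges leaves-to-root; the profinite \enquote{iterated neighborhood distribution} formalism of Grebík and Rocha provides a clean setting for this.

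For the reverse direction, assume $t(T, U) = t(T, W)$ for every tree $T$. Approximate $U$ and $W$ by step graphons $W_{G_n}, W_{H_n}$ arising from finite weighted graphs $G_n, H_n$ with $\dist(W_{G_n}, U) \to 0$ and $\dist(W_{H_n}, W) \to 0$. By \Cref{th:countingLemmaGraphons}, $\lvert t(T, G_n) - t(T, H_n) \rvert \to 0$ for every tree $T$, so after a small combinatorial perturbation aligning iterated degree sequences one may assume exact equality of finite tree hom counts, and Tinhofer's theorem then yields doubly stochastic matrices $X_n$ with $A_n X_n = X_n B_n$. Lifting each $X_n$ to a Markov step operator $S_n \in \markov$ on $\LT$, weak-operator compactness of the unit ball of $\mathcal{B}(\LT)$ (using separability of $\LT$) and weak-operator closedness of $\markov$ extract a subsequence converging to some $S \in \markov$. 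The identity $T_{W_{G_n}} S_n = S_n T_{W_{H_n}}$ then passes to $T_U \circ S = S \circ T_W$ in the limit because $T_U$ and $T_W$ are Hilbert-Schmidt, hence compact, so they carry the weakly convergent sequences into norm-convergent ones.

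The main obstacle is the forward direction: since $S$ is not multiplicative, no pointwise inductive transfer along the tree works directly, and either the edge-by-edge transfer across $\mu$ (requiring a careful order-of-integration argument at branching vertices) or the detour through the profinite iterated neighborhood invariant is technically involved. A secondary obstacle in the reverse direction is producing \emph{exact} (rather than approximate) equality of tree hom counts at the finite level to feed Tinhofer; one resolution is to choose approximating finite graphs that respect the iterated degree partition of the graphons so that their iterated degree sequences match exactly, another is to allow approximate intertwining $\normRTT{T_{W_{G_n}} S_n - S_n T_{W_{H_n}}} \to 0$ and absorb the error in the limit.
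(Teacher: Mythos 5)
Note first that the paper does not prove this theorem: it is imported from Grebík and Rocha \cite{GrebikRocha2019} as part of their Theorem~$1.2$ and used as a black box, so there is no proof of the author's to compare against. Judging your sketch on its own terms, both directions have genuine gaps.

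Forward direction. You correctly diagnose that the naive rooted induction on the functions $h_T^U$ fails because Markov operators are not multiplicative, but the proposed repair (disintegrate $S$ into a kernel $P(x,\cdot)$ and transfer edge factors leaves-to-root under the induced coupling) does not fix it. A coupling linearizes a single marginal, but at a branching vertex the integrand contains a product of two or more functions evaluated at that one coordinate, and you are again left needing the kernel to pull through a product — the same non-multiplicativity, relocated. Concretely, for the five-vertex tree with root $r$ having children $c_1, c_2$ and $c_1$ having two leaf children, one has $t(T,U) = \langle d_U, T_U(d_U^2)\rangle$ with $d_U := T_U\allOne$; the intertwining together with $d_U = Sd_W$ only reduces this to an expression involving $S^*S$, and $S^*S$ equals the identity only when $S$ comes from a measure-preserving map, not for a general Markov operator. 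Your star argument via $L^k$-contractivity is correct but special: for $S_k$ the density collapses to $\int d^k$, and that collapse does not recur for deeper trees. The iterated-degree-measure route you mention in passing is the real content of Grebík–Rocha and cannot be treated as a gesture.

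Reverse direction. The decisive gap is that Tinhofer's theorem needs \emph{exact} equality of finite tree homomorphism counts, while $\dist(W_{G_n},U)\to 0$ and $\dist(W_{H_n},W)\to 0$ only give $\lvert t(T,G_n)-t(T,H_n)\rvert\to 0$. Your fix (a) — choose $G_n, H_n$ whose iterated degree sequences already match — presupposes that $U$ and $W$ have identical iterated degree measures, which is itself another equivalent formulation in Grebík–Rocha's Theorem~$1.2$ of the very statement you are proving. Your fix (b) — settle for $\normRTT{T_{W_{G_n}}S_n - S_n T_{W_{H_n}}}\to 0$ — requires turning \emph{approximate} tree-hom equality into an \emph{approximate} intertwiner, which is exactly the inverse counting lemma for the tree distance; in this paper that lemma (\Cref{co:invCountingNeighborhood}) is derived \emph{from} \Cref{th:fracIsoGraphons} via \Cref{le:neiDistZero} and \Cref{th:invCountingTrees}, so the argument is circular. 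The weak-operator-compactness limit passage is otherwise sound (compare \Cref{le:neiDistInfMin}), modulo the unaddressed point that cut-distance convergence only gives $T_{W_{G_n}^{\varphi_n}}\to T_U$ after conjugation by some measure-preserving $\varphi_n$, which then has to be threaded through the $S_n$ as well. A self-contained proof really does require the Grebík–Rocha machinery.
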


\subsection{The Tree Distance}
\label{subsec:treeDistance}
Recall that, for graphons $U, W \in \graphons$, the cut distance of $U$ and $W$
can be written as
$\dist(U, W) = \inf_{\varphi \in \measPres} \sup_{f,g \colon [0,1] \to [0,1]} \big\lvert \langle f, T_{U - W^\varphi} g \rangle \big\rvert$.
We obtain the tree distance of $U$ and $W$
by relaxing measure-preserving maps to Markov operators.

\begin{definition}[Tree Distance]
    \label{def:neiDistGraphons}
    Let $U,W \in \graphons$ be graphons.
    Then, define
    \begin{align*}
        \neiDist(U, W) &\coloneqq \inf_{S \in \markov} \sup_{f, g \colon [0,1] \to [0,1]} \lvert \langle f, (T_U \circ S - S \circ T_W) g \rangle \rvert \text{ and}\\
        \neiDistROp(U, W) &\coloneqq \inf_{S \in \markov} \normRTT{T_U \circ S - S \circ T_W}.
    \end{align*}
\end{definition}

As the notation $\neiDist$ indicates, the definition of $\neiDist$ is based (although not explicitly) on the cut norm,
while $\neiDistROp$ is defined via the operator norm $\normRTT{\cdot}$,
which corresponds to the spectral norm for matrices.
One can verify that these definitions specialize
to the ones for graphs from \Cref{subsec:treeDistanceGraphs}.
The proof can be found in \Cref{subsec:distancesCoincide}.
\begin{lemma}
    \label{le:steppingFunctionNeiDist}
    \label{le:steppingFunctionNeiDistSpec}
    Let $G$ and $H$ be graphs.
    Then,
    $\neiDist(G, H) = \neiDist(W_G, W_H)$
    and $\neiDistSpec(G, H) = \neiDistROp(W_G, W_H)$.
\end{lemma}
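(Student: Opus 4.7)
The plan is to set up a tight correspondence between fractional overlays $X \in \mathcal{X}(G,H)$ and step-kernel Markov operators on $\LT$ adapted to the canonical partitions of $[0,1]$: the intervals $I_u^G$ ($u \in V(G)$) of length $1/\vs(G)$ used to build $W_G$, and the intervals $I_v^H$ ($v \in V(H)$) of length $1/\vs(H)$ used to build $W_H$. I then argue that the infima in \Cref{def:neiDistGraphons} are always attained by operators of this restricted form, so the graphon-level definitions reduce to the matrix-level ones of \Cref{def:treeDistanceGraphs}.

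For the $\leq$ inequalities, given $X \in \mathcal{X}(G,H)$, let $S_X$ be the integral operator with kernel $K_X(x,y) \coloneqq \vs(G)\vs(H)\,X_{uv}$ for $x \in I_u^G$, $y \in I_v^H$. The three defining properties of a fractional overlay translate exactly into $S_X \in \markov$: $X \geq 0$ gives $S_X \geq 0$, while the row- and column-sum conditions give $S_X \allOne = \allOne$ and $S_X^* \allOne = \allOne$. A direct calculation on a step function $f$ with $f|_{I_v^H} = b_v$ yields
\begin{equation*}
    (T_{W_G} \circ S_X - S_X \circ T_{W_H}) f \;=\; \tfrac{1}{\vs(H)} \sum_{u \in V(G)} \bigl((\vs(H) A X - \vs(G) X B) b\bigr)_u \cFun_{I_u^G},
\end{equation*}
and one checks that this operator annihilates functions orthogonal to step functions on the $H$-partition and has image in step functions on the $G$-partition. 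Tracking the factors $1/\vs(G)$ and $1/\vs(H)$ from the interval lengths, $\normRTT{T_{W_G} \circ S_X - S_X \circ T_{W_H}}$ reduces to $\tfrac{1}{\sqrt{\vs(G)\vs(H)}} \normT{\vs(H) A X - \vs(G) X B}$; an analogous identity connects the $\sup_{f,g \colon [0,1]\to[0,1]}$ expression to $\tfrac{1}{\vs(G)\vs(H)} \normC{\vs(H) A X - \vs(G) X B}$, using that the sup is bilinear in $f, g$ and hence attained at $\{0,1\}$-indicators, matching the definition of the matrix cut norm.

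For the $\geq$ inequalities, let $P_G$ and $P_H$ denote the orthogonal projections of $\LT$ onto the step-function subspaces for the two partitions. Given arbitrary $S \in \markov$, I would define $X_{uv} \coloneqq \int_{I_u^G} (S \cFun_{I_v^H})(x) \dx$; the Markov properties of $S$ yield exactly the three conditions of a fractional overlay, and a short calculation shows $P_G S P_H = S_X$. The crucial structural observation is that $W_G$ is constant on each $I_u^G \times I_{u'}^G$, so $T_{W_G} = P_G T_{W_G} P_G$, and analogously $T_{W_H} = P_H T_{W_H} P_H$. Combining these identities gives
\begin{equation*}
    T_{W_G} \circ S_X - S_X \circ T_{W_H} \;=\; P_G \circ (T_{W_G} \circ S - S \circ T_{W_H}) \circ P_H.
\end{equation*}
Since $P_G$ and $P_H$ are contractions, sandwiching cannot increase operator norms, which already gives $\neiDistROp(W_G,W_H) \geq \neiDistSpec(G,H)$. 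For the cut-norm variant, using that $P_G$ and $P_H$ preserve $[0,1]$-valuedness, the sup over $f, g \colon [0,1]\to[0,1]$ for $T_{W_G} \circ S - S \circ T_{W_H}$ dominates the corresponding sup for $T_{W_G} \circ S_X - S_X \circ T_{W_H}$, giving $\neiDist(W_G,W_H) \geq \neiDist(G,H)$.

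The main technical obstacle is the routine but error-prone bookkeeping of the reciprocal factors $1/\vs(G)$ and $1/\vs(H)$, which arise because the two partitions have different mesh sizes in $[0,1]$; these factors are precisely what converts the matrix normalizations of \Cref{def:treeDistanceGraphs} into the operator normalizations of \Cref{def:neiDistGraphons}. Beyond that, the proof is driven entirely by the two structural identities $T_{W_G} = P_G T_{W_G} P_G$ (and its $H$-analogue) and $P_G S P_H = S_X$ for the fractional overlay $X$ extracted from $S$.
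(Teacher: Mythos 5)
Your proposal is correct and follows essentially the same blueprint as the paper's: the correspondence $X \mapsto S_X$ (the paper's $T_X = T_{W_X}$) and $S \mapsto X_S$ with $(X_S)_{uv} = \int_{I_u^G} S(\cFun_{I_v^H})$ are literally the same two constructions, and both proofs hinge on the observation that passing to step functions (via conditional expectation) cannot increase the relevant norm. The one genuine improvement in presentation is your identity $T_{W_G}\circ S_X - S_X\circ T_{W_H} = P_G\circ(T_{W_G}\circ S - S\circ T_{W_H})\circ P_H$, which packages the paper's separate explicit computations and its Cauchy--Schwarz step (used to show $\int_{I_i} b(x)^2\,dx \ge n\bigl(\int_{I_i} b(x)\,dx\bigr)^2$) into the single statement that $P_G, P_H$ are contractions preserving $[0,1]$-valuedness; the underlying mathematics is unchanged, but the projection formulation makes the reverse inequalities one-liners rather than block computations.
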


We verify that the tree distance actually is
a pseudometric.
To prove the triangle inequality for $\neiDist$ and $\neiDistROp$,
we use that a Markov operator is a contraction on $\LI$ and $\LT$,
respectively \cite[Theorem $13.2$ b)]{EisnerEtAl2015}.
The proof can be found in \Cref{subsec:missingPseudoMetricProof}.

\begin{lemma}
    \label{le:neiDistPseudoMetric}
    $\neiDist$  and $\neiDistROp$ are pseudometrics on $\graphons$.
\end{lemma}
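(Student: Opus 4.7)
The plan is to verify the four pseudometric axioms in turn, for both $\neiDist$ and $\neiDistROp$. Non-negativity is immediate since both distances are infima of non-negative quantities, and reflexivity $\neiDist(U, U) = \neiDistROp(U, U) = 0$ follows by taking $S = \operatorname{Id} \in \markov$, for which the commutator $T_U \circ S - S \circ T_U$ vanishes. The real content lies in symmetry and the triangle inequality.

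For symmetry I would use the Hilbert adjoint. If $S \in \markov$, then $S^*$ is also a Markov operator, and since $T_U$ and $T_W$ are self-adjoint,
\begin{equation*}
    (T_U \circ S - S \circ T_W)^* = S^* \circ T_U - T_W \circ S^* = -(T_W \circ S^* - S^* \circ T_U).
\end{equation*}
For $\neiDistROp$, the operator norm is invariant under sign change and under taking adjoints, so replacing $S$ by $S^*$ (a bijection on $\markov$) gives $\neiDistROp(U, W) = \neiDistROp(W, U)$. For $\neiDist$, the identity $\langle f, A^* g \rangle = \langle g, A f \rangle$ in real $\LT$, combined with renaming $f \leftrightarrow g$, shows that $\sup_{f, g} \lvert \langle f, (T_W \circ S^* - S^* \circ T_U) g \rangle \rvert$ equals $\sup_{f, g} \lvert \langle f, (T_U \circ S - S \circ T_W) g \rangle \rvert$, yielding symmetry.

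The main obstacle is the triangle inequality. Given graphons $U, V, W$ and near-optimal $S_1 \in \markov$ for the pair $(U, V)$ and $S_2 \in \markov$ for $(V, W)$, I would use the composition $S \coloneqq S_1 \circ S_2$ as a candidate for $(U, W)$; this lies in $\markov$ because the Markov axioms are preserved under composition. The key step is the telescoping identity
\begin{equation*}
    T_U \circ S_1 \circ S_2 - S_1 \circ S_2 \circ T_W = (T_U \circ S_1 - S_1 \circ T_V) \circ S_2 + S_1 \circ (T_V \circ S_2 - S_2 \circ T_W),
\end{equation*}
which expresses the required commutator as the sum of the two smaller commutators, each composed with a Markov operator. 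For $\neiDistROp$, submultiplicativity of the operator norm together with the fact, cited in the excerpt as \cite[Theorem $13.2$ b)]{EisnerEtAl2015}, that every $S \in \markov$ satisfies $\normRTT{S} \le 1$, immediately gives the bound.

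For $\neiDist$, I would fix $f, g \colon [0,1] \to [0,1]$ and rewrite the two terms as $\langle S_1^* f, (T_V \circ S_2 - S_2 \circ T_W) g \rangle$ and $\langle f, (T_U \circ S_1 - S_1 \circ T_V)(S_2 g) \rangle$. The subtle point, and the place where the specific structure of Markov operators is used rather than general bounded operators, is that both $S_1^* f$ and $S_2 g$ must again be $[0,1]$-valued to lie in the feasible set of the supremum. This follows from positivity together with $S \allOne = \allOne$: if $0 \le g \le \allOne$, then applying $S$ to $\allOne - g \ge 0$ yields $0 \le S g \le \allOne$, and the same for $S_1^*$ since $S_1^* \in \markov$. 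Each term is therefore bounded by the inner supremum defining $\neiDist$ for $S_1$ on $(U, V)$ and for $S_2$ on $(V, W)$; taking infima over $S_1$ and $S_2$ closes the argument.
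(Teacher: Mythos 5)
Your proposal is correct and follows essentially the same path as the paper: reflexivity via the identity operator, symmetry via Hilbert adjoints and closure of $\markov$ under adjoints, and the triangle inequality via the composition $S_1 \circ S_2$ with the same telescoping decomposition of the commutator, bounding the two summands using that Markov operators (and their adjoints) preserve $[0,1]$-valued functions and are $L_2$-contractions. The only cosmetic difference is that you rederive the fact that a Markov operator maps $[0,1]$-valued functions to $[0,1]$-valued functions directly from positivity and $S\allOne = \allOne$, whereas the paper cites standard lemmas on Markov operators.
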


The Riesz-Thorin Interpolation Theorem
(see, e.g., \cite[Theorem $1.1.1$]{Bergh1976})
allows to prove that
both variants of the tree distance
define the same topology.
The proof of \Cref{le:treeDistInequalitiesSimple} can be found in \Cref{subsec:treeDistanceNorms}.

\begin{lemma}
    \label{le:treeDistInequalitiesSimple}
    Let $U, W \in \graphons$ be graphons.
    Then,
    $\neiDist(U, W) \le \neiDistROp(U, W) \le 4 \neiDist(U, W)^{1/2}$.
\end{lemma}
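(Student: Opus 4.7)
I would split the claim into its two inequalities and handle them separately.

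For $\neiDist(U, W) \le \neiDistROp(U, W)$, the argument is immediate. Fixing $S \in \markov$ and any $f, g \colon [0,1] \to [0,1]$, I would apply Cauchy--Schwarz to get $|\langle f, (T_U \circ S - S \circ T_W) g \rangle| \le \|f\|_2 \|g\|_2 \normRTT{T_U \circ S - S \circ T_W}$; since $[0,1]$-valued functions on the probability space $[0,1]$ satisfy $\|f\|_2, \|g\|_2 \le 1$, the right-hand side is at most $\normRTT{T_U \circ S - S \circ T_W}$. Taking the supremum over $f, g$ and then the infimum over $S$ yields the claim.

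The interesting direction is $\neiDistROp(U, W) \le 4 \neiDist(U, W)^{1/2}$. My plan is to fix $S \in \markov$, set $A := T_U \circ S - S \circ T_W$ and $\nu_S := \sup_{f, g \colon [0,1] \to [0,1]} |\langle f, A g \rangle|$, and then apply the Riesz--Thorin Interpolation Theorem to interpolate the operator norms of $A$ between the endpoints $L^\infty \to L^1$ and $L^1 \to L^\infty$. Interpolating at $\theta = 1/2$ lands precisely at $L^2 \to L^2$ and produces $\normRTT{A} \le \lVert A \rVert_{\infty \to 1}^{1/2} \cdot \lVert A \rVert_{1 \to \infty}^{1/2}$, after which it remains to bound each of the two endpoint norms separately.

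For the $\infty \to 1$ endpoint I would invoke the standard cut-norm equivalence: any $f$ with $\|f\|_\infty \le 1$ decomposes as $f = f^+ - f^-$ with $f^\pm$ both $[0,1]$-valued, and applying this to both $f$ and $g$ together with bilinearity gives $\lVert A \rVert_{\infty \to 1} = \sup_{\|f\|_\infty, \|g\|_\infty \le 1} |\langle f, A g \rangle| \le 4 \nu_S$. For the $1 \to \infty$ endpoint---the crux of the argument---I would exploit that $0 \le U, W \le 1$ implies the kernel operators $T_U, T_W$ map $L^1$ into $L^\infty$ with norm at most $1$ (via $|T_U f(x)| \le \|U\|_\infty \|f\|_1 \le \|f\|_1$), while every Markov operator is a contraction on both $L^1$ and $L^\infty$; hence $\|T_U \circ S\|_{1 \to \infty}, \|S \circ T_W\|_{1 \to \infty} \le 1$, so $\lVert A \rVert_{1 \to \infty} \le 2$ by the triangle inequality.

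Putting everything together gives $\normRTT{A} \le (4 \nu_S)^{1/2} \cdot 2^{1/2} = 2\sqrt{2}\, \nu_S^{1/2} \le 4\, \nu_S^{1/2}$, and taking the infimum over $S \in \markov$ on both sides (using monotonicity of $t \mapsto 4 t^{1/2}$) completes the proof. The main conceptual obstacle is establishing the mixed-norm bound $\lVert A \rVert_{1 \to \infty} \le 2$: it relies crucially on the fact that graphons are bounded so that $T_U$ and $T_W$ automatically smooth $L^1$ inputs into $L^\infty$ outputs, and without this extra endpoint Riesz--Thorin alone cannot produce a bound of the desired shape.
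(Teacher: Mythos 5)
Your overall strategy is the same as the paper's (Lemma~\ref{le:treeDistInequalities} in the appendix): the first inequality via Cauchy--Schwarz, the second via Riesz--Thorin interpolation between the $L^\infty\to L^1$ and $L^1\to L^\infty$ endpoints, followed by the cut-norm decomposition and the bound $\lVert A \rVert_{1\to\infty}\le 2$ from $0 \le U, W \le 1$ and the Markov contraction property. The decomposition of the operator, the choice of endpoints, and the endpoint bounds are all correctly identified.

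However, there is a genuine gap in the interpolation step: the Riesz--Thorin theorem (as stated, e.g., in \Cref{th:rieszThorin}) holds for \emph{complex} $L^p$ spaces. You apply it directly to the real operator norms $\normRTT{\cdot}$, $\lVert\cdot\rVert_{\infty\to1}$, $\lVert\cdot\rVert_{1\to\infty}$ to conclude $\normRTT{A}\le\lVert A\rVert_{\infty\to1}^{1/2}\lVert A\rVert_{1\to\infty}^{1/2}$. This is not licensed by the theorem and is not true in general for real scalars at this configuration of exponents (the mixed endpoints $(\infty,1)$ and $(1,\infty)$ lie outside the region covered by the real Riesz convexity theorem). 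The correct route is to complexify: bound $\normRTT{A}\le\normTT{A}$, apply complex Riesz--Thorin to get $\normTT{A}\le\normOI{A}^{1/2}\normIO{A}^{1/2}$, bound $\normOI{A}\le 2$ directly for complex inputs (this works without loss, exactly as you argue), and then convert the remaining complex $\infty\to1$ norm to a real one, which costs a factor of $2$: $\normIO{A}\le 2\normRIO{A}\le 8\nu_S$. This yields $\normRTT{A}\le\sqrt{2}\cdot(8\nu_S)^{1/2}=4\nu_S^{1/2}$, matching the lemma. Your claimed bound of $2\sqrt{2}\,\nu_S^{1/2}$ is therefore too optimistic; the factor of $2$ you save by skipping complexification is exactly the factor you cannot skip. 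The final conclusion still holds, but the step producing it must be tightened.
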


To be able to apply the results from \Cref{sec:pseudometrics},
we need that the tree distance of two graphons is zero
if and only if their tree homomorphism densities are the same.
Let $U, W \in \graphons$ be graphons.
From the respective definitions, it is not immediately
clear that $\neiDist(U, W) = 0$ or $\neiDistROp(U, W) = 0$
implies $t(T, U) = t(T, W)$ for every tree $T$ since
the infimum over all Markov operators might not be attained.
Here, we can use a continuity argument
as the set of Markov operators is compact
in the weak operator topology \cite[Theorem $13.8$]{EisnerEtAl2015}.
However, we have to take a detour via a third variant
of the tree distance where
compactness in the weak operator topology suffices.
All the details can be found in \Cref{subsec:infimumAttained}.

\begin{lemma}
    \label{le:neiDistZero}
    Let $U, W \in \graphons$ be graphons.
    Then, $\neiDist(U, W) = 0$ if and only if $t(T,U) = t(T,W)$ for every tree $T$.
\end{lemma}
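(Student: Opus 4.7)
The plan is to prove both directions using \Cref{th:fracIsoGraphons} of Grebík and Rocha as a bridge: that theorem identifies equal tree homomorphism densities with the existence of a Markov operator intertwining $T_U$ and $T_W$, which is precisely the object that makes the integrand in \Cref{def:neiDistGraphons} vanish.

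For the direction \enquote{$\Leftarrow$}, if $t(T, U) = t(T, W)$ for every tree $T$, then \Cref{th:fracIsoGraphons} furnishes an $S \in \markov$ with $T_U \circ S = S \circ T_W$. Feeding this $S$ into the infimum in \Cref{def:neiDistGraphons} gives $\neiDist(U, W) \le \sup_{f, g} \lvert \langle f, 0 \rangle \rvert = 0$, so this direction is essentially immediate from the fractional isomorphism theorem.

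For the direction \enquote{$\Rightarrow$}, I would extract an intertwiner as a weak-operator-topology (WOT) limit point of a minimizing sequence. Pick $(S_n)_n \subseteq \markov$ with $\sup_{f, g \colon [0,1] \to [0,1]} \lvert \langle f, (T_U S_n - S_n T_W) g\rangle \rvert \to 0$. Using the WOT-compactness of $\markov$, pass to a subsequence converging in WOT to some $S \in \markov$. For each fixed pair $f, g \in \LI \subseteq \LT$, the self-adjointness of $T_U$ together with WOT-convergence $S_n \to S$ yields $\langle f, T_U S_n g \rangle = \langle T_U f, S_n g\rangle \to \langle T_U f, S g\rangle = \langle f, T_U S g\rangle$ and $\langle f, S_n T_W g\rangle = \langle f, S_n(T_W g)\rangle \to \langle f, S T_W g\rangle$. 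Thus $\langle f, (T_U S - S T_W) g\rangle = 0$ for all $f, g \colon [0,1] \to [0,1]$, and testing against indicator functions of measurable sets forces $T_U \circ S = S \circ T_W$ as operators. Applying \Cref{th:fracIsoGraphons} in the reverse direction then delivers $t(T, U) = t(T, W)$ for every tree $T$.

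The main obstacle is making this WOT-limit step rigorous inside the definition of $\neiDist$ itself. The paper's own remark about a \enquote{detour via a third variant} signals that applying WOT-compactness directly to the supremum defining $\neiDist$ is not entirely clean, most likely because that supremum ranges over the uncountable family of $[0,1]$-valued $f,g$ and because WOT-lower semicontinuity of $S \mapsto T_U \circ S - S \circ T_W$ in both factors simultaneously requires some care. I would therefore first establish the analogue of the statement for an auxiliary pseudometric whose defining functional is manifestly WOT-lower semicontinuous --- for instance one modeled on a Hilbert-space quantity where testing against a countable dense family in the $\LT$-unit ball suffices --- and then transfer the \enquote{$\neiDist(U,W)=0$} case back to vanishing of that auxiliary distance via inequalities in the spirit of \Cref{le:treeDistInequalitiesSimple}. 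Packaging this transfer and verifying that the WOT-limit $S$ remains a Markov operator is where I expect the bulk of the technical work to sit.
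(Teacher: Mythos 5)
Your overall plan is correct and matches the paper's own proof in its essential structure: the easy direction feeds the Markov intertwiner from \Cref{th:fracIsoGraphons} directly into the infimum, and the hard direction extracts an intertwiner via WOT-compactness of $\markov$ (\Cref{th:markovCompact}) and then invokes \Cref{th:fracIsoGraphons} in reverse. Your fallback plan is in fact exactly what the paper does: it introduces the auxiliary distance $\neiDistEu$ (a supremum over the $\LT$-unit ball), proves in \Cref{le:neiDistInfMin} that this infimum is attained via WOT-lower semicontinuity, and transfers back to $\neiDist$ via \Cref{le:treeDistInequalities}; the paper then closes with \Cref{le:operatorIsZero} (Rudin) after polarizing to $f=g$, instead of your density-of-indicators argument, but both are sound.

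Where your hedging is overcautious: the direct WOT argument you sketch in the paragraph before the fallback is already rigorous as written, and actually a bit cleaner than the paper's detour. The worry about the \enquote{uncountable family of $[0,1]$-valued $f,g$} is a non-issue, because you never need lower semicontinuity of the full supremum; you just fix a single pair $(f,g)$ and exchange two limits along the same (sub)sequence $n$, each limit being a scalar limit. Since $\LT$ is separable and $\markov$ is bounded and WOT-closed, WOT restricted to $\markov$ is metrizable, so passing to a convergent subsequence is legitimate, and $\markov$ being WOT-closed guarantees the limit is again a Markov operator. The map $S \mapsto T_U S - S T_W$ is affine, so its WOT-to-WOT continuity is automatic, and $[0,1]$-valued functions lie in $\LT$, so WOT-convergence delivers exactly the pointwise limits you wrote. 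In other words, you could have argued directly with $\neiDist$ and skipped the auxiliary pseudometric entirely; the paper's \enquote{third variant} is a valid but not strictly necessary organizational choice (it does the extra work of showing the infimum is attained for $\neiDistEu$, which your argument also yields as a by-product).
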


The \textit{Koopman operator} $T_\varphi \colon f \mapsto f \circ \varphi$
of a measure-preserving map $\varphi \colon [0,1] \to [0,1]$
is a Markov operator \cite[Example $13.1$, 3)]{EisnerEtAl2015}.
Hence, the tree distance can be seen as the relaxation of the cut distance
obtained by relaxing measure-preserving maps to Markov operators.
In particular, this means that the tree distance is compatible
with the cut distance.
The proof of \Cref{le:neiDistLeCutDist} can be found in
\Cref{subsec:treeDistLeCutDist}.

\begin{lemma}
    \label{le:neiDistLeCutDist}
    \label{le:neiDistCutDistConvergence}
    Let $U, W \in \graphons$ be graphons.
    Then,
    $\neiDist(U, W) \le \dist(U, W)$.
\end{lemma}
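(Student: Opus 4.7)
The plan is to exploit the observation, anticipated in the excerpt immediately before the lemma, that every Koopman operator $T_\varphi \colon f \mapsto f \circ \varphi$ associated with $\varphi \in \measPres$ is a Markov operator and is therefore an admissible choice for $S$ in the infimum defining $\neiDist$. Substituting $S = T_\varphi$ and showing that the resulting supremum equals $\lVert U - W^\varphi \rVert_\square$ will, after taking infima over $\varphi$, immediately yield the claim $\neiDist(U, W) \le \dist(U, W)$.

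The key algebraic step is the intertwining relation $T_\varphi \circ T_W = T_{W^\varphi} \circ T_\varphi$, equivalently $T_{W^\varphi} = T_\varphi \circ T_W \circ T_{\varphi^{-1}}$. I would prove it by a direct change-of-variable computation: applying $T_\varphi T_W T_{\varphi^{-1}}$ to $g$ yields
\begin{equation*}
\int_{[0,1]} W(\varphi(x), y) g(\varphi^{-1}(y)) \dy = \int_{[0,1]} W(\varphi(x), \varphi(z)) g(z) \dz = (T_{W^\varphi} g)(x),
\end{equation*}
using the substitution $z = \varphi^{-1}(y)$ and the fact that $\varphi$ is measure-preserving. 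Rearranging (and using $T_\varphi \circ T_{\varphi^{-1}} = I$) gives the operator identity
\begin{equation*}
T_U \circ T_\varphi - T_\varphi \circ T_W = (T_U - T_{W^\varphi}) \circ T_\varphi.
\end{equation*}

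Because $T_\varphi$ maps $[0,1]$-valued measurable functions to $[0,1]$-valued measurable functions, and $g \mapsto g \circ \varphi$ is in fact a bijection on this class (since $\varphi$ is invertible and measure-preserving), I get
\begin{equation*}
\sup_{f, g \colon [0,1] \to [0,1]} \lvert \langle f, (T_U \circ T_\varphi - T_\varphi \circ T_W) g \rangle \rvert = \sup_{f, g' \colon [0,1] \to [0,1]} \lvert \langle f, T_{U - W^\varphi} g' \rangle \rvert = \lVert U - W^\varphi \rVert_\square,
\end{equation*}
where the last equality is the operator form of the cut norm stated in the preliminaries. Since $T_\varphi \in \markov$, the left-hand side dominates $\neiDist(U, W)$ for every $\varphi$, so taking the infimum over $\varphi \in \measPres$ on the right-hand side gives $\dist(U, W)$ and completes the proof. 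I do not anticipate any serious obstacle; the one point requiring care is to keep the direction of the composition rule $T_\varphi T_\psi = T_{\psi \circ \varphi}$ straight while verifying the intertwining identity, but this bookkeeping is routine.
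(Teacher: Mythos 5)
Your proposal is correct and follows essentially the same route as the paper's own proof: take $S = T_\varphi$ (a Markov operator since Koopman operators are Markov), establish the intertwining identity $T_U \circ T_\varphi - T_\varphi \circ T_W = T_{U - W^\varphi} \circ T_\varphi$, and then use the bijectivity of $g \mapsto g \circ \varphi$ on $[0,1]$-valued measurable functions together with the operator form of the cut distance to conclude. The only difference is cosmetic: you derive the intertwining relation directly via change of variables, while the paper inserts $T_{\varphi^{-1}} \circ T_\varphi$ and regroups, but the resulting identity and the remainder of the argument are the same.
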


With \Cref{le:neiDistZero} and \Cref{le:neiDistCutDistConvergence}
we can apply the theorems of \Cref{sec:pseudometrics} and get both a counting
lemma and an inverse counting lemma for the tree distance.

\begin{corollary}[Counting Lemma for $\treeDist$]
    \label{co:countingNeighborhood}
    Let $\neiDistNo \in \{\neiDist, \neiDistROp\}$.
    For every tree $T$ and every $\epsilon > 0$, there is an $\eta > 0$ such that,
    for all graphons $U, W \in \graphons$, if
    $\neiDistNo(U, W) \le \eta$,
    then
    $\lvert t(T, U) - t(T, W) \rvert \le \epsilon$.
\end{corollary}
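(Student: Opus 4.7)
The plan is to apply the abstract Counting Lemma \Cref{th:countingTrees} with $\someClass = \trees$ and $\someDist = \neiDistNo$. Under this instantiation the conclusion of \Cref{th:countingTrees} is exactly the statement we want, so the entire proof reduces to verifying the two hypotheses of that theorem for each of the two choices $\neiDistNo \in \{\neiDist, \neiDistROp\}$.

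For the compatibility hypothesis, I consider an arbitrary sequence $(U_n)_n$ of graphons with $\dist(U_n, \widetilde{U}) \to 0$. When $\neiDistNo = \neiDist$, compatibility is immediate from \Cref{le:neiDistLeCutDist}, which yields $\neiDist(U_n, \widetilde U) \le \dist(U_n, \widetilde U) \to 0$. When $\neiDistNo = \neiDistROp$, I combine the same inequality with the interpolation bound from \Cref{le:treeDistInequalitiesSimple} to obtain
\begin{equation*}
\neiDistROp(U_n, \widetilde U) \le 4 \, \neiDist(U_n, \widetilde U)^{1/2} \le 4 \, \dist(U_n, \widetilde U)^{1/2} \xrightarrow{n \to \infty} 0.
\end{equation*}

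For the second hypothesis, I must check that $\neiDistNo(U,W) = 0$ forces $t(T,U) = t(T,W)$ for every tree $T$. For $\neiDistNo = \neiDist$ this is exactly \Cref{le:neiDistZero}. For $\neiDistNo = \neiDistROp$, the inequality $\neiDist \le \neiDistROp$ from \Cref{le:treeDistInequalitiesSimple} shows that $\neiDistROp(U,W) = 0$ implies $\neiDist(U,W) = 0$, reducing again to \Cref{le:neiDistZero}.

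There is essentially no technical obstacle in this corollary; all the substantive work was already carried out in \Cref{sec:pseudometrics} (the compactness-based derivation of the general counting lemma) and in the preceding lemmas of \Cref{sec:trees} (establishing $\neiDist \le \dist$, the characterization of zero distance via tree homomorphism densities, and the Riesz--Thorin-style interpolation relating $\neiDist$ and $\neiDistROp$). The role of this corollary is simply to assemble these pieces for the specific class $\someClass = \trees$, and both variants of the tree distance are handled uniformly by combining \Cref{le:treeDistInequalitiesSimple} with the results for $\neiDist$.
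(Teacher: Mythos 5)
Your proof is correct and follows exactly the paper's intended route: it instantiates \Cref{th:countingTrees} with $\someClass = \trees$ and verifies its two hypotheses via \Cref{le:neiDistLeCutDist} (compatibility) and \Cref{le:neiDistZero} (zero distance characterization), with \Cref{le:treeDistInequalitiesSimple} used to transfer both properties to $\neiDistROp$. The paper merely gestures at this derivation in the sentence preceding the corollary; you spell out the $\neiDistROp$ case explicitly, which the paper leaves implicit, but there is no substantive difference in approach.
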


\begin{corollary}[Inverse Counting Lemma for $\treeDist$]
    \label{co:invCountingNeighborhood}
    Let $\neiDistNo \in \{\neiDist, \neiDistROp\}$.
    For every $\epsilon > 0$, there are $k > 0$ and $\eta > 0$ such that,
    for all graphons $U, W \in \graphons$,
    if
    $\lvert t(T, U) - t(T, W) \rvert \le \eta$
    for every tree $T$ on $k$ vertices, then
    $\neiDistNo(U, W) \le \epsilon$.
\end{corollary}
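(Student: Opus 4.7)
The plan is to derive Corollary~\ref{co:invCountingNeighborhood} as a direct application of the general Inverse Counting Lemma for graphon pseudometrics, Theorem~\ref{th:invCountingTrees}, instantiated with $\someClass = \trees$ and $\someDist = \neiDistNo$. Once both hypotheses of Theorem~\ref{th:invCountingTrees} are verified for $\neiDistNo \in \{\neiDist, \neiDistROp\}$, the conclusion of that theorem is literally the statement we want.

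First I would check hypothesis~(1), compatibility with $\dist$. For $\neiDistNo = \neiDist$, this is immediate from Lemma~\ref{le:neiDistLeCutDist}: if $\dist(U_n, \widetilde{U}) \to 0$, then $\neiDist(U_n, \widetilde{U}) \le \dist(U_n, \widetilde{U}) \to 0$. For $\neiDistNo = \neiDistROp$, I would chain this with Lemma~\ref{le:treeDistInequalitiesSimple}: since $\neiDistROp(U_n,\widetilde{U}) \le 4\,\neiDist(U_n,\widetilde{U})^{1/2}$, convergence of $\dist$ still forces convergence of $\neiDistROp$ to $0$.

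Next I would verify hypothesis~(2), that equal tree densities imply zero distance. For $\neiDist$, this is one direction of Lemma~\ref{le:neiDistZero}. For $\neiDistROp$, I would again invoke Lemma~\ref{le:treeDistInequalitiesSimple}: the two-sided inequality $\neiDist(U,W) \le \neiDistROp(U,W) \le 4\,\neiDist(U,W)^{1/2}$ shows that $\neiDist(U,W) = 0$ if and only if $\neiDistROp(U,W) = 0$, so Lemma~\ref{le:neiDistZero} also characterizes the null set of $\neiDistROp$ as graphons with identical tree homomorphism densities.

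With both hypotheses in place, Theorem~\ref{th:invCountingTrees} yields directly: for every $\epsilon > 0$, there exist $k > 0$ and $\eta > 0$ such that $\lvert t(T, U) - t(T, W) \rvert \le \eta$ for all trees $T$ on at most $k$ vertices forces $\neiDistNo(U, W) \le \epsilon$. There is no real obstacle here — the compactness-based machinery of Section~\ref{sec:pseudometrics} has been engineered precisely so that this corollary is a one-line consequence once the tree distance is shown to be compatible with $\dist$ and to have the correct zero set. The only mildly delicate point is that the argument for $\neiDistROp$ is not direct but passes through the topological equivalence with $\neiDist$ supplied by Lemma~\ref{le:treeDistInequalitiesSimple}.
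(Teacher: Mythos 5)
Your proposal is correct and follows exactly the route the paper intends: the paper states just before the corollaries that Lemma~\ref{le:neiDistZero} and Lemma~\ref{le:neiDistCutDistConvergence} allow one to apply Theorem~\ref{th:invCountingTrees}, and you correctly fill in the implicit step for $\neiDistROp$ via the topological equivalence from Lemma~\ref{le:treeDistInequalitiesSimple}. (The only unmentioned bookkeeping is that $\neiDistNo$ must first be a pseudometric, which Lemma~\ref{le:neiDistPseudoMetric} supplies.)
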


\section{Homomorphisms from Paths}
\label{sec:paths}

In this section, we define the path distance of graphons.
We prove a quantitative counting lemma for it (\Cref{th:countingLemmaPathDistT})
and only rely on the results from \Cref{sec:pseudometrics}
to obtain an inverse counting lemma.
To this end, we
we prove that the graphons of distance zero
are precisely those with the same path homomorphism densities
(\Cref{le:pathDistTZero})
and that the path distance is compatible with the cut distance
(\Cref{le:pathDistTLeNeiDistT}).
Since there is no existing characterization of graphons
with the same path homomorphism densities that we can rely on,
we first generalize the result of Dell, Grohe, and Rattan
to graphons (\Cref{th:pathHomsGraphons}).

\subsection{Path Densities and Graphons}
\label{subsec:pathsGraphons}
Dell, Grohe, and Rattan have shown the surprising fact that $G$ and $H$
have the same path homomorphism counts if and only if the system $\Fiso$
has a real solution \cite{Dell2018}.
We need a generalization of their characterization to graphons
in order to define the path distance of graphons and
apply the results from \Cref{sec:pseudometrics}.
If two graphons $U, W \in \graphons$ have the same path homomorphism
densities, the proof of \Cref{th:pathHomsGraphons} yields an operator
$S \colon \LT \to \LT$ such that $S(\allOne) = \allOne$ and $S^*(\allOne) = \allOne$,
which generalizes the result of \cite{Dell2018} in a straight-forward fashion.
An important detail is that the proof also yields that $S$ is an $L_2$-contraction;
this guarantees that the path distance
satisfies the triangle inequality, i.e.,
that it is a pseudometric in the first place.
For the sake of brevity,
we call an operator $S \colon \LT \to \LT$ a \textit{signed Markov operator} if
$S$ is an $L_2$-contraction, i.e., $\normT{S f} \le \normT{f}$ for every $f \in \LT$,
$S(\allOne) = \allOne$, and
$S^*(\allOne) = \allOne$.
Let $\signedMarkovs$ denote the set of all signed Markov operators.
It is easy to see that $\signedMarkovs$ is closed under composition and
Hilbert adjoints.
\begin{theorem}
    \label{th:pathHomsGraphons}
    Let $U, W \in \graphons$.
    There is a signed Markov operator $S \colon \LT \to \LT$
    such that $T_U \circ S = S \circ T_W$ if and only if
    $t(P, U) = t(P, W)$ for every path $P$.
\end{theorem}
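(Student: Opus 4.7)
The forward direction is a short computation. Using self-adjointness of $T_W$, one unfolds the homomorphism density to obtain $t(P_k, W) = \langle \allOne, T_W^{k-1}\allOne\rangle$ for the path $P_k$ on $k$ vertices. From $T_U \circ S = S \circ T_W$ we get $T_U^{k-1} \circ S = S \circ T_W^{k-1}$ by induction, and then $S\allOne = \allOne$ together with $S^*\allOne = \allOne$ allows us to slide $S$ across:
\[
t(P_k, U) = \langle \allOne, T_U^{k-1}\allOne\rangle = \langle \allOne, T_U^{k-1}S\allOne\rangle = \langle S^*\allOne, T_W^{k-1}\allOne\rangle = t(P_k, W).
\]

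For the converse, my plan is to construct $S$ as a partial isometry between two Krylov-type subspaces. Let $V_W \subseteq \LT$ denote the closed linear span of $\{T_W^k \allOne : k \ge 0\}$, and define $V_U$ analogously. By self-adjointness of the two operators, the assumption $t(P_{i+j+1}, U) = t(P_{i+j+1}, W)$ for all $i,j \ge 0$ translates into the equality of Gram matrices $\langle T_U^i\allOne, T_U^j\allOne\rangle = \langle T_W^i\allOne, T_W^j\allOne\rangle$. This identity guarantees that the assignment $T_W^k\allOne \mapsto T_U^k\allOne$ extends by linearity to a well-defined isometry on the linear span, and by continuity to an isometry $S_0 \colon V_W \to V_U$. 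I would then define $S \colon \LT \to \LT$ by letting $S$ act as $S_0$ on $V_W$ and as $0$ on $V_W^\perp$; the result is a partial isometry, hence in particular an $L_2$-contraction.

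It remains to verify the two boundary conditions and the intertwining. We have $S\allOne = \allOne$ directly since $\allOne = T_W^0\allOne$. For $S^*\allOne = \allOne$, the identity $\langle Sf, \allOne\rangle = \langle f, \allOne\rangle$ reduces on the generators of $V_W$ to a special case of the Gram identity above, and on $V_W^\perp$ both sides vanish because $\allOne \in V_W$. The intertwining $T_U \circ S = S \circ T_W$ holds on $V_W$ because both sides map $T_W^k\allOne$ to $T_U^{k+1}\allOne$.

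The main obstacle is the orthogonal complement. The key observation is that $V_W$ is $T_W$-invariant by construction, and since $T_W$ is self-adjoint, $V_W^\perp$ is likewise $T_W$-invariant. Hence for $f \in V_W^\perp$ we have $T_W f \in V_W^\perp$, giving $S T_W f = 0 = T_U S f$ and completing the intertwining on all of $\LT$. This also clarifies why only a \emph{signed} Markov operator is obtained: the isometric extension is not required to preserve positivity of functions, and in general will not.
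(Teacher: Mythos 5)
Your proof is correct, and it takes a genuinely different route from the paper. The paper's proof invokes the Spectral Theorem for compact self-adjoint operators to write $\allOne$ as a sum of orthogonal eigenfunctions of $T_U$ and of $T_W$, then uses a non-trivial countable interpolation lemma (cited there as Lovász, Prop.\ A.21, extending Dell--Grohe--Rattan's finite interpolation) to match eigenvalues and corresponding norms, and finally constructs $S$ explicitly from the matched eigenfunction data. Your Krylov-subspace argument sidesteps both the spectral theorem and the interpolation lemma: self-adjointness of $T_U$ and $T_W$ turns the hypothesis into the Gram identity
\[
\langle T_U^i\allOne, T_U^j\allOne\rangle = \langle \allOne, T_U^{i+j}\allOne\rangle = \langle \allOne, T_W^{i+j}\allOne\rangle = \langle T_W^i\allOne, T_W^j\allOne\rangle,
\]
which is exactly what is needed to make $T_W^k\allOne \mapsto T_U^k\allOne$ a well-defined isometry on the finite span, hence by continuity on $V_W$; extending by zero and using that $V_W^\perp$ is $T_W$-invariant (again by self-adjointness) completes the intertwining, and the three defining conditions of a signed Markov operator fall out immediately (partial isometry $\Rightarrow$ $L_2$-contraction, $\allOne \in V_W$ handles the unit conditions). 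In fact the two constructions produce the same operator: $V_W$ is precisely the closed span of the paper's ``useful'' eigenfunctions $g_i$ (since the spectral measure of $\allOne$ with respect to a compact self-adjoint operator is purely atomic), and your partial isometry sends $g_i \mapsto f_i$, which is the paper's formula. What your route buys is that it is more self-contained, does not use compactness of $T_W$, and would apply verbatim to any pair of bounded self-adjoint operators with matching moment sequences $\langle\allOne, T^k\allOne\rangle$; what the paper's route buys is an explicit eigenfunction description of $S$ and a structural statement (matching of eigenvalues and weights) that may be of independent interest.
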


Homomorphism densities from paths can be expressed in terms of operator powers.
For $\ell \ge 0$, let $P_\ell$ denote the path of length $\ell$.
Then, for a graphon $U$, we have
\begin{align*}
    t(P_\ell, U) = \int_{[0,1]^{\ell + 1}} \prod_{i \in [\ell]} U(x_{i}, x_{i+1}) \prod_{i \in [\ell + 1]} \dx_i = \langle \allOne, T_U^{\ell} \allOne \rangle
\end{align*}
for every $\ell \ge 0$.
The proof of \Cref{th:pathHomsGraphons}
utilizes the Spectral Theorem
for compact operators on Hilbert spaces
to express $\allOne$ as a sum of orthogonal eigenfunctions.
For a kernel $W \in \kernels$, $T_W \colon \LT \to \LT$ is a Hilbert-Schmidt
operator and, hence, compact \cite{Lovasz2012}.
Since $\LT$ is separable and $T_W$ is compact and self-adjoint,
the Spectral Theorem yields that there is a
countably infinite orthonormal basis $\{f'_i\}$ of $\LT$ consisting of eigenfunctions
of $T_W$ with the corresponding multiset of eigenvalues $\{\lambda_n\} \subseteq \R$ such that
$\lambda_n \xrightarrow{n \to \infty} 0$ (see, e.g., \cite{DunfordSchwartz1988}).
If graphons $U$ and $W$ have the same path homomorphism densities,
an interpolation argument yields that
the lengths of the eigenvectors in the decomposition of $\allOne$
and their eigenvalues have to be the same.
Then, one can define the operator $S$ from these eigenfunctions of $U$ and $W$.
The detailed proof can be found in \Cref{subsec:pathDensitiesProof}.

\subsection{The Path Distance}
\label{subsec:pathDistance}

We define the \textit{path distance} of graphons
can analogously to the tree distance.
However, as the proof \Cref{th:pathHomsGraphons} does not yield that
the resulting operator is an $L_\infty$-contraction, we are limited
in our choice of norms.
\begin{definition}[Path Distance]
    \label{def:pathDistTGraphons}
    Let $U,W \in \graphons$ be graphons.
    Then, define
    \begin{equation*}
        \pathDistROp(U, W) \coloneqq \inf_{S \in \signedMarkovs}  \normRTT{T_U \circ S - S \circ T_W}.
    \end{equation*}
\end{definition}

One can verify that this definition specializes
to the one for graphs from \Cref{subsec:pathDistanceGraphs}.
The proof can be found in \Cref{subsec:distancesCoincide}.
\begin{lemma}
    \label{le:steppingFunctionPathDistSpec}
    Let $G$ and $H$ be graphs.
    Then, $\pathDistSpec(G, H) = \pathDistROp(W_G, W_H)$.
\end{lemma}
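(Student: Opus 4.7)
The plan is to prove the two inequalities separately via an explicit correspondence between $V(G) \times V(H)$ matrices and operators on $\LT$ that respect the block structure of $W_G$ and $W_H$. Throughout, set $n := \vs(G)$, $m := \vs(H)$; let $I_1, \dots, I_n$ and $J_1, \dots, J_m$ be the partitions of $[0,1]$ induced by $W_G$ and $W_H$ (so $|I_i| = 1/n$ and $|J_j| = 1/m$); and let $P_I, P_J$ denote the orthogonal projections in $\LT$ onto the subspaces of functions constant on these respective partitions. The structural facts I will repeatedly use are $T_{W_G} = T_{W_G} P_I = P_I T_{W_G}$ and $T_{W_H} = T_{W_H} P_J = P_J T_{W_H}$, which follow directly from the fact that $W_G$ and $W_H$ are constant on each block of the corresponding product partitions.

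For the inequality $\pathDistROp(W_G, W_H) \le \pathDistSpec(G, H)$, I attach to each matrix $X \in \R^{V(G) \times V(H)}$ the step operator $S_X \colon \LT \to \LT$ with kernel $K_X(x,y) = nm \cdot X_{ij}$ for $(x,y) \in I_i \times J_j$. When $X \in \mathcal{S}(G, H)$, I verify that $S_X \in \signedMarkovs$: the row and column sum constraints give $S_X \allOne = \allOne$ and $S_X^* \allOne = \allOne$, while the $L_2$-contraction property follows from $S_X = S_X P_J$ together with the spectral bound $\normT{Xy} \le \normT{y}/\sqrt{nm}$ that is built into the definition of $\mathcal{S}(G,H)$ (the factor $\sqrt{nm}$ arising from $|I_i||J_j| = 1/(nm)$). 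A direct block-wise computation on $f = \sum_j f_j \mathbf{1}_{J_j}$ shows that $(T_{W_G} S_X - S_X T_{W_H}) f$ is the $I$-block function with value $\tfrac{1}{m} [(m\,AX - n\,XB)f]_i$ on $I_i$; since both $T_{W_G} S_X$ and $S_X T_{W_H}$ factor through $P_J$, the full operator norm is attained on $J$-block inputs, yielding the exact identity $\normRTT{T_{W_G} S_X - S_X T_{W_H}} = \tfrac{1}{\sqrt{nm}} \normT{m\,AX - n\,XB}$. Taking the infimum over $X$ gives the inequality.

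For the reverse inequality, given any $S \in \signedMarkovs$, I set $\tilde{S} := P_I S P_J$ and define the matrix $X^S$ by $X^S_{ij} := \int_{I_i} (S \mathbf{1}_{J_j}) \dx$; one checks that $\tilde{S} = S_{X^S}$, and that $X^S$ lies in $\mathcal{S}(G,H)$: the row/column sum conditions come from $S \allOne = \allOne$ and $S^* \allOne = \allOne$, while the spectral contraction follows from $\tilde{S}$ being an $L_2$-contraction (as a composition of $L_2$-contractions, using $\normRTT{P_I} = \normRTT{P_J} = 1$). The crucial identity is
\begin{equation*}
    T_{W_G} \tilde{S} - \tilde{S} T_{W_H} = P_I (T_{W_G} S - S T_{W_H}) P_J,
\end{equation*}
obtained by expanding both sides with the block-structural facts above. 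Passing to norms and reusing the matching identity from the previous paragraph with $X = X^S$ gives $\tfrac{1}{\sqrt{nm}} \normT{m\,AX^S - n\,X^S B} \le \normRTT{T_{W_G} S - S T_{W_H}}$; taking the infimum over $S$ closes the argument. The main obstacle I anticipate is not conceptual but bookkeeping: once the factorization identities for $T_{W_G}$ and $T_{W_H}$ are in place, the cross-identity and the matching of norms both fall out easily, but keeping the scaling factors $n$ and $m$ (arising from $|I_i| = 1/n$, $|J_j| = 1/m$, and the normalization built into $\mathcal{S}(G,H)$) consistent throughout the two constructions needs care.
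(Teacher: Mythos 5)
Your proof is correct, and it takes a genuinely different route from the paper's. The paper omits the proof of this lemma but says it is ``essentially analogous'' to the proof of the second equality in \Cref{le:steppingFunctionNeiDist}, which it supplements with \Cref{le:signedFracOverlayToSignedMarkov} and \Cref{le:signedMarkovToSignedFracOverlay}. That route picks explicit test functions $g_v = \sum_j \sqrt{m}\, v_j \mathbf{1}_{J_j}$ (resp.\ vectors $v^g$) and runs Cauchy--Schwarz at each step, obtaining an inequality $\normT{(T_{W_G} S - S T_{W_H}) g_v} \ge \normT{\tfrac{1}{\sqrt{nm}}(mAX_S - nX_SB)v}$ directly with the original $S$. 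You instead introduce the orthogonal projections $P_I, P_J$ onto block-constant functions (these are the conditional expectation operators for the partitions), use the factorizations $T_{W_G} = P_I T_{W_G} = T_{W_G} P_I$ and $T_{W_H} = P_J T_{W_H} = T_{W_H} P_J$, and replace $S$ by $\tilde S = P_I S P_J = S_{X^S}$; the identity $T_{W_G}\tilde S - \tilde S T_{W_H} = P_I(T_{W_G} S - S T_{W_H})P_J$ then gives the bound via $\normRTT{P_I} = \normRTT{P_J} = 1$, and the remaining step is an \emph{exact} equality between the operator norm of a step operator and $\tfrac{1}{\sqrt{nm}}\normT{m\,AX - n\,XB}$, which you reuse in both directions. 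The trade-off: your argument is more structured and makes the block reduction conceptual rather than calculational (the Cauchy--Schwarz steps in the paper are precisely the statement that $P_I, P_J$ are contractive projections), at the small cost of having to verify the identity $\tilde S = S_{X^S}$ and the translation between $\tilde S$ being an $L_2$-contraction and the matrix bound $\normT{X^S y} \le \normT{y}/\sqrt{nm}$ — both of which you correctly note and which do work out. Either route would serve the paper; yours reads more cleanly once the factorization facts are stated.
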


The proof that $\pathDistROp$ is a pseudometric
can be found in \Cref{subsec:missingPseudoMetricProof}.
\begin{lemma}
    \label{le:pathDistTPseudoMetric}
    $\pathDistROp$ is a pseudometric on $\graphons$.
\end{lemma}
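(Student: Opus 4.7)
The plan is to verify each of the three defining properties of a pseudometric in turn. The three ingredients I will need are: (i) the identity operator $I$ lies in $\signedMarkovs$, (ii) $\signedMarkovs$ is closed under composition and Hilbert adjoints (noted in the excerpt just after \Cref{th:pathHomsGraphons}), and (iii) the operator norm $\normRTT{\cdot}$ is submultiplicative, preserved under taking adjoints, and bounded by $1$ on any element of $\signedMarkovs$ (since by definition a signed Markov operator is an $L_2$-contraction, i.e.\ has operator norm at most $1$).

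\textbf{Zero on the diagonal.} For any graphon $U \in \graphons$, choose $S = I \in \signedMarkovs$. Then $T_U \circ S - S \circ T_U = 0$, so $\pathDistROp(U, U) = 0$.

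\textbf{Symmetry.} Given $U, W \in \graphons$ and any $S \in \signedMarkovs$, I would take Hilbert adjoints: since $T_U$ and $T_W$ are self-adjoint,
\[
    (T_U \circ S - S \circ T_W)^* \;=\; S^* \circ T_U - T_W \circ S^* \;=\; -(T_W \circ S^* - S^* \circ T_U).
\]
Because taking adjoints preserves the operator norm and $S^* \in \signedMarkovs$, the infimum defining $\pathDistROp(U, W)$ and the infimum defining $\pathDistROp(W, U)$ range over the same set of values, hence coincide.

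\textbf{Triangle inequality.} This is the only step that requires an algebraic manipulation; it is the place where the contraction property of signed Markov operators is genuinely used. Fix $U, V, W \in \graphons$ and pick arbitrary $S_1, S_2 \in \signedMarkovs$. I would consider the composite $S := S_1 \circ S_2$, which lies in $\signedMarkovs$ by the closure property, and perform the standard ``add and subtract'' trick:
\[
    T_U \circ S - S \circ T_W \;=\; (T_U \circ S_1 - S_1 \circ T_V) \circ S_2 \;+\; S_1 \circ (T_V \circ S_2 - S_2 \circ T_W).
\]
Applying submultiplicativity of $\normRTT{\cdot}$ together with $\normRTT{S_1}, \normRTT{S_2} \le 1$ yields
\[
    \normRTT{T_U \circ S - S \circ T_W} \;\le\; \normRTT{T_U \circ S_1 - S_1 \circ T_V} + \normRTT{T_V \circ S_2 - S_2 \circ T_W}.
\]
Taking the infimum first over $S_1$ and then over $S_2$ gives $\pathDistROp(U, W) \le \pathDistROp(U, V) + \pathDistROp(V, W)$.

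The only place where the argument could go wrong is in the triangle inequality, and the obstacle there is precisely why the $L_2$-contraction condition was baked into the definition of $\signedMarkovs$: without $\normRTT{S_i} \le 1$, the ``cross terms'' in the decomposition above would acquire unwanted factors and the bound would fail. Everything else (self-adjointness of kernel operators, closure of $\signedMarkovs$ under composition and adjoints, $I \in \signedMarkovs$) is supplied verbatim by the text preceding the lemma.
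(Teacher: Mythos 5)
Your proof is correct and follows essentially the same approach as the paper: identity for reflexivity, the adjoint argument for symmetry, and the add-and-subtract decomposition through $S_1 \circ S_2$ combined with submultiplicativity and the $L_2$-contraction bound $\normRTT{S_i} \le 1$ for the triangle inequality. The small observation about the extra minus sign in the symmetry step is harmless since the operator norm is unchanged by negation, and the paper handles it equivalently by a sign flip inside the norm.
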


To apply the theorems of \Cref{sec:pseudometrics},
we need that two graphons have distance zero in the path distance if and only if
their path homomorphism densities are the same and that $\pathDistROp$
is compatible with $\dist$.
For the former, we deviate from the way we proceeded for the tree distance
as we actually can prove a quantitative counting lemma.

\begin{theorem}[Counting Lemma for Paths]
    \label{th:countingLemmaPathsGraphons}
    Let $P$ be a path, and let $U,W \in \graphons$ be graphons.
    Then, for every operator $S \colon \LT \to \LT$ with $S(\allOne) = \allOne$
    and $S^*(\allOne) = \allOne$,
    \begin{equation*}
        \lvert t(P, U) - t(P, W) \rvert \le \es(P) \cdot \sup_{f, g \colon [0,1] \to [0,1]} \lvert \langle f, (T_U \circ S - S \circ T_W) g \rangle \rvert.
    \end{equation*}
\end{theorem}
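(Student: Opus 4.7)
The plan is a standard telescoping argument built around the commutator $T_U \circ S - S \circ T_W$, mimicking how the classical counting lemma (\Cref{th:countingLemmaGraphons}) is proven via telescoping in variables, except here we telescope in operator powers.

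First I would recall the representation $t(P_\ell, U) = \langle \allOne, T_U^{\ell} \allOne \rangle$ already recorded in \Cref{subsec:pathsGraphons}, and write
\begin{equation*}
    t(P_\ell, U) - t(P_\ell, W) = \langle \allOne, T_U^{\ell} \allOne \rangle - \langle \allOne, T_W^{\ell} \allOne \rangle.
\end{equation*}
Using $S \allOne = \allOne$ on the right factor of the first term and $S^* \allOne = \allOne$ on the left factor of the second term, both inner products become $\langle \allOne, T_U^{\ell} S\allOne\rangle$ and $\langle \allOne, S T_W^{\ell} \allOne\rangle$ respectively, so the difference equals $\langle \allOne, (T_U^{\ell} S - S T_W^{\ell}) \allOne \rangle$.

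Next I would apply the operator telescoping identity
\begin{equation*}
    T_U^{\ell} \circ S - S \circ T_W^{\ell} = \sum_{i=0}^{\ell-1} T_U^{i} \circ (T_U \circ S - S \circ T_W) \circ T_W^{\ell-1-i},
\end{equation*}
which is immediate by writing the right-hand side as a collapsing sum. Plugging this into the inner product and using the self-adjointness of $T_U$ to move the left-hand $T_U^{i}$ across, I obtain
\begin{equation*}
    t(P_\ell, U) - t(P_\ell, W) = \sum_{i=0}^{\ell-1} \langle T_U^{i} \allOne,\, (T_U \circ S - S \circ T_W)\, T_W^{\ell-1-i} \allOne \rangle.
\end{equation*}

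The final step is to observe that, since $0 \le U \le 1$, the operator $T_U$ maps any $[0,1]$-valued function into a $[0,1]$-valued function (by a straightforward pointwise bound on the defining integral), and the same holds for $T_W$. Thus each $f_i \coloneqq T_U^{i} \allOne$ and $g_i \coloneqq T_W^{\ell-1-i} \allOne$ is a function $[0,1] \to [0,1]$. Each summand is therefore bounded in absolute value by the supremum appearing on the right-hand side of the claim, and summing the $\ell = \es(P)$ terms gives the desired inequality. There is no substantial obstacle here; the only point requiring care is the $[0,1]$-valuedness of the iterates $T_U^{i} \allOne$ and $T_W^{\ell-1-i} \allOne$, which is exactly what allows the pointwise test-function supremum (rather than, e.g., an $L_2$ norm) to control the whole expression.
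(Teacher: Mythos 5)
Your proposal is correct and takes essentially the same route as the paper's proof: the same identification $t(P_\ell,\cdot)=\langle\allOne,T^\ell\allOne\rangle$, the same insertion of $S\allOne=\allOne$ and $S^*\allOne=\allOne$, the same operator telescoping of $T_U^\ell\circ S - S\circ T_W^\ell$ into $\es(P)$ commutator terms, and the same use of self-adjointness of $T_U$ to move powers across the inner product. If anything, you are slightly more explicit than the paper about why the iterates $T_U^i\allOne$ and $T_W^{\ell-1-i}\allOne$ remain $[0,1]$-valued, which is the justification for bounding each summand by the supremum over $[0,1]$-valued test functions.
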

\begin{proof}
    Let $\ell \in \N$ and $S \in \signedMarkovs$.
    Then,
    \begin{align*}
        \lvert t(P_\ell, U) - t(P_\ell, W) \rvert &= \lvert \langle \allOne, T_U^\ell (S \allOne) \rangle - \langle (S^* \allOne), T_W^\ell \allOne \rangle \rvert\\
&= \big\lvert \sum_{i \in [\ell]} \left( \langle \allOne, (T_U^{\ell - i + 1} \circ S \circ T_W^{i - 1}) \allOne \rangle - \langle \allOne, (T_U^{\ell - i} \circ S \circ T_W^{i}) \allOne \rangle \right) \big\rvert\\
&= \big\lvert \sum_{i \in [\ell]} \langle T_U^{\ell - 1} \allOne, (T_U \circ S - S \circ T_W) (T_W^{i - 1} \allOne) \rangle \big\rvert\\
&\le \ell \cdot \sup_{f, g \colon [0,1] \to [0,1]} \lvert \langle f, (T_U \circ S - S \circ T_W) g \rangle \rvert.
    \end{align*}
\end{proof}

\Cref{th:countingLemmaPathsGraphons} suggests that,
for graphons $U,W \in \graphons$, one should define
\begin{equation*}
    \pathDist(U, W) \coloneqq \inf_{S \in \signedMarkovs} \sup_{f, g \colon [0,1] \to [0,1]} \lvert \langle f, (T_U \circ S - S \circ T_W) g \rangle \rvert.
\end{equation*}
Then, we have
$\lvert t(P, U) - t(P, W) \rvert \le \es(P) \cdot \pathDist(U, W)$
for every path $P$.
However, as mentioned before, we cannot verify
that $\pathDist$ is a pseudometric as the operator $S$ might not be an
$L_\infty$-contraction.

\begin{corollary}[Counting Lemma for $\pathDistROp$]
    \label{th:countingLemmaPathDistT}
    Let $P$ be a path, and let $U,W \in \graphons$ be graphons.
    Then,
        $\lvert t(P, U) - t(P, W) \rvert \le \es(P) \cdot \pathDistROp(U, W)$.
\end{corollary}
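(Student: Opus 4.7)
The plan is to derive this corollary directly from Theorem \ref{th:countingLemmaPathsGraphons}. The hypothesis of that theorem requires only an operator $S$ with $S(\allOne) = \allOne$ and $S^*(\allOne) = \allOne$, which is satisfied by every signed Markov operator $S \in \signedMarkovs$ by definition. So for any such $S$, the theorem already gives
\[
    \lvert t(P, U) - t(P, W) \rvert \le \es(P) \cdot \sup_{f, g \colon [0,1] \to [0,1]} \lvert \langle f, (T_U \circ S - S \circ T_W) g \rangle \rvert.
\]

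The next step is to replace the bilinear-form supremum on the right by the operator norm $\normRTT{T_U \circ S - S \circ T_W}$. Since any $f, g \colon [0,1] \to [0,1]$ satisfies $\normT{f}, \normT{g} \le 1$, Cauchy–Schwarz gives
\[
    \lvert \langle f, (T_U \circ S - S \circ T_W) g \rangle \rvert \le \normT{f} \cdot \normT{(T_U \circ S - S \circ T_W) g} \le \normRTT{T_U \circ S - S \circ T_W},
\]
and taking the supremum over such $f, g$ preserves the bound. Combining this with the previous display yields $\lvert t(P, U) - t(P, W) \rvert \le \es(P) \cdot \normRTT{T_U \circ S - S \circ T_W}$ for every $S \in \signedMarkovs$.

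Finally, taking the infimum over all $S \in \signedMarkovs$ on the right-hand side gives exactly $\es(P) \cdot \pathDistROp(U, W)$ by \Cref{def:pathDistTGraphons}. There is no real obstacle here: the whole proof is a one-line consequence of \Cref{th:countingLemmaPathsGraphons} together with the elementary fact that passing from the $L_\infty$-to-$L_1$ style bilinear supremum over $[0,1]$-valued functions to the $L_2 \to L_2$ operator norm only weakens the bound.
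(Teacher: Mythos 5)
Your proof is correct and follows essentially the same route as the paper: apply \Cref{th:countingLemmaPathsGraphons} to a signed Markov operator, use Cauchy--Schwarz together with $\normT{f}, \normT{g} \le 1$ to bound the bilinear supremum by the $L_2 \to L_2$ operator norm, and then take the infimum over $S \in \signedMarkovs$.
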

\begin{proof}
    By the Cauchy-Schwarz inequality, we have
    \begin{align*}
        \sup_{f, g \colon [0,1] \to [0,1]} \lvert \langle f, (T_U \circ S - S \circ T_W) g \rangle \rvert &\le \sup_{f, g \colon [0,1] \to [0,1]} \normT{f} \normT{(T_U \circ S - S \circ T_W) g}\\
        &\le \sup_{g \colon [0,1] \to [0,1]} \normRTT{T_U \circ S - S \circ T_W} \normT{g}\\
        &\le \normRTT{T_U \circ S - S \circ T_W}
    \end{align*}
    for every operator $S \colon \LT \to \LT$.
    Hence, the statement follows from \Cref{th:countingLemmaPathsGraphons}.
\end{proof}

With this explicit counting lemma, we obtain that two graphons
have distance zero in the path distance if and only if their
path homomorphism densities are the same.

\begin{lemma}
    \label{le:pathDistTZero}
    Let $U, W \in \graphons$ be graphons.
    Then, $\pathDistROp(U, W) = 0$ if and only if $t(P,U) = t(P,W)$ for every path $P$.
\end{lemma}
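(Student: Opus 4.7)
The proof plan is to handle the two implications separately, each by invoking one of the results already established in this section, so essentially no new technical work is needed.

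For the forward direction, I would appeal directly to the Counting Lemma for $\pathDistROp$ (\Cref{th:countingLemmaPathDistT}). If $\pathDistROp(U, W) = 0$, then for every path $P$ we have $\lvert t(P, U) - t(P, W) \rvert \le \es(P) \cdot \pathDistROp(U, W) = 0$, which immediately yields $t(P, U) = t(P, W)$. This is the easy direction and requires no further argument.

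For the backward direction, the plan is to invoke \Cref{th:pathHomsGraphons}, which is precisely the characterization of graphons with equal path homomorphism densities. Assuming $t(P, U) = t(P, W)$ for every path $P$, that theorem provides a signed Markov operator $S \in \signedMarkovs$ satisfying $T_U \circ S = S \circ T_W$ exactly. Plugging this particular $S$ into the infimum defining $\pathDistROp(U, W)$ gives $\normRTT{T_U \circ S - S \circ T_W} = \normRTT{0} = 0$, so the infimum is at most $0$. Combined with non-negativity of the norm, we conclude $\pathDistROp(U, W) = 0$.

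The substantive content has been pushed entirely into \Cref{th:pathHomsGraphons} (the generalization of the Dell--Grohe--Rattan characterization to graphons) and \Cref{th:countingLemmaPathDistT} (the operator-norm counting lemma). The only thing one might worry about is whether the infimum in the definition of $\pathDistROp$ is attained, but this is not needed: in the backward direction we exhibit an explicit $S$ making the norm exactly zero, so no compactness argument of the kind used for $\neiDist$ in \Cref{le:neiDistZero} is required here. Thus the proof should be essentially two short sentences, one for each direction.
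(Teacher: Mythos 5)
Your proof matches the paper's own argument exactly: the forward direction via the quantitative counting lemma (\Cref{th:countingLemmaPathDistT}) and the backward direction by exhibiting the signed Markov operator from \Cref{th:pathHomsGraphons}. Your remark that no compactness argument is needed here (in contrast to \Cref{le:neiDistZero}) is also accurate and correctly identifies why the path case is simpler.
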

\begin{proof}
    If $\pathDistROp(U, W) = 0$, then \Cref{th:countingLemmaPathDistT}
    yields that $t(P, U) = t(P, W)$ for every path $P$.
    On the other hand, if $t(P, U) = t(P, W)$ for every path $P$,
    then there is a signed Markov operator $S \in \signedMarkovs$ with
    $T_U \circ S = S \circ T_W$ by \Cref{th:pathHomsGraphons}.
    Then, $\pathDistROp(U, W) = 0$ follows immediately from the definition.
\end{proof}

By definition, the path distance is bounded from above by the tree distance
(with the appropriate norm), which means that it also is compatible
with the cut distance.

\begin{lemma}
    \label{le:pathDistTLeNeiDistT}
    \label{le:pathDistTCutDistConvergence}
    Let $U, W \in \graphons$ be graphons.
    Then, $\pathDistROp(U, W) \le \neiDistROp(U, W)$.
\end{lemma}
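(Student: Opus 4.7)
The plan is to reduce the inequality to a simple containment of feasible sets. Both distances are infima of the same functional $S \mapsto \normRTT{T_U \circ S - S \circ T_W}$, but over different collections of operators: $\pathDistROp$ takes the infimum over $\signedMarkovs$, while $\neiDistROp$ takes it over $\markov$. If I can show that every Markov operator is a signed Markov operator, i.e., $\markov \subseteq \signedMarkovs$, then $\pathDistROp(U,W)$ is an infimum over a set containing the one used to define $\neiDistROp(U,W)$, and the desired inequality follows immediately from the monotonicity of the infimum.

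To verify $\markov \subseteq \signedMarkovs$, I will fix an arbitrary $S \in \markov$ and check the three defining conditions of a signed Markov operator. The normalization conditions $S(\allOne) = \allOne$ and $S^*(\allOne) = \allOne$ appear verbatim in both definitions, so nothing needs to be shown for them. The only non-trivial requirement is the $L_2$-contraction property $\normT{S f} \le \normT{f}$ for all $f \in \LT$. This is a classical fact already cited in this paper in the proof of \Cref{le:neiDistPseudoMetric}: by Theorem $13.2$ b) of \cite{EisnerEtAl2015}, every Markov operator is a contraction on both $\LI$ and $\LT$. Invoking this fact completes the containment.

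Given $\markov \subseteq \signedMarkovs$, the final step is a one-line argument:
\[
    \pathDistROp(U, W) = \inf_{S \in \signedMarkovs} \normRTT{T_U \circ S - S \circ T_W} \le \inf_{S \in \markov} \normRTT{T_U \circ S - S \circ T_W} = \neiDistROp(U, W).
\]
There is no real obstacle in this argument; its entire substance is the classical $L_2$-contraction property of Markov operators, which the paper imports as a black box from \cite{EisnerEtAl2015}. Combined with \Cref{le:neiDistLeCutDist}, this lemma also shows that $\pathDistROp$ is compatible with $\dist$, which is why it is recorded under the alternate name \Cref{le:pathDistTCutDistConvergence}.
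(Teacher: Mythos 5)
Your proof is correct and is exactly the argument the paper intends: the paper states the lemma without an explicit proof, remarking only that it holds ``by definition,'' and the intended content is precisely your observation that $\markov \subseteq \signedMarkovs$ (via \Cref{le:markovContraction}, that every Markov operator is an $L_2$-contraction) so the infimum defining $\pathDistROp$ ranges over a superset of the one defining $\neiDistROp$.
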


With these lemmas, we can apply \Cref{th:invCountingTrees}
and obtain the following inverse counting lemma for the path distance.

\begin{corollary}[Inverse Counting Lemma for $\pathDistROp$]
    \label{co:invCountingPathT}
    For every $\epsilon > 0$, there are $k > 0$ and $\eta > 0$ such that,
    for all graphons $U, W \in \graphons$,
    if
    $\lvert t(P, U) - t(P, W) \rvert \le \eta$
    for every path $P$ on at most $k$ vertices, then
    $\pathDistROp(U, W) \le \epsilon$.
\end{corollary}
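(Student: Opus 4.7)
The plan is to deduce this corollary as a direct instance of the general Inverse Counting Lemma from \Cref{sec:pseudometrics}, namely \Cref{th:invCountingTrees}, applied with $\someClass$ taken to be the class of all paths and $\someDist \coloneqq \pathDistROp$. To invoke that theorem, I only need to verify its two hypotheses about $\pathDistROp$: that it is a pseudometric on graphons which is compatible with the cut distance $\dist$, and that $t(P,U) = t(P,W)$ for every path $P$ forces $\pathDistROp(U,W) = 0$.

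The pseudometric property is already recorded in \Cref{le:pathDistTPseudoMetric}, so nothing has to be done there. The second hypothesis of \Cref{th:invCountingTrees} is the ``easy'' (exact, not approximate) direction of the already-established characterization of vanishing $\pathDistROp$: one direction of \Cref{le:pathDistTZero} says precisely that equality of all path homomorphism densities implies $\pathDistROp(U,W)=0$, using the fact that \Cref{th:pathHomsGraphons} supplies a signed Markov operator $S$ with $T_U \circ S = S \circ T_W$, which is a feasible point of $0$-value in \Cref{def:pathDistTGraphons}.

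The compatibility with $\dist$ is the one statement I need to spell out, but it follows by chaining three inequalities already established for the tree distance. Concretely, \Cref{le:pathDistTLeNeiDistT} gives $\pathDistROp \le \neiDistROp$; \Cref{le:treeDistInequalitiesSimple} gives $\neiDistROp \le 4\,\neiDist^{1/2}$; and \Cref{le:neiDistLeCutDist} gives $\neiDist \le \dist$. Composing them yields $\pathDistROp(U,W) \le 4\,\dist(U,W)^{1/2}$ for all graphons, so whenever $\dist(U_n,\widetilde U) \to 0$ one automatically has $\pathDistROp(U_n,\widetilde U) \to 0$, which is exactly the compatibility condition defined at the top of \Cref{sec:pseudometrics}.

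With both hypotheses in place, \Cref{th:invCountingTrees} applied to $\someClass$ being the class of paths delivers the conclusion verbatim: for every $\epsilon > 0$ there exist $k > 0$ and $\eta > 0$ such that $\lvert t(P,U) - t(P,W)\rvert \le \eta$ for all paths $P$ on at most $k$ vertices implies $\pathDistROp(U,W) \le \epsilon$. There is no real obstacle here, since all the substantive work—the compactness argument in \Cref{th:invCountingTrees}, the eigenfunction construction in \Cref{th:pathHomsGraphons} giving a signed Markov operator, and the triangle inequality for $\pathDistROp$—has been carried out earlier; the only mild subtlety worth flagging is remembering that the compatibility must be verified against $\dist$ (not $\neiDistROp$), which is why the chain above passes through the tree-distance comparison lemmas rather than invoking $\pathDistROp \le \neiDistROp$ alone.
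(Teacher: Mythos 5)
Your proposal is correct and matches the paper's own argument: the paper also derives this corollary by applying \Cref{th:invCountingTrees} with $\someClass$ the class of paths and $\someDist = \pathDistROp$, with hypotheses supplied by \Cref{le:pathDistTPseudoMetric}, the easy direction of \Cref{le:pathDistTZero}, and compatibility with $\dist$ obtained by chaining $\pathDistROp \le \neiDistROp$ (\Cref{le:pathDistTLeNeiDistT}), $\neiDistROp \le 4\,\neiDist^{1/2}$ (\Cref{le:treeDistInequalitiesSimple}), and $\neiDist \le \dist$ (\Cref{le:neiDistLeCutDist}). The only difference is that the paper leaves this chaining implicit, while you spell it out; the substance is identical.
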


\section{The Color Distance}
\label{subsec:cutDistanceInvariant}

\textit{Color Refinement}, also known as the \textit{1-dimensional Weisfeiler-Leman algorithm}, is a heuristic graph isomorphism test.
It computes a coloring of the vertices of a graph in a sequence of \textit{refinement rounds};
we say that color refinement \textit{distinguishes} two graphs if the computed color patterns differ.
Formally, for a graph $G$, we let $C^G_0(u) = 1$ for every $u \in V(G)$ and
$C^G_{i+1}(u) = \{\!\!\{C^G_i(v) \mid uv \in E(G)\}\!\!\}$
for every $i \ge 0$.
Let $C^G_{\infty} = C^G_i$ for the smallest $i$ such that $C^G_i(u) = C^G_i(v) \iff C^G_{i+1}(u) = C^G_{i+1}(v)$ for all $u,v \in G$ (\enquote{$C_i$ is stable}).
Then, color refinement distinguishes two graphs $G$ and $H$ if there is an $i \ge 0$
such that $\{\!\!\{ C^G_i(v) \mid v \in V(G) \}\!\!\} \neq \{\!\!\{ C^H_i(v) \mid v \in V(H) \}\!\!\}$.
It is well-known that the partition $\{ C^{-1}_\infty(i) \mid i \in C_{\infty}(V(G))\}$
is the \textit{coarsest equitable partition} of $V(G)$,
where a partition $\Pi$ of $V(G)$ is called \textit{equitable}
if for all $P, Q \in \Pi$ and $u, v \in P$,
the vertices $u$ and $v$ have the same number of neighbors in $Q$.

For a graph $G$, we can define a weighted graph $\ColG$
by letting
$V(\ColG) \coloneqq \{ C^{-1}_\infty(i) \mid i \in C_{\infty}(V(G))\}$,
$\alpha_C(\ColG) \coloneqq \lvert C \rvert$ for $C \in V(\ColG)$,
and $\beta_{CD}(\ColG) \coloneqq {M^G_{CD}}/{\lvert D \rvert}$
for all $C, D \in V(\ColG)$,
where $M^G_{CD}$ is the number of neighbors a vertex from $C$ has in $D$,
which is the same for all vertices in $C$ as
the partition induced by the colors of $C_{\infty}^G$ is equitable.
Note that we have
$\lvert C \rvert {M^G_{CD}} = \lvert D \rvert {M^G_{DC}}$
as both products describe the number of edges between $C$ and $D$, i.e.,
$\ColG$ is well-defined.
Usually, when talking about the invariant $\mathcal{I}^2_{C}$ computed by color refinement (see, e.g., \cite{KieferSchweitzerSelman15}), one does not normalize $M^G_{CD}$ by $\lvert D \rvert$.
However, by doing so, we do not only get a weighted graph (with symmetric
edge weights), but the graphs $G$ and $\ColG$ actually
have the same tree homomorphism counts.
Grebík and Rocha
already introduced the graphon analogue $U/\mathcal{C}(U)$ of $\ColG$
and proved the same fact for it \cite[Corollary $4.3$]{GrebikRocha2019};
hence, we omit the proof.

\begin{lemma}
    \label{le:colorGraphLemma}
    Let $T$ be a tree, and let $G$ be a graph.
    Then,
    $\hom(T, G) = \hom(T, \ColG)$.
\end{lemma}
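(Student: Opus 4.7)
The plan is to root the tree $T$ at an arbitrary vertex $r$ and then prove, by structural induction on rooted subtrees, that any two vertices of $G$ belonging to the same stable color class contribute equally to the rooted homomorphism count, and that this common contribution matches the corresponding rooted weighted homomorphism count in $\ColG$. For each rooted subtree $(T', r')$ of $T$ and each $v \in V(G)$, let $h(T', r', v)$ denote the number of homomorphisms $\varphi \colon V(T') \to V(G)$ with $\varphi(r') = v$. For each color class $C \in V(\ColG)$, let $g(T', r', C)$ denote the weighted sum over all $\varphi \colon V(T') \to V(\ColG)$ with $\varphi(r') = C$, where each $\varphi$ contributes $\prod_{v \in V(T') \setminus \{r'\}} \alpha_{\varphi(v)}(\ColG) \prod_{uv \in E(T')} \beta_{\varphi(u)\varphi(v)}(\ColG)$; crucially, $g$ deliberately omits the root weight.

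The main claim is that for every rooted subtree $(T', r')$, every $C \in V(\ColG)$, and every $v \in C$, we have $h(T', r', v) = g(T', r', C)$. The base case of a single vertex is trivial. For the inductive step, suppose $r'$ has children $r'_1, \dots, r'_k$ rooting subtrees $T'_1, \dots, T'_k$. Fixing $v \in C$ and a representative $u_D \in D$ for each $D \in V(\ColG)$, the standard rooted recursion gives
\begin{equation*}
    h(T', r', v) = \prod_{i=1}^k \sum_{u \in N_G(v)} h(T'_i, r'_i, u) = \prod_{i=1}^k \sum_{D \in V(\ColG)} M^G_{CD} \cdot h(T'_i, r'_i, u_D),
\end{equation*}
where the second equality uses both the inductive hypothesis (so the inner summand depends only on the color of $u$) and the equitability of the stable partition (so $|N_G(v) \cap D| = M^G_{CD}$ is constant over $v \in C$). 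On the weighted side, the corresponding recursion together with $\alpha_D(\ColG) = |D|$ and $\beta_{CD}(\ColG) = M^G_{CD}/|D|$ yields
\begin{equation*}
    g(T', r', C) = \prod_{i=1}^k \sum_{D} \alpha_D(\ColG) \beta_{CD}(\ColG) \cdot g(T'_i, r'_i, D) = \prod_{i=1}^k \sum_{D} M^G_{CD} \cdot g(T'_i, r'_i, D),
\end{equation*}
which matches the expression for $h(T', r', v)$ by the inductive hypothesis.

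It remains to combine the rooted identities into the unrooted statement. Grouping homomorphism sums by the color class of the root image gives
\begin{equation*}
    \hom(T, G) = \sum_{v \in V(G)} h(T, r, v) = \sum_{C \in V(\ColG)} |C| \cdot h(T, r, v_C),
\end{equation*}
for any choice $v_C \in C$, while the definition of weighted homomorphism counts yields $\hom(T, \ColG) = \sum_C \alpha_C(\ColG) \cdot g(T, r, C) = \sum_C |C| \cdot g(T, r, C)$; these are equal by the main claim. The only place that requires care is the bookkeeping of root weights in the definition of $g$: the weighted hom formula multiplies an $\alpha_{\varphi(v)}$ factor for every vertex including the root, so $g$ must omit exactly that one factor and the root weight must be reintroduced only at the outermost summation. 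Apart from this detail, the argument is a clean structural induction that explicitly leverages the defining property of the coarsest equitable partition produced by color refinement.
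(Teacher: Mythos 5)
Your proof is correct. The rooted recursion is set up properly: defining $g$ to omit the root's $\alpha$-weight is exactly the bookkeeping needed so that the inductive invariant $h(T',r',v) = g(T',r',C)$ (for $v \in C$) is well-posed, and the key cancellation $\alpha_D(\ColG)\,\beta_{CD}(\ColG) = M^G_{CD}$ together with equitability of the stable partition makes the two recursions coincide. Reintroducing the root's $\alpha$-weight only in the outermost sum then yields the unrooted identity. Note that your argument in fact only uses that the stable coloring induces an equitable partition, not that it is the coarsest such partition, which is consistent with the lemma being stated for $\ColG$.

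The paper, however, does not give a proof at all: it observes that Grebík and Rocha proved the graphon analogue (for $U/\mathcal{C}(U)$, their Corollary 4.3) and omits the graph case as a specialization. Your argument is therefore a genuinely different route — a self-contained, purely combinatorial induction on rooted subtrees that avoids the graphon machinery entirely. What your approach buys is elementariness and explicitness: it makes the role of equitability transparent and would be easy to adapt to, say, labelled or directed variants. What the paper's approach buys is brevity and uniformity, since the graphon statement is needed elsewhere anyway (for the tree distance of graphons), so appealing to it once covers both the continuous and discrete cases without duplicating work.
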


By the result of Dvořák \cite{Dvorak2010},
$\ColG$ and $\Col{H}$ are isomorphic if and only if
$G$ and $H$ have the same tree homomorphism counts.
Hence, it is tempting to define a tree distance-like similarity measure
on graphs by simply considering the cut distance of $\ColG$ and $\Col{H}$.
For graphs $G$ and $H$, we call $\colDist(G, H) \coloneqq \dist(\ColG, \Col{H})$
the \textit{color distance} of $G$ and $H$.
As the cut distance $\dist$ is a pseudometric on graphs,
so is $\colDist$.
For $\colDist$, we immediately obtain a quantitative counting lemma
from \Cref{th:countingLemmaGraphons}
and \Cref{le:colorGraphLemma}.

\begin{corollary}[Counting Lemma for $\colDist$]
    Let $T$ be a tree, and let $G$ and $H$ be graphs.
    Then,
    $\lvert \dens(T, G) - \dens(T, H) \rvert \le \lvert E(T) \rvert \cdot \colDist(G, H)$.
\end{corollary}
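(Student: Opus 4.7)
The plan is to chain together the color graph lemma (\Cref{le:colorGraphLemma}) with the counting lemma for the cut distance (\Cref{th:countingLemmaGraphons}) and to verify that homomorphism densities are preserved (not just homomorphism counts) when passing from $G$ to $\Col{G}$. The crucial normalization check is that the total vertex weight of $\Col{G}$ satisfies $\alpha_{\Col{G}} = \sum_{C \in V(\Col{G})} |C| = \vs(G)$, since the color classes partition $V(G)$. Combining this with \Cref{le:colorGraphLemma}, we obtain
\begin{equation*}
    t(T, \Col{G}) = \frac{\hom(T, \Col{G})}{\alpha_{\Col{G}}^{\vs(T)}} = \frac{\hom(T, G)}{\vs(G)^{\vs(T)}} = t(T, G),
\end{equation*}
and the analogous identity for $H$.

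Next, I would lift everything to graphons via the step-function representation. By the discussion in \Cref{subsec:graphons} (in particular, the fact that the homomorphism density definition for graphons coincides with the one for weighted graphs), we have $t(T, W_{\Col{G}}) = t(T, \Col{G})$ and $t(T, W_{\Col{H}}) = t(T, \Col{H})$. Moreover, by definition $\colDist(G, H) = \dist(\Col{G}, \Col{H}) = \dist(W_{\Col{G}}, W_{\Col{H}})$, using that the cut distance of weighted graphs agrees with the cut distance of the associated graphons.

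Now I would apply the counting lemma for the cut distance, \Cref{th:countingLemmaGraphons}, to the graphons $W_{\Col{G}}$ and $W_{\Col{H}}$, obtaining
\begin{equation*}
    |t(T, W_{\Col{G}}) - t(T, W_{\Col{H}})| \le \es(T) \cdot \dist(W_{\Col{G}}, W_{\Col{H}}).
\end{equation*}
Substituting the identities from the previous two paragraphs yields $|t(T, G) - t(T, H)| \le |E(T)| \cdot \colDist(G, H)$, as required. There is no real obstacle here; the only point that needs care is the normalization factor ensuring densities (not just counts) are preserved under the color quotient operation, which is why the edge weights in $\Col{G}$ were defined with the extra $1/|D|$ factor in the first place.
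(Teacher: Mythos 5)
Your proposal is correct and matches the paper's (very brief) proof, which simply combines \Cref{le:colorGraphLemma} with \Cref{th:countingLemmaGraphons}. The normalization check $\alpha_{\Col{G}} = \vs(G)$ (so that densities, not just counts, are preserved) and the passage to graphons via $W_{\Col{G}}$ are exactly the details the paper leaves implicit.
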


Clearly, $\neiDist$ and $\colDist$ have the same graphs of distance zero.
Moreover, one can easily verify that the tree distance is bounded
from above by the color distance.

\begin{lemma}
    \label{le:neiDistLeColDist}
    Let $G$ and $H$ be graphs.
    Then, $\neiDist(G, H) \le \colDist(G, H)$.
\end{lemma}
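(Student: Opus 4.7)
The plan is to lift the inequality to the graphon level and combine three previously established facts about $\neiDist$: it is a pseudometric (\Cref{le:neiDistPseudoMetric}), it vanishes on precisely the pairs of graphons sharing all tree homomorphism densities (\Cref{le:neiDistZero}), and it is bounded above by the cut distance (\Cref{le:neiDistLeCutDist}).

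The first step is to observe that $G$ and $\ColG$ have identical tree homomorphism densities. From the construction of $\ColG$ one reads off $\alpha_{\ColG} = \sum_{C \in V(\ColG)} \lvert C \rvert = \vs(G)$, while \Cref{le:colorGraphLemma} gives $\hom(T, G) = \hom(T, \ColG)$ for every tree $T$; dividing both sides by $\vs(G)^{\vs(T)}$ yields $t(T, G) = t(T, \ColG)$, and hence $t(T, W_G) = t(T, W_{\ColG})$ on the graphon side, with the analogous equalities for $H$ and $\Col{H}$. Applying \Cref{le:neiDistZero} therefore gives $\neiDist(W_G, W_{\ColG}) = 0 = \neiDist(W_H, W_{\Col{H}})$, so the triangle inequality from \Cref{le:neiDistPseudoMetric} collapses the desired bound to
\[
\neiDist(W_G, W_H) \;\le\; \neiDist(W_{\ColG}, W_{\Col{H}}).
\]
Bounding the right-hand side by $\dist(W_{\ColG}, W_{\Col{H}}) = \colDist(G, H)$ via \Cref{le:neiDistLeCutDist}, and identifying the left-hand side with $\neiDist(G, H)$ via \Cref{le:steppingFunctionNeiDist}, closes the chain.

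There is essentially no obstacle here; the only subtlety worth flagging is that $\ColG$ is a \emph{weighted} graph, so \Cref{def:treeDistanceGraphs} does not apply to it directly. This is precisely why the argument is carried out at the graphon level using the step functions $W_{\ColG}$ and $W_{\Col{H}}$, where both \Cref{le:neiDistZero} and \Cref{le:neiDistLeCutDist} are available.
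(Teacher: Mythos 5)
Your proof is correct and takes essentially the same route as the paper's one-line argument (same tree densities for $G$ and $\ColG$, triangle inequality, then $\neiDist\le\dist$); you simply fill in the details the paper leaves implicit. The extra care you take—passing to $W_{\ColG}$ and $W_{\Col{H}}$ because $\neiDist$ is only defined for simple graphs and graphons, not weighted graphs—is a legitimate tightening of the paper's somewhat informal chain $\neiDist(G, H) = \neiDist(\ColG, \Col{H}) \le \dist(\ColG, \Col{H})$.
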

\begin{proof}
    We have
    $\neiDist(G, H) = \neiDist(\ColG, \Col{H}) \le \dist(\ColG, \Col{H}) = \colDist(G, H)$
    by \Cref{le:colorGraphLemma} and \Cref{le:neiDistLeCutDist}.
\end{proof}

Now, the obvious question is whether these pseudometrics are the same
or, at least, define the same topology.
But it is not hard to find a counterexample;
the color distance sees differences between graphs
that the tree distance and tree homomorphisms do not see.
In particular, an inverse counting lemma cannot hold for the
color distance.
See \Cref{fig:le:notEquivalent}, and for the moment,
assume that we can construct a sequence $(G_n)_n$ of graphs such
that $\Col{G_n}$ is as depicted.
It is easy to verify that $\dist(\Col{G_n}, K_3) \xrightarrow{n \rightarrow \infty} 0$,
and thus, both $\neiDist(G_n, K_3) \xrightarrow{n \rightarrow \infty} 0$
and $\lvert t(T, G_n) - t(T, K_3) \rvert \xrightarrow{n \rightarrow \infty} 0$ for every tree $T$.
But, $\colDist(G_n, K_3) \ge \frac{1}{3} \cdot \frac{1}{3} \cdot \frac{2}{3}$ for every $n$
since $\Col{G_n}$ has a vertex without a loop.

\begin{figure}
    \begin{center}
\begin{tikzpicture}
		\node[vertex, label={90:$n$}] (E1) {};
        \node[vertex, label={270:$n$}, below left = 1cm and 0.55cm of E1] (E2) {};
		\node[vertex, label={270:$n$}, below right = 1cm and 0.55cm of E1] (E3) {};
        \node[above = 0.6cm of E1] (label) {$\Col{G_n}$};

        \path[draw, thick] (E1) edge node[xshift = -6pt, yshift = 4pt] {$\frac{n}{n}$} (E2);
        \path[draw, thick] (E1) edge node[xshift = 8pt, yshift = 4pt] {$\frac{n-1}{n}$} (E3);
        \path[draw, thick] (E2) edge node[xshift = 0pt, yshift = -8pt] {$\frac{n-2}{n}$} (E3);

		\node[vertex, left = 5cm of E1] (v1) {};
        \node[vertex, below left = 1cm and 0.55cm of v1] (v2) {};
		\node[vertex, below right = 1cm and 0.55cm of v1] (v3) {};
        \node[above = 0.6cm of v1] (label2) {$K_3$};

        \path[draw, thick] (v1) edge (v2);
        \path[draw, thick] (v1) edge (v3);
        \path[draw, thick] (v2) edge (v3);

        \path ($(E1)!0.5!(E2)$) edge[->, draw, thick, shorten <= 1.0cm, shorten >= 1.0cm] node[yshift= 8pt] {$\dist$, $\neiDist$} ($(v1)!0.5!(v3)$);

        \node[right = 5cm of E1] (w1ghost) {};
        \node[below left = 1cm and 0.55cm of w1ghost] (w2ghost) {};
        \node[vertex, below = 0.275 of w1ghost, yshift = -3pt, label={270:$3$}] (w1) {};
        \node[above = 0.6cm of w1ghost] (label3) {$\Col{K_3}$};

        \path[every loop/.style={in = 120, out = 60}, draw, thick] (w1) edge[loop above] node[] {$\frac{2}{3}$} (w1);

        \path ($(E1)!0.5!(E3)$) edge[->, draw, thick, shorten <= 1.0cm, shorten >= 1.0cm] node[yshift= 8pt] {\cancel{$\dist$}, $\neiDist$} ($(w1ghost)!0.5!(w2ghost)$);
    \end{tikzpicture}
    \vspace{-0.6cm}
    \end{center}
    \caption{An example separating the color distance from the tree distance}
    \label{fig:le:notEquivalent}
\end{figure}
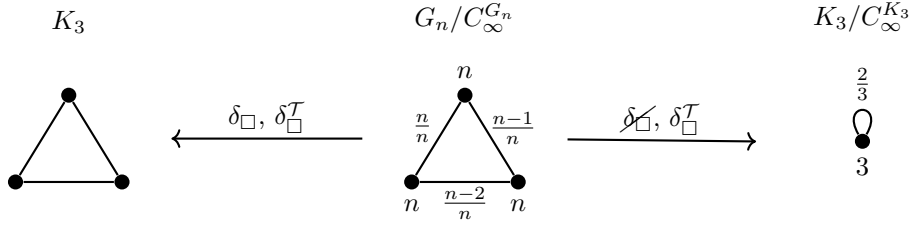

The existence of graphs $G_n$
such that $\Col{G_n}$ is as depicted in \Cref{fig:le:notEquivalent}
follows easily from inversion results
for the color refinement invariant $\mathcal{I}^2_{C}$.
Otto first proved that $\mathcal{I}^2_{C}$
admits polynomial time inversion on structures \cite{Otto97},
and Kiefer, Schweitzer, and Selman
gave a simple construction to show that $\mathcal{I}^2_{C}$
admits linear-time inversion on the class of graphs \cite{KieferSchweitzerSelman15}.
Basically, we partition $3n$ vertices
into three sets of size $n$
and add edges between these partitions
such that they induce $n$-, $(n-1)$-, and $(n-2)$-regular bipartite graphs.

The example in \Cref{fig:le:notEquivalent} actually answers
an open question of Grebík and Rocha \cite[Question $3.1$]{GrebikRocha2019}.
They ask whether the set $\{W/\mathcal{C}(W) \mid W \in \graphonsWI \}$
is closed in $\graphonsWI$:
it is not.
With a more refined argument, we can actually show
that $\{W_{\ColG} \mid G \text{ graph}\}$ is already dense in $\graphonsWI$.
By properly rounding the weights of a given weighted graph,
we can turn the inversion result of \cite{KieferSchweitzerSelman15}
into a statement about approximate inversion.
The proof of \Cref{th:colRefApproxInv} can be found in \Cref{subsec:inverseApproximationProof}

\begin{theorem}
    \label{th:colRefApproxInv}
    Let $H$ be a weighted graph.
    For every $n \ge 2 \cdot \vs(H)$, there is a graph $G$ on $n^2$ vertices such that
    $\dist(\ColG, H) \le 3 \cdot {\vs(H)}/{n} + \frac{1}{4} \cdot ({\vs(H)}/{n})^2$.
\end{theorem}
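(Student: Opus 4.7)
\emph{Plan.} The approach will be to approximate $H$ by a weighted graph $H'$ whose vertex and edge weights have denominator $n$, invoke the color-refinement inversion result of Kiefer, Schweitzer, and Selman to realize $H'$ as $\Col{G}$ for some simple graph $G$ on $n^2$ vertices, and then bound $\dist(\Col{G}, H) = \dist(H', H)$ via a fractional-overlay estimate.

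\emph{Rounding and construction.} Let $k = \vs(H)$, and write $\mu_i = \alpha_i(H)/\alpha_H$ and $\beta_{ij} = \beta_{ij}(H)$. First, I would choose integers $b_i \ge 0$ with $\sum_i b_i = n$ and $|b_i - n\mu_i| \le 1$ (via largest-remainder rounding), and integers $m_{ij} = m_{ji} \in \{0, 1, \dots, n\}$ with $|m_{ij}/n - \beta_{ij}| \le 1/(2n)$. Then I would partition $[n^2]$ into $n$ blocks of size $n$ each and assign $b_i$ blocks the color $i$; let $V_i$ be the union of the color-$i$ blocks, so $|V_i| = nb_i$. Inside every color-$i$ block I place an $m_{ii}$-regular simple graph, and between every pair of distinct blocks of colors $i$ and $j$ I place a bipartite graph of common degree $m_{ij}$ on both sides; such subgraphs exist on $n$ vertices for any $m_{ij} \in \{0, \ldots, n\}$ by an explicit cyclic construction, with a single-edge parity adjustment when $n m_{ii}$ is odd (which only changes one density by $O(1/n^2)$). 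By construction $\{V_i\}$ is equitable in $G$, and $\Col{G}$ equals the weighted graph $H'$ on $[k]$ with $\alpha_i(H') = nb_i$ and $\beta_{ij}(H') = m_{ij}/n$.

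\emph{Cut distance estimate.} Set $\rho = \frac{1}{2}\sum_i |b_i/n - \mu_i| \le k/(2n)$. I would then take the fractional overlay $X$ of $H'$ and $H$ given by $X_{ii} = \min(b_i/n, \mu_i)$ on the diagonal and, off the diagonal, by independently coupling the residuals $(b_i/n - \mu_i)_+$ and $(\mu_j - b_j/n)_+$. Splitting the defining sum of $\fracdist(H', H, X)$ into a diagonal part $\sum_{i,j} X_{ii} X_{jj} |m_{ij}/n - \beta_{ij}|$, bounded by $(1-\rho)^2/(2n)$, and an off-diagonal part whose total $X$-mass is $2\rho - \rho^2$ and whose integrand lies in $[-1,1]$, yields
\begin{equation*}
\dist(\Col{G}, H) \le \frac{(1-\rho)^2}{2n} + 2\rho - \rho^2 \le \frac{1}{2n} + \frac{k}{n} - \frac{k^2}{4n^2} \le \frac{3k}{n} + \frac{k^2}{4n^2},
\end{equation*}
where the last step uses $1/(2n) \le 2k/n$, valid since $k \ge 1$.

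\emph{Main obstacle.} The principal non-routine step will be the combinatorial claim that the construction above actually produces a graph $G$ whose coarsest equitable partition is exactly $\{V_i\}$, so that $\Col{G}$ realizes $H'$ instead of some accidental merger of its vertices; this is precisely the content of the KSS inversion result, which the proof will invoke. (If two vertices of $H$ already share identical rounded profiles, they are harmlessly identified in $\Col{G}$ without affecting the estimate above.) Once this combinatorial fact is in hand, the rounding and overlay calculation above complete the proof.
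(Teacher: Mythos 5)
There is a genuine gap at the step you flag as the ``main obstacle.'' You assume that the graph $G$ produced by the Kiefer--Schweitzer--Selman construction has $\{V_i\}$ as its \emph{coarsest} equitable partition, so that $\Col{G} = H'$; you assert this is ``precisely the content of the KSS inversion result,'' and hedge with the parenthetical that merging only occurs when two vertices of $H$ have identical rounded profiles. Neither claim is correct. The KSS criteria only guarantee a graph in which $\{V_i\}$ is \emph{some} equitable partition; the coarsest equitable partition may strictly coarsen $\{V_i\}$ even when the rounded $M$-rows are pairwise distinct, because distinct rows can still induce equal row sums and hence equal degrees. Concretely, take $\vs(H)=2$, $\mu_1=\mu_2=1/2$, $\beta_{11}=\beta_{22}=1$, $\beta_{12}=0$. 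Your rounding gives $b_1=b_2=n/2$ and $m_{11}=m_{22}\approx n$, $m_{12}=0$, so $G$ is a regular graph (two disjoint near-complete halves), color refinement returns the trivial partition, and $\Col{G}$ is a single looped vertex of edge weight $\approx 1/2$. Then $\dist(\Col{G}, H)$ is a constant independent of $n$ (on the order of $1/8$), vastly exceeding $3\cdot 2/n + (2/n)^2/4$. So the chain $\dist(\Col{G},H)=\dist(H',H)\le\cdots$ is broken at its first equality, and the bound you prove for $\dist(H',H)$ does not transfer.

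What is missing is exactly the diagonal tweak the paper performs: after the naive rounding of the loops, perturb each $M_{uu}$ by at most $2(\vs(H)-1)$ (still within the admissible range since $n\ge 2\vs(H)$) so that all row sums of $M$ are pairwise distinct; this forces the degrees of the classes to differ, so the very first round of color refinement already separates them and the coarsest equitable partition is exactly $\{V_i\}$. This perturbation increases your diagonal error from $1/(2n)$ to $2\vs(H)/n$, which is why the paper's diagonal term is $2\vs(H)/n$ rather than your $(1-\rho)^2/(2n)$; your slack estimate $1/(2n)\le 2\vs(H)/n$ is precisely the room that absorbs this extra cost, so the final bound is unchanged. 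A secondary issue: a ``single-edge parity adjustment'' inside one block when $nm_{ii}$ is odd breaks regularity of that block and hence equitability of $\{V_i\}$; the correct move is to shift $m_{ii}$ itself by one (or, as in the paper, work with blocks of size $n s_u$ directly), and this is in any case subsumed by the diagonal tweak above.
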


In \Cref{th:colRefApproxInv}, the size of the resulting graph
depends on how close we want it to be to the input graph.
A simple consequence of the compactness of the graphon space
is that, for $\epsilon > 0$, we can approximate any graphon
with an error of $\epsilon$ in $\dist$
by
a graph on $N(\epsilon)$ vertices, where $N(\epsilon)$
is independent of the graphon \cite[Corollary $9.25$]{Lovasz2012}.
With \Cref{th:colRefApproxInv}, this implies that the same is possible
with the weighted graphs $\ColG$.
This also means that the closure of the set $\{W_{\ColG} \mid G \text{ graph}\}$
is already $\graphonsWI$.

\section{Conclusions}
\label{sec:conclusion}

We have introduced similarity measures for graphs
that can be formulated as convex optimization problems
and shown surprising
correspondences to tree and path homomorphism densities.
This takes previous results on the \enquote{expressiveness} of
homomorphism counts from an exact to an approximate level.
Moreover, it helps to give a theoretical understanding of kernel methods
in machine learning,
which are often based on counting certain substructures in graphs.
Proving the correspondences to homomorphism densities
was made possible by introducing
our similarity measures
for the more general case of graphons,
where tools from functional analysis let us prove
the general statement that
every \enquote{reasonably defined} pseudometric has
to satisfy a correspondence to homomorphism densities.

Various open questions remain.
The compactness argument used in \Cref{sec:pseudometrics}
only yields non-quantitative statements.
Hence, we do not know how
close the graphs have to be in the pseudometric for their
homomorphism densities to be close and vice versa.
Only for paths we were able to prove
a quantitative counting lemma,
which uses the same factor $\es(F)$ as the counting lemma for general graphs.
It seems conceivable that a
quantitative counting lemma for trees that uses the same factor $\es(T)$
also holds.
As the proof of the quantitative inverse counting lemma is quite involved \cite{BorgsEtAl2008, Lovasz2012},
proving such statements for trees and paths
should not be easy.

More in reach seems to be the question of how the tree distance
generalizes to the class $\mathcal{T}_k$ of graphs of treewidth at most $k$.
Homomorphism counts from graphs in $\mathcal{T}_k$
can also be characterized
in terms of linear equations in the case of graphs \cite{Dvorak2010} (see also \cite{Dell2018}).
How does such a characterization for graphons look like?
And how does one define a distance measure from this?

Another open question concerns further characterizations of
fractional isomorphism, e.g., the color refinement algorithm,
which gives a characterization based
on equitable partitions.
Can one prove a correspondence between the tree distance
and, say, $\epsilon$-equitable partitions?
It is not hard to come up with a definition for such partitions;
the hard part is to prove that graphs that are similar in
the tree distance possess such a partition.

\newpage
\appendix
\section{Appendix}
\label{sec:appendix}
\label{sec:facts}

This appendix contains the proofs omitted from the main body of the paper
and collects some results used in these proofs.
We start with some additions to the preliminaries.

For $1 \le p < \infty$, let $\Lp$ denote the space of
$\R$-valued $p$-integrable functions on $[0,1]$ (modulo equality almost anywhere).
Likewise, $\LI$ denotes the space of essentially bounded $\R$-valued functions on
$[0,1]$ (modulo equality almost anywhere).
Unless explicitly stated otherwise, the functions that we consider are $\R$-valued.
Let $1 \le p < q \le \infty$.
By Hölder's inequality, we have $\lVert f \rVert_p \le \lVert f \rVert_q$
for every function $f \in \Lq$ since $[0,1]$ has measure one.
In particular, we have $\Lq \subseteq \Lp$.
Among these spaces, $\LT$ plays a special role as it is a Hilbert space
as mentioned in the preliminaries.
For an operator $T \colon \Lp \to \Lq$, where $1 \le p,q \le \infty$,
let $\normRpq{T}$ denote its operator norm, i.e.,
$\normRpq{T} = \sup_{\lVert g \rVert_p \le 1} \lVert T g \rVert_q$,
and let $\normpq{T}$ be the operator norm when viewing $T$ as an operator
on the complex $\Lp$, i.e.,
$\normpq{T} = \sup_{\lVert g \rVert_p \le 1, g \colon [0,1] \to \C} \lVert T g \rVert_q$.

For $p \in [1, \infty]$, an operator $S \colon \Lp \to \Lp$ is
called a Markov operator if
$S \ge 0$ (\enquote{$S$ is positive}), i.e., $f \ge 0$ implies $S(f) \ge 0$,
$S(\allOne) = \allOne$, and
$\int_{[0,1]} (S f) (x) \dx = \int_{[0,1]} f(x) \dx$ for every $f \in \Lp$.
Here, $\allOne$ is the all-one function on $[0,1]$.
For $\LT$, the third condition can be reformulated as $S^*(\allOne) = \allOne$,
where $S^*$ is the Hilbert adjoint of $S$.
Unless explicitly stated otherwise, we work with Markov operators
$S \colon \LT \to \LT$ and denote the set of all such operators by $\markov$.
By \Cref{th:markovRestriction}, it does not really matter
which space $\Lp$ one considers Markov operators on.
Also note that the results on Markov operators
in \Cref{sec:markov} are originally stated for complex $\Lp$ spaces.
Since we work with graphons, which are $\R$-valued, and Markov operators
map $\R$-valued functions to $\R$-valued functions, cf.\ \Cref{le:posOperators},
this does not make a different for us.

Recall that every kernel $W \in \kernels$ defines an
operator $T_W$ by setting
$(T_W f)(x) = \int_{[0,1]} W(x,y) f(y) \dy$
for every $x \in [0,1]$.
Unless specified otherwise, we view it as an operator $T_W \colon \LT \to \LT$.
Then it is a Hilbert-Schmidt operator, and in particular, compact \cite{Lovasz2012}.
The definition of $T_W$ also allows to view it as an operator
$T_W \colon \colon \LO \to \LI$, and hence,
by the aforementioned inclusions of $L_p$ spaces, we can view $T_W$
as an operator $T_W: L_p[0,1] \to L_q[0,1]$ for all $1 \le p,q \le \infty$.

\subsection{Operators}
\label{sec:fanalysis}

\begin{lemma}[{\cite[Theorem $12.7$]{Rudin1991}}]
    \label{le:operatorIsZero}
    Let $T$ be a bounded linear operator on a Hilbert space $\mathcal{H} \neq \{0\}$.
    If $\langle T g, g \rangle = 0$ for every $g \in \mathcal{H}$, then $T = 0$.
\end{lemma}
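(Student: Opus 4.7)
The plan is the standard Rudin argument: recover the full sesquilinear form $(g,h) \mapsto \langle Tg, h\rangle$ from the quadratic form $g \mapsto \langle Tg, g\rangle$ via polarization, and then exploit arbitrariness of $h$ to conclude that $T$ annihilates every vector. This crucially requires the complex scalar field (over $\R$ the statement is false; e.g., a $90^\circ$ rotation in $\R^2$ satisfies $\langle Tg, g\rangle = 0$ while $T \ne 0$). Since the lemma is cited as Rudin's Theorem $12.7$ and the paper explicitly notes that the Markov operator results it relies on are stated for complex $L_p$ spaces, the complex setting is the correct one to apply it in.

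First I would write the polarization identity: for all $g, h \in \mathcal{H}$,
\begin{equation*}
4\langle Tg, h\rangle = \sum_{k=0}^{3} i^k \, \langle T(g + i^k h),\, g + i^k h \rangle .
\end{equation*}
Direct expansion of the right-hand side gives $\langle Tg, g\rangle \sum_k i^k + \langle Th, h\rangle \sum_k i^k |i^k|^2$ for the diagonal terms, both of which vanish because $\sum_{k=0}^{3} i^k = 0$, together with cross terms that assemble to $4\langle Tg, h\rangle$. By hypothesis each summand on the right is zero, so $\langle Tg, h\rangle = 0$ for all $g, h \in \mathcal{H}$. Specializing $h = Tg$ yields $\|Tg\|^2 = 0$, hence $Tg = 0$ for every $g$, and thus $T = 0$.

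There is no real obstacle here beyond verifying the polarization identity, which is a routine calculation; the only subtlety worth flagging is the dependence on complex scalars, as noted above.
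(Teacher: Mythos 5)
Your proof is correct and is exactly the standard argument from Rudin's Theorem 12.7, which the paper cites without reproducing a proof; the polarization identity computation checks out, and specializing $h = Tg$ gives $\lVert Tg \rVert^2 = 0$ as you say.

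One remark worth keeping in mind, since you flagged the complex-scalar dependence yourself: the paper's $\LT$ is defined as the space of \emph{real}-valued square-integrable functions, so a literal invocation of Rudin's theorem (which requires complex scalars) needs a word of justification in the application in Lemma~\ref{le:neiDistZero}. There this is harmless, because the argument preceding the invocation actually establishes $\langle f, (T_U\circ S - S\circ T_W)g\rangle = 0$ for \emph{all} $f,g$ (not merely $f=g$), and vanishing of the full bilinear form already forces the operator to be zero over $\R$ by taking $f = (T_U\circ S - S\circ T_W)g$; so the detour through the quadratic form and the complex polarization lemma is not strictly necessary there.
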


\begin{theorem}[{Riesz-Thorin Interpolation Theorem, e.g., \cite[Theorem $1.1.1$]{Bergh1976}}]
    \label{th:rieszThorin}
    In the following, all $\Lp$ spaces are complex.
    Assume that $p_0 \neq p_1$, $q_0 \neq q_1$,
    $T \colon L_{p_0}(X, \mathcal{S}, \mu) \to L_{q_0}(Y, \mathcal{T}, \nu)$
    with norm $\lVert T \rVert_{p_0 \rightarrow q_0}$, and
    $T \colon L_{p_1}(X, \mathcal{S}, \mu) \to L_{q_1}(Y, \mathcal{T}, \nu)$
    with norm $\lVert T \rVert_{p_1 \rightarrow q_1}$.
    Then,
    $T \colon L_{p}(X, \mathcal{S}, \mu) \to L_{q}(Y, \mathcal{T}, \nu)$
    with norm
    $\lVert T \rVert_{p \rightarrow q} \le \lVert T \rVert_{p_0 \rightarrow q_0}^{1 - \theta} \lVert T \rVert_{p_1 \rightarrow q_1}^{\theta}$
    provided that $0 < \theta < 1$ and
    $\frac{1}{p} = \frac{1 - \theta}{p_0} + \frac{\theta}{p_1}$,
    $\frac{1}{q} = \frac{1 - \theta}{q_0} + \frac{\theta}{q_1}$.
\end{theorem}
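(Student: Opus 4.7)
The plan is to prove this by \emph{complex interpolation}, using Hadamard's three-lines lemma as the core analytic tool. First, by density of simple functions in $L_p$ for $1 \le p < \infty$ and by the duality $(L_q)^* \cong L_{q'}$ for $1 \le q < \infty$, it suffices to estimate $\lvert \int (Tf)\, g \intd{\nu}\rvert$ for normalized simple functions $f = \sum_k a_k e^{\iu\alpha_k} \chi_{A_k}$ on $X$ and $g = \sum_j b_j e^{\iu\beta_j} \chi_{B_j}$ on $Y$ with $\lVert f \rVert_p \le 1$ and $\lVert g \rVert_{q'} \le 1$, where $q'$ is the conjugate exponent of $q$. The endpoint $q = \infty$ requires slightly more care, since $(L_\infty)^* \neq L_1$, and is handled by pairing $Tf$ with $L_1$-functions directly.

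Next, I would introduce the standard analytic families. Define complex interpolants of the exponents by
\begin{equation*}
    \frac{1}{p(z)} = \frac{1-z}{p_0} + \frac{z}{p_1}, \qquad \frac{1}{q'(z)} = \frac{1-z}{q_0'} + \frac{z}{q_1'},
\end{equation*}
so that $p(\theta) = p$ and $q'(\theta) = q'$, and set
\begin{equation*}
    f_z = \sum_k a_k^{p/p(z)} e^{\iu\alpha_k} \chi_{A_k}, \qquad g_z = \sum_j b_j^{q'/q'(z)} e^{\iu\beta_j} \chi_{B_j}.
\end{equation*}
These are entire in $z$ (finite sums of exponentials of holomorphic functions), they specialize at $z = \theta$ to the given $f$ and $g$, and a direct computation yields $\lVert f_{\iu t}\rVert_{p_0} = \lVert f_{1+\iu t}\rVert_{p_1} = 1$ and $\lVert g_{\iu t}\rVert_{q_0'} = \lVert g_{1+\iu t}\rVert_{q_1'} = 1$ for every $t \in \R$.

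The main step is to consider the scalar function $F(z) = \int (Tf_z)\, g_z \intd{\nu}$ on the closed strip $\{0 \le \Re z \le 1\}$, verify that $F$ is continuous and bounded on the strip and holomorphic in its interior, and bound it on the two boundary lines. By Hölder's inequality together with the hypothesized operator bounds, $\lvert F(\iu t)\rvert \le \lVert T\rVert_{p_0 \rightarrow q_0}$ and $\lvert F(1+\iu t)\rvert \le \lVert T\rVert_{p_1 \rightarrow q_1}$. Hadamard's three-lines lemma then yields
\begin{equation*}
    \lvert F(\theta)\rvert \le \lVert T\rVert_{p_0 \rightarrow q_0}^{1-\theta}\, \lVert T\rVert_{p_1 \rightarrow q_1}^{\theta}.
\end{equation*}
Since $F(\theta) = \int (Tf)\, g \intd{\nu}$, taking the supremum over normalized simple $f$ and $g$ and using the first step gives the stated bound on $\lVert T\rVert_{p \rightarrow q}$.

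The most delicate point is the simultaneous construction of the analytic families: the exponents $p/p(z)$ and $q'/q'(z)$ are chosen precisely so that the $L_{p_j}$- and $L_{q_j'}$-norms on the two boundary lines come out to be exactly $1$, and it is the boundedness of $F$ on the closed strip — which is not automatic from holomorphy alone — that allows the three-lines lemma to apply; this boundedness ultimately rests on the fact that a simple function has only finitely many nonzero values $a_k, b_j$, all bounded away from $0$ and $\infty$. A secondary technicality, as already noted, is the endpoint case $q_j = \infty$, where one must bound $\lVert Tf\rVert_\infty$ as a supremum of pairings with $L_1$-functions rather than by a naive duality argument.
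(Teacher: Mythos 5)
The paper states the Riesz--Thorin Interpolation Theorem as a known result, citing \cite[Theorem 1.1.1]{Bergh1976}, and gives no proof of its own; there is therefore nothing in the paper to compare your argument against directly. Your sketch is the standard complex-interpolation proof via Hadamard's three-lines lemma, which is exactly the proof given in the cited reference, and it is correct: the reduction to simple $f$ and $g$ by density and by duality with $L_{q'}$, the construction of the analytic families $f_z$, $g_z$ whose boundary $L_{p_j}$- and $L_{q_j'}$-norms are normalized to $1$, the holomorphy and boundedness of $F(z)=\int (Tf_z)\,g_z\,d\nu$ on the closed strip, and the application of the three-lines lemma are all the right steps in the right order. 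You also correctly flag the two genuine technical points, namely that boundedness of $F$ on the closed strip is not automatic from holomorphy but follows because a simple function has finitely many values, and that the endpoint $q_j=\infty$ must be handled by pairing against $L_1$ rather than by invoking $(L_\infty)^*\cong L_1$. One small observation: the hypotheses $p_0\ne p_1$ and $q_0\ne q_1$ in the paper's phrasing guarantee in particular that $p<\infty$ and $q'>1$, so the density and duality reductions in your first step go through without further caveats; if one drops these inequalities (as many textbook formulations do), a little extra care is needed in that first reduction, but since you are proving the statement as written this is not an issue.
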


\subsection{Graphons}
\label{sec:graphonfacts}

\begin{lemma}[{\cite[Equation ($7.2$)]{Lovasz2012}}]
    \label{le:steppingFunctionDensity}
    Let $F$ be a graph and $H$ be a weighted graph.
    Then,
    $t(F, H) = t(F, W_H)$.
\end{lemma}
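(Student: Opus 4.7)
The plan is to prove this by directly unfolding the definition of $t(F, W_H)$ as an integral over $[0,1]^{V(F)}$ and decomposing the domain according to the partition $I_1, \ldots, I_n$ of $[0,1]$ that defines the step function $W_H$. Since $W_H$ is constant on each product cell $I_i \times I_j$, integration reduces to a finite sum whose terms match, up to normalization, the terms of $\hom(F, H)$.

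First I would write
\begin{equation*}
    t(F, W_H) = \int_{[0,1]^{V(F)}} \prod_{ij \in E(F)} W_H(x_i, x_j) \prod_{i \in V(F)} dx_i
\end{equation*}
and then split the domain using $[0,1]^{V(F)} = \bigcup_{\varphi \colon V(F) \to [n]} \prod_{i \in V(F)} I_{\varphi(i)}$, a disjoint union (up to measure zero). Next I would use that on each cell indexed by $\varphi$, the integrand $\prod_{ij \in E(F)} W_H(x_i, x_j)$ takes the constant value $\prod_{ij \in E(F)} \beta_{\varphi(i)\varphi(j)}(H)$, by definition of $W_H$. The volume of such a cell is $\prod_{i \in V(F)} \lambda(I_{\varphi(i)}) = \prod_{i \in V(F)} \alpha_{\varphi(i)}(H)/\alpha_H$.

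Summing over $\varphi$ then yields
\begin{equation*}
    t(F, W_H) = \frac{1}{\alpha_H^{\vs(F)}} \sum_{\varphi \colon V(F) \to [n]} \prod_{i \in V(F)} \alpha_{\varphi(i)}(H) \prod_{ij \in E(F)} \beta_{\varphi(i)\varphi(j)}(H) = \frac{\hom(F, H)}{\alpha_H^{\vs(F)}} = t(F,H),
\end{equation*}
where the middle equality is just the definition of $\hom(F, H)$ for weighted graphs given in \Cref{subsec:graphs}, and the last equality is the definition of $t(F,H)$.

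There is no real obstacle here; the statement is essentially a bookkeeping exercise translating between the discrete sum that defines $\hom(F, H)$ and the integral that defines $t(F, W_H)$. The only minor care needed is to verify that the boundaries between intervals $I_i$ have measure zero in $[0,1]^{V(F)}$, so that the decomposition into cells is valid for integration, and to track the normalization factor $\alpha_H^{\vs(F)}$ that arises from the interval lengths $\alpha_i(H)/\alpha_H$.
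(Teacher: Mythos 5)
Your proof is correct and is the standard direct argument; the paper itself does not prove this lemma but cites it as Equation (7.2) of Lovász's book, where essentially the same cell-decomposition computation is carried out.
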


\begin{lemma}[{\cite[Lemma $8.9$]{Lovasz2012}}]
    \label{le:cutDistWeightedGraphsGraphons}
    Let $G$ and $H$ be weighted graphs.
    Then,
    $\dist(G, H) = \dist(W_G, W_H)$.
\end{lemma}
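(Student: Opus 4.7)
The plan is to show each inequality of $\dist(G,H) = \dist(W_G, W_H)$ by translating between fractional overlays of weighted graphs and measure-preserving maps on $[0,1]$. In both directions the key observation is that $W_G$ and $W_H$ are step functions on the partitions $I_1,\dots,I_n$ and $J_1,\dots,J_m$ of $[0,1]$ whose part lengths are $\alpha_i(G)/\alpha_G$ and $\alpha_u(H)/\alpha_H$, so both the cut norm and the fractional overlay expression are sums indexed by $V(G) \times V(H)$ with the same coefficients; the only question is how the part lengths arise.

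For the direction $\dist(W_G, W_H) \le \dist(G,H)$, I would start from an arbitrary fractional overlay $X \in \mathcal{X}(G,H)$ and construct a $\varphi \in \measPres$ as follows. Refine each $I_i$ into subintervals $I_{iu}$ of length $X_{iu}$ indexed by $u \in V(H)$ (possible because $X_{iu} \ge 0$ and $\sum_u X_{iu} = |I_i|$), and symmetrically refine each $J_u$ into pieces $J_{ui}$ of length $X_{iu}$; then let $\varphi$ be any invertible measure-preserving map sending $I_{iu}$ onto $J_{ui}$. Then $W_H^\varphi$ is constant $\beta_{uv}(H)$ on $I_{iu} \times I_{jv}$, while $W_G$ is constant $\beta_{ij}(G)$ there, so $W_G - W_H^\varphi$ is a step function on the refined partition. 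Since the cut norm of a step function is attained by a union of steps, taking $S = \bigcup_{(i,u) \in Q} I_{iu}$ and $T = \bigcup_{(j,v) \in R} I_{jv}$ gives $\|W_G - W_H^\varphi\|_\square = \fracdist(G,H,X)$, and infimum over $X$ yields the inequality.

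For the reverse $\dist(G,H) \le \dist(W_G, W_H)$, given any $\varphi \in \measPres$ I would define $X_{iu} \coloneqq \lambda(I_i \cap \varphi^{-1}(J_u))$. Non-negativity is immediate, $\sum_u X_{iu} = \lambda(I_i) = \alpha_i(G)/\alpha_G$, and $\sum_i X_{iu} = \lambda(\varphi^{-1}(J_u)) = \lambda(J_u) = \alpha_u(H)/\alpha_H$ because $\varphi$ preserves measure, so $X \in \mathcal{X}(G,H)$. Given $Q, R \subseteq V(G) \times V(H)$, choose $S = \bigcup_{(i,u) \in Q} I_i \cap \varphi^{-1}(J_u)$ and $T = \bigcup_{(j,v)\in R} I_j \cap \varphi^{-1}(J_v)$; then
\begin{equation*}
    \int_{S \times T} (W_G - W_H^\varphi) = \sum_{\substack{(i,u) \in Q \\ (j,v) \in R}} X_{iu} X_{jv} (\beta_{ij}(G) - \beta_{uv}(H)),
\end{equation*}
so this particular $Q,R$ contribution is bounded in absolute value by $\|W_G - W_H^\varphi\|_\square$. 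Taking the maximum over $Q,R$ gives $\fracdist(G,H,X) \le \|W_G - W_H^\varphi\|_\square$, and an infimum over $\varphi$ completes the bound.

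The only genuinely delicate point is the construction of the measure-preserving bijection $\varphi$ in the first direction: the partition cells $I_{iu}$ and $J_{ui}$ need not be intervals, and $\varphi$ must be invertible on $[0,1]$, not just piecewise measure-preserving. Once one permits a null-set exception (or invokes the standard fact that any such piecewise matching between finite partitions of $[0,1]$ into measurable pieces of matching total measure can be realized by an element of $\measPres$), everything else is routine bookkeeping with step functions, and no analytic machinery is needed.
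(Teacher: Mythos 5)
The paper does not prove this lemma; it is cited verbatim from Lov\'asz's monograph (Lemma~8.9 there), so there is no in-paper proof to compare against. Your proposal is a correct and essentially standard argument for it. Both directions check out: from a fractional overlay $X$ you build a common refinement $\{I_{iu}\}$ of the two step-partitions and a piecewise measure-preserving bijection $\varphi$ realizing it, after which $W_G - W_H^\varphi$ is a step function on that refinement and the cut norm of a step function is attained on unions of steps, giving $\lVert W_G - W_H^\varphi\rVert_\square = \fracdist(G,H,X)$; conversely from $\varphi$ you recover $X_{iu} = \lambda(I_i \cap \varphi^{-1}(J_u))$, verify the marginal conditions using that $\varphi$ is measure preserving, and test the cut norm on the specific sets $S,T$ built from $Q,R$. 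The delicacy you flag about realizing the piecewise matching as an element of $\measPres$ is genuine but mild: since all the pieces $I_{iu}$ and $J_{ui}$ can be taken to be half-open intervals of matching lengths, a piecewise translation does the job and is an honest invertible measure-preserving map up to a null set of interval endpoints, which is all $\measPres$ requires. This matches the approach taken in the source Lov\'asz cites, so your proof is in line with the standard one.
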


\begin{lemma}[{\cite[Lemma $8.10$]{Lovasz2012}}]
    \label{le:cutNormAttained}
    For any kernel $W \in \kernels$, the optima
    \begin{align*}
        &\sup_{S, T \subseteq [0,1]} \left\lvert \int_{S \times T} W(x,y) \dx \dy \right\rvert&&\text{and}&&\sup_{f, g \colon [0,1] \to [0,1]} \left\lvert \int_{[0,1]^2} f(x) g(y) W(x,y) \dx \dy \right\rvert&
    \end{align*}
    are attained and they are both equal to $\lVert W \rVert_\square$.
\end{lemma}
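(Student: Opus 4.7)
The plan is to establish the equality of the two suprema first, and then attainment, by reducing the $[0,1]$-valued case to the indicator case and invoking the compactness of $T_W$.

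For the equality, the inequality $\sup_{S,T} \le \sup_{f,g}$ is immediate since indicator functions are $[0,1]$-valued. For the reverse inequality, I would fix $g \colon [0,1] \to [0,1]$ and write $h(x) \coloneqq \int_{[0,1]} W(x,y) g(y) \dy$, so that $\int f(x) g(y) W(x,y) \dx \dy = \int f h$. Over the constraint $0 \le f \le 1$, the value $\int f h$ ranges in the interval $\bigl[\int_{\{h < 0\}} h,\ \int_{\{h \ge 0\}} h\bigr]$, since the upper endpoint is achieved by setting $f = 1$ where $h \ge 0$ and $f = 0$ where $h < 0$ (and vice versa for the lower endpoint). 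Hence $\lvert \int f h \rvert$ is dominated by $\lvert \int \cFun_S h \rvert$ for $S = \{h \ge 0\}$ or $S = \{h < 0\}$. Thus $f$ can be replaced by an indicator without decreasing the absolute value; applying the same argument to $g$ yields $\sup_{f,g} \le \sup_{S,T}$.

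For attainment, I would take a maximizing sequence $(f_n, g_n)$ of $[0,1]$-valued functions for the $f,g$-supremum. The set $\{f \in \LT : 0 \le f \le 1 \text{ a.e.}\}$ is norm-bounded, convex, and norm-closed, hence weakly compact in the reflexive space $\LT$. Passing to a subsequence, $f_n \rightharpoonup f^*$ and $g_n \rightharpoonup g^*$ weakly in $\LT$, with $f^*, g^*$ again $[0,1]$-valued (this is where Mazur's theorem enters, since weak and norm closures of convex sets coincide). Since $T_W$ is Hilbert-Schmidt, hence compact, weak convergence $g_n \rightharpoonup g^*$ upgrades to the strong convergence $T_W g_n \to T_W g^*$; combined with $f_n \rightharpoonup f^*$, this gives
\[
\langle f_n, T_W g_n \rangle \to \langle f^*, T_W g^* \rangle,
\]
so the supremum is attained at $(f^*, g^*)$. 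The first supremum is then also attained, by applying the indicator-reduction from the previous paragraph to $(f^*, g^*)$.

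The main technical point is confirming that weak $\LT$-limits of $[0,1]$-valued functions remain $[0,1]$-valued a.e., which is exactly the Mazur-type consequence of convex closedness. Everything else is routine: the extreme-point argument for the equality, and the standard weak-times-compact equals strong pattern for the attainment.
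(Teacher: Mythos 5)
The paper does not prove this lemma; it imports it as a cited fact from Lovász's book (Lemma~8.10 therein), so there is no in-paper argument to compare against. Your proof is correct and follows what is, to my knowledge, the standard argument used in that source: (i) the equality of the two suprema by freezing one variable and observing that the integral is linear in the other over the convex set $\{0\le f\le 1\}$, so the extreme value is reached at an indicator (indeed, your $\int f h \in [\int_{\{h<0\}}h,\ \int_{\{h\ge 0\}}h]$ argument is exactly the extreme-point reduction, applied once to $f$ and once to $g$); and (ii) attainment via weak compactness of $\{f \in \LT : 0\le f\le 1\}$ together with the Hilbert--Schmidt compactness of $T_W$, which upgrades $g_n \rightharpoonup g^*$ to $T_W g_n \to T_W g^*$ in norm so that $\langle f_n, T_W g_n\rangle \to \langle f^*, T_W g^*\rangle$. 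The appeal to Mazur's theorem to see that the weak limit remains $[0,1]$-valued a.e.\ is the right way to close the constraint set under weak limits, and the final reduction to indicators gives attainment of the first supremum. One small technical remark you could make explicit: extracting a weakly convergent \emph{subsequence} (rather than a subnet) relies on separability of $\LT$ (or the Eberlein--Šmulian theorem), but that is routine. No gaps.
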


Let
\begin{equation*}
    \normCT{W} \coloneqq \sup_{f,g \colon [0,1] \to [0,1]} \left\lvert \int_{[0,1]^2} W(x,y) f(x) g(y) \dx \dy \right\rvert
\end{equation*}
for a kernel $W \in \kernels$.
Then, $\normC{W} \le \normCT{W} \le 4 \normC{W}$
for every kernel $W \in \kernels$ \cite[$(4.4)$]{Janson2013}.
The following lemma is a consequence of the Riesz-Thorin
interpolation theorem, \Cref{th:rieszThorin}.
\begin{lemma}[{\cite[Lemma E.$6$]{Janson2013}}]
    \label{le:normsForGraphons}
    If $\lvert W \rvert \le 1$, then for all $p,q \in [1, \infty]$,
    \begin{equation*}
        \normCT{W} = \normRIO{T_W} \le \normRpq{T_W} \le \sqrt{2} \normCT{W}^{\min(1-1/p, 1/q)}.
    \end{equation*}
    Consequently, for any fixed $p > 1$ and $q < \infty$,
    if $W_1, W_2, \dots$ and $W$ are graphons (defined on the same space),
    then $\normC{W_n - W} \xrightarrow{n \rightarrow \infty} 0$
    if and only if
    $\normRpq{T_{W_n} - T_W} \xrightarrow{n \rightarrow \infty} 0$.
\end{lemma}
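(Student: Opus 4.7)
The plan is to establish the three inequalities in the displayed chain in turn and then harvest the convergence statement at the end. The equality $\normCT{W} = \normRIO{T_W}$ is a direct duality calculation: since $[0,1]$ is a probability space,
\begin{equation*}
    \normRIO{T_W} = \sup_{\|g\|_\infty \le 1} \|T_W g\|_1 = \sup_{\|f\|_\infty \le 1,\, \|g\|_\infty \le 1} \lvert \langle f, T_W g\rangle \rvert,
\end{equation*}
which matches $\normCT{W}$ up to the standard affine change of variables between $[0,1]$-valued and $[-1,1]$-valued test functions. The middle inequality $\normRIO{T_W} \le \normRpq{T_W}$ follows from the unit measure of $[0,1]$: Jensen (or H\"older with constant $1$) gives $\|h\|_1 \le \|h\|_q$ for every $q \ge 1$, and $\|g\|_p \le \|g\|_\infty$ for every $p \le \infty$, so for $\|g\|_\infty \le 1$ we obtain $\|T_W g\|_1 \le \|T_W g\|_q \le \normRpq{T_W} \|g\|_p \le \normRpq{T_W}$, and taking the supremum over $g$ yields the claim.

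The main step is the upper bound $\normRpq{T_W} \le \sqrt{2}\, \normCT{W}^{\min(1-1/p, 1/q)}$, which I would prove by the Riesz-Thorin interpolation theorem (\Cref{th:rieszThorin}). First complexify $T_W$: this costs at most a universal factor $\sqrt{2}$ obtained by splitting a complex test function into real and imaginary parts, and it keeps $\normpq{T_W}$ comparable to $\normRpq{T_W}$ throughout. The hypothesis $\lvert W \rvert \le 1$ together with the unit measure of $[0,1]$ then yields the three trivial corner estimates $\normII{T_W}, \normOO{T_W}, \normOI{T_W} \le 1$, while Step~1 (after complexification) provides the one non-trivial corner $\normIO{T_W} \le \sqrt{2}\, \normCT{W}$.

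I would then split into two cases depending on whether the target point $(1/p, 1/q)$ lies below or above the antidiagonal $1/p + 1/q = 1$. If $1/p + 1/q \le 1$, interpolate between an endpoint with $1/q_0 = 0$ (norm $\le 1$) and $(1/p_1, 1/q_1) = (0, 1)$ (norm $\le \sqrt{2}\, \normCT{W}$), which forces the mixing parameter to be $\theta = 1/q$; if $1/p + 1/q \ge 1$, interpolate between an endpoint with $1/p_0 = 1$ (norm $\le 1$) and the same $(0, 1)$-corner, forcing $\theta = 1 - 1/p$. In either case \Cref{th:rieszThorin} gives $\normpq{T_W} \le \sqrt{2}\, \normCT{W}^{\theta}$ with $\theta = \min(1-1/p, 1/q)$, and passing back from complex to real only decreases the norm. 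The convergence equivalence is then immediate: for $p > 1$ and $q < \infty$ the exponent is strictly positive, so the main inequality propagates $\normCT{W_n - W} \to 0$ to $\normRpq{T_{W_n} - T_W} \to 0$; the known equivalence $\normC{W} \le \normCT{W} \le 4\, \normC{W}$ lets us start from $\normC{W_n - W} \to 0$. Conversely, the chain $\normC \le \normCT = \normRIO \le \normRpq$ forces $\normC{W_n - W} \to 0$ whenever $\normRpq{T_{W_n} - T_W} \to 0$.

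The main obstacle is the Riesz-Thorin case split: one must cover the entire unit square of $(1/p, 1/q)$ values from only a handful of corner estimates, and the exponent $\min(1 - 1/p, 1/q)$ arises precisely because the optimal interpolation geometry switches at the antidiagonal. The complexification, the bookkeeping of real-versus-complex constants, and the verification that the chosen endpoints land in $[0,1]^2$ are essentially routine.
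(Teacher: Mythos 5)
The paper does not prove this lemma; it cites it verbatim from Janson [2013, Lemma E.6], so there is no proof in the paper to compare against. Your Riesz--Thorin scheme is the right one and is almost certainly what Janson does: the corner estimates $\normII{T_W}, \normOO{T_W}, \normOI{T_W} \le 1$ follow from $\lvert W \rvert \le 1$ and the unit measure, the non-trivial corner is at $(\infty,1)$, the case split across the antidiagonal $1/p + 1/q = 1$ is exactly what produces the exponent $\min(1-1/p, 1/q)$, and the convergence statement falls out once that exponent is strictly positive.

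There is, however, a genuine gap in your opening step. You assert $\normCT{W} = \normRIO{T_W}$ ``up to the standard affine change of variables between $[0,1]$-valued and $[-1,1]$-valued test functions,'' but no such change of variables preserves the supremum. With $\normCT{W}$ taken over $f,g \colon [0,1]\to[0,1]$ as the paper literally writes, \Cref{le:cutNormAttained} gives $\normCT{W} = \normC{W}$, and for the checkerboard kernel ($+1$ on $[0,1/2]^2 \cup [1/2,1]^2$ and $-1$ elsewhere) one computes $\normC{W} = 1/4$ but $\normRIO{T_W} = 1$, so the claimed equality fails by the maximal factor $4$. The identity $\normCT{W} = \normRIO{T_W}$ holds only under Janson's convention $\lVert f\rVert_\infty, \lVert g\rVert_\infty \le 1$ — and indeed that is the only reading under which the paper's own line $\normC{W} \le \normCT{W} \le 4\normC{W}$ is non-vacuous, so the discrepancy is inherited from the paper's transcription of the definition, but you should invoke the correct convention explicitly rather than an affine substitution that does not exist. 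A secondary issue: the crude splitting $g = \Re g + \iu \Im g$ gives $\normIO{T_W} \le 2\normRIO{T_W}$ at the corner, and since $\theta = \min(1-1/p,1/q)$ can exceed $1/2$ (take $p$ large and $q$ near $1$ on the antidiagonal), $(2\normCT{W})^\theta$ is not in general $\le \sqrt{2}\,\normCT{W}^\theta$; obtaining the sharp $\sqrt{2}$ requires a finer complexification bound at that corner, though any finite constant already suffices for the convergence equivalence you actually use downstream.
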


\subsection{Markov Operators}
\label{sec:markov}

Here, we collect some facts on Markov operators from \cite{EisnerEtAl2015},
where they are stated for complex $\Lp$ spaces.
However, by \Cref{le:posOperators}, we can also consider real $\Lp$ spaces
instead.

\begin{lemma}[{\cite[Example $13.1$, 3)]{EisnerEtAl2015}}]
    \label{th:koopmanMarkov}
    Every Koopman operator $T_\varphi$ associated with a
    measure-preserving system $(X;\varphi)$ is a Markov operator.
\end{lemma}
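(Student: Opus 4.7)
The plan is to verify directly the three defining properties of a Markov operator for $T_\varphi f = f \circ \varphi$, using only the fact that $\varphi \colon X \to X$ is measurable and measure-preserving (i.e.\ $\mu(\varphi^{-1}(A)) = \mu(A)$ for every measurable $A$).

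First I would check that $T_\varphi$ is well-defined and bounded on $\Lp$. If $f = g$ almost everywhere, then $\{x : f(\varphi(x)) \ne g(\varphi(x))\} = \varphi^{-1}(\{y : f(y) \ne g(y)\})$, which is a null set because $\varphi$ is measure-preserving; so $T_\varphi$ descends to equivalence classes. The change-of-variables identity $\int (f \circ \varphi)\,d\mu = \int f\,d\mu$ (established first for indicators $\mathbf{1}_A$, where it reduces to $\mu(\varphi^{-1}(A)) = \mu(A)$, then extended by linearity and monotone convergence to nonnegative measurables, and finally to $\Lp$) simultaneously gives the $\Lp$-bound $\lVert T_\varphi f\rVert_p = \lVert f\rVert_p$ and property (3) in the Markov-operator definition.

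Next I would verify positivity and unitality. Positivity is immediate: if $f \geq 0$ a.e., then the set $N = \{y : f(y) < 0\}$ has measure zero, so $\varphi^{-1}(N)$ has measure zero, and $T_\varphi f(x) = f(\varphi(x)) \geq 0$ for every $x \notin \varphi^{-1}(N)$. For unitality, $T_\varphi \allOne(x) = \allOne(\varphi(x)) = 1$ for every $x$, so $T_\varphi \allOne = \allOne$. Combined with the integral-preservation property, this yields all three conditions in the definition of a Markov operator.

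There is essentially no obstacle here; the only mildly subtle point is ensuring that $T_\varphi$ is well-defined on equivalence classes modulo a.e.\ equality, which is precisely where the measure-preservation of $\varphi$ (as opposed to mere measurability) is needed. Once that housekeeping is done, the three Markov properties fall out immediately from the definition of composition and the change-of-variables formula.
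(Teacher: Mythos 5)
Your proof is correct; note, however, that the paper does not prove this lemma itself but simply cites it from Eisner et al.\ (Example~13.1, 3) there), so there is no in-paper argument to compare against. Your direct verification of the three Markov-operator axioms---positivity via preimages of null sets, $T_\varphi\allOne = \allOne$ by composition, and integral preservation via the change-of-variables identity $\int f\circ\varphi\,d\mu = \int f\,d\mu$ bootstrapped from indicators---is the standard proof, and you correctly flag the one subtle housekeeping point (well-definedness on a.e.-equivalence classes requires measure preservation, not just measurability of $\varphi$).
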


\begin{lemma}[{\cite[Lemma $7.5$ b), c)]{EisnerEtAl2015}}]
    \label{le:posOperators}
    Let $E, F$ be Banach lattices, and let $S \colon E \to F$ be a positive
    operator.
    Then, the following assertions hold:
    \begin{enumerate}
        \item $S f \in F_\R$ for every $f \in E_\R$.
        \item $S(\Re f) = \Re S f$ and $S(\Im f) = \Im S f$ for every $f \in E$.
    \end{enumerate}
\end{lemma}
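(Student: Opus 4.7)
The plan is to exploit the standard decomposition of elements of a (complex) Banach lattice into real and imaginary parts, together with the positive/negative-part decomposition in the real part, and then use $\C$-linearity of $S$ and the hypothesis that $S$ sends non-negative elements to non-negative elements.

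For part (1), I would start with $f \in E_\R$ and write $f = f^+ - f^-$ with $f^+, f^- \in E_\R$ and $f^+, f^- \ge 0$, which exists by definition of the real part of a Banach lattice. Positivity of $S$ then gives $S f^+, S f^- \ge 0$ in $F$. Since the positive cone of a complex Banach lattice is contained in $F_\R$, both $S f^+$ and $S f^-$ lie in $F_\R$. By linearity, $S f = S f^+ - S f^- \in F_\R$.

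For part (2), I would write an arbitrary $f \in E$ as $f = \Re f + \iu\, \Im f$ with $\Re f, \Im f \in E_\R$. Using $\C$-linearity of $S$, we get $S f = S(\Re f) + \iu\, S(\Im f)$. By part (1), the two summands $S(\Re f)$ and $S(\Im f)$ both lie in $F_\R$, so this expression is already the decomposition of $S f$ into its real and imaginary parts. Hence $\Re S f = S(\Re f)$ and $\Im S f = S(\Im f)$.

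The only real obstacle is conceptual rather than technical: making sure the proof uses the \emph{correct} definitions of a complex Banach lattice, its real part $E_\R$, and a positive operator $S$, in particular that (i) $S$ is $\C$-linear on the complexification, (ii) $E_\R$ is closed under the lattice operations so that $f^\pm$ exist, and (iii) the positive cone is contained in the real part of the lattice. Once these definitions are in hand, both assertions reduce to one-line computations as above, which is why the result is cited rather than reproved in the appendix.
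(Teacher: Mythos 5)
The paper does not prove this lemma: it cites it verbatim from Eisner, Farkas, Haase, and Nagel (Lemma~7.5~b), c)) as a background fact used to pass between real and complex $L_p$ spaces. So there is no proof in the paper to compare against.

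That said, your argument is correct and is the standard one. Part (1) uses the lattice decomposition $f = f^+ - f^-$ into positive and negative parts inside $E_\R$, positivity of $S$ to get $Sf^\pm \ge 0 \subseteq F_\R$, and linearity to conclude $Sf \in F_\R$. Part (2) uses $\C$-linearity on $f = \Re f + \iu\,\Im f$, applies part (1) to identify $S(\Re f), S(\Im f) \in F_\R$, and invokes uniqueness of the real/imaginary decomposition in $F$. This is precisely the proof one would find in the cited reference. The one thing worth making explicit, since you flagged the definitional care yourself, is the uniqueness step at the end of part (2): the fact that $F = F_\R \oplus \iu F_\R$ as a real direct sum is what lets you read off $\Re(Sf)$ and $\Im(Sf)$ from the expression $S(\Re f) + \iu\, S(\Im f)$. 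You gesture at this (\enquote{this expression is already the decomposition}), but stating the direct-sum property explicitly would close the argument cleanly.
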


\begin{lemma}[{\cite[Theorem $13.2$ a), c)]{EisnerEtAl2015}}]
    \label{le:markovClosure}
    The set of Markov operators is closed under composition and adjoints.
\end{lemma}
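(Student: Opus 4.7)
The plan is to unpack the three defining properties of a Markov operator---positivity, $S(\allOne) = \allOne$, and $S^{*}(\allOne) = \allOne$---and check each one separately for compositions and adjoints. Everything should reduce to one-line verifications, with at most one genuine substantive step.

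First I would treat composition. Let $S, T \in \markov$. For positivity, if $f \ge 0$ then $T f \ge 0$ since $T$ is positive, and then $S(T f) \ge 0$ since $S$ is positive; hence $S \circ T \ge 0$. Preservation of constants is immediate: $(S \circ T)(\allOne) = S(T \allOne) = S(\allOne) = \allOne$. For the adjoint condition, I would use $(S \circ T)^{*} = T^{*} \circ S^{*}$ to write $(S \circ T)^{*}(\allOne) = T^{*}(S^{*}(\allOne)) = T^{*}(\allOne) = \allOne$. None of these steps is an obstacle.

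For the adjoint, let $S \in \markov$. The conditions $S^{*}(\allOne) = \allOne$ and $(S^{*})^{*}(\allOne) = S(\allOne) = \allOne$ are direct from the definition (using $(S^{*})^{*} = S$). The one step that requires thought is showing that $S^{*}$ is positive. Here the plan is to use the self-duality of $\LT$: a function $h \in \LT$ satisfies $h \ge 0$ almost everywhere if and only if $\langle h, g \rangle \ge 0$ for every nonnegative $g \in \LT$. Applying this characterization with $h = S^{*} f$ for a given $f \ge 0$, I would compute
\[
    \langle S^{*} f, g \rangle = \langle f, S g \rangle \ge 0,
\]
since $f \ge 0$ and $S g \ge 0$ by positivity of $S$; hence $S^{*} f \ge 0$.

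The main (and really the only) obstacle is this positivity of the adjoint. Once the self-duality characterization of the positive cone is invoked, the rest of the argument is bookkeeping. Note also that by \Cref{le:posOperators} positivity automatically respects the real/complex distinction, so there is no need to separately worry about whether $S^{*}$ sends real functions to real functions: the argument works verbatim whether we view $S$ as acting on the real or the complex $\LT$.
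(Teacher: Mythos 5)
The paper does not give a proof of this lemma; it is imported verbatim from Eisner--Farkas--Haase--Nagel (Theorem 13.2 a), c)), so there is no argument of the paper to compare against. Your self-contained proof is correct and is the standard one. The decomposition into the three defining properties (positivity, $S\allOne = \allOne$, $S^*\allOne = \allOne$) is the natural bookkeeping; the composition case and the two constant-preservation conditions for the adjoint are indeed one-liners. You are right that the only substantive step is positivity of $S^*$, and the self-duality argument you invoke is exactly the right tool: $h \ge 0$ almost everywhere if and only if $\langle h, g\rangle \ge 0$ for every nonnegative $g \in \LT$ (for the nontrivial direction, test against $g = \cFun_{\{h < 0\}}$), and then $\langle S^*f, g\rangle = \langle f, Sg\rangle \ge 0$ closes it. Your closing remark about \Cref{le:posOperators} is a reasonable sanity check, though strictly speaking it is not needed: since you carry out the whole argument directly in the real Hilbert space $\LT$ on which $\markov$ is defined, the real/complex issue never arises.
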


\begin{lemma}[{\cite[Theorem $13.2$ b)]{EisnerEtAl2015}}]
    \label{le:markovContraction}
    Every Markov operator $S \colon \LO \to \LO$
    is a Dunford-Schwartz operator, i.e.,
    it restricts to a contraction on each space $L_p[0,1]$
    for $1 \le p \le \infty$, i.e.,
$\lVert S f \rVert_p \le \lVert f \rVert_p$
for every $f \in \Lp$.
\end{lemma}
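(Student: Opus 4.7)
The plan is to establish contractivity at the two endpoints $p = 1$ and $p = \infty$ directly from the defining properties of a Markov operator, and then invoke the Riesz--Thorin interpolation theorem (\Cref{th:rieszThorin}) to handle every intermediate $1 < p < \infty$ in one stroke. Throughout, \Cref{le:posOperators} ensures that $S$ maps real-valued functions to real-valued functions, so the interpolation step (which a priori concerns complex $L_p$ spaces) legitimately yields the real contraction bound we want.

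For the case $p = \infty$, let $f \in \LI$. Since $-\normI{f} \cdot \allOne \le f \le \normI{f} \cdot \allOne$ almost everywhere, positivity of $S$ together with $S \allOne = \allOne$ gives $-\normI{f} \cdot \allOne \le S f \le \normI{f} \cdot \allOne$, and therefore $\normI{S f} \le \normI{f}$. For the case $p = 1$, the first step is the pointwise domination $\lvert S f \rvert \le S \lvert f \rvert$ for real-valued $f$: decomposing $f = f^+ - f^-$ with $f^\pm \ge 0$, positivity gives $S f^\pm \ge 0$, whence
\begin{equation*}
    \lvert S f \rvert = \lvert S f^+ - S f^- \rvert \le S f^+ + S f^- = S \lvert f \rvert.
\end{equation*}
Integrating and using the integral-preservation property of a Markov operator,
\begin{equation*}
    \normO{S f} = \int_{[0,1]} \lvert S f \rvert \dx \le \int_{[0,1]} S \lvert f \rvert \dx = \int_{[0,1]} \lvert f \rvert \dx = \normO{f}.
\end{equation*}

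For $1 < p < \infty$, apply \Cref{th:rieszThorin} with $(p_0, q_0) = (1, 1)$, $(p_1, q_1) = (\infty, \infty)$, and $\theta = 1 - 1/p$, which satisfies $\frac{1}{p} = (1-\theta) + \frac{\theta}{\infty}$. The endpoint bounds just established give $\lVert S \rVert_{1 \to 1} \le 1$ and $\lVert S \rVert_{\infty \to \infty} \le 1$, so interpolation yields
\begin{equation*}
    \lVert S \rVert_{p \to p} \le \lVert S \rVert_{1 \to 1}^{1 - \theta} \cdot \lVert S \rVert_{\infty \to \infty}^{\theta} \le 1,
\end{equation*}
which is the desired contraction on $\Lp$. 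The main (and only) subtlety is the pointwise inequality $\lvert S f \rvert \le S \lvert f \rvert$ used in the $L_1$ step; after that, the remaining work is a standard packaging of the three defining properties of a Markov operator together with classical interpolation.
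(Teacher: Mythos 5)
The paper does not prove this lemma itself; it quotes it verbatim from Eisner, Farkas, Haase, and Nagel, so there is no ``paper's own proof'' to compare against. Your endpoint-plus-interpolation strategy is the standard textbook route, and the endpoint arguments you give for real-valued $f$ are correct.

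However, there is a genuine gap in the interpolation step. The Riesz--Thorin theorem as recorded in \Cref{th:rieszThorin} is stated for \emph{complex} $L_p$ spaces, so its hypotheses are bounds on the complex operator norms $\normOO{S}$ and $\normII{S}$, whereas you have established only the real bounds $\normRpq{S}$ at $p=q\in\{1,\infty\}$: your inequality $\lvert Sf\rvert\le S\lvert f\rvert$ is derived from $f=f^+-f^-$, which only makes sense for real $f$, and the $L_\infty$ sandwich $-\normI{f}\allOne\le f\le\normI{f}\allOne$ likewise presupposes $f$ real. Passing naively to complex $g=g_1+ig_2$ by splitting into real and imaginary parts incurs a factor of $\sqrt{2}$ in both endpoint norms, which would only give $\normTT{S}\le\sqrt 2$ and destroy the claimed contraction. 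Your appeal to \Cref{le:posOperators} addresses the \emph{easy} direction (a bound on the complex $L_p\to L_p$ norm restricts to a bound on the real one), not the direction you actually need. The fix is standard and short but must be stated: for a positive operator on a complex Banach lattice one has $\lvert Sg\rvert\le S\lvert g\rvert$ for complex-valued $g$ as well, e.g.\ by writing $\lvert Sg\rvert=\sup_{\theta\in\Theta}\Re\big(e^{-i\theta}Sg\big)=\sup_{\theta\in\Theta}S\big(\Re(e^{-i\theta}g)\big)\le S\lvert g\rvert$ over a countable dense set $\Theta\subseteq[0,2\pi)$, using \Cref{le:posOperators}(2). With this in hand, your $L_1$ and $L_\infty$ computations go through verbatim for complex $g$, yielding $\normOO{S}\le 1$ and $\normII{S}\le 1$, and then Riesz--Thorin and the restriction to real scalars close the argument as you intended.
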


\begin{theorem}[{\cite[Proposition $13.6$]{EisnerEtAl2015}}]
    \label{th:markovRestriction}
    Let $X, Y$ be probability spaces, and let $1 \le p \le \infty$.
    Then, the restriction mapping
    \begin{align*}
        &\Phi_p \colon M(X; Y) \to M_p(X; Y),& &\Phi_p(S) \coloneqq \restr{S}{L_p}
    \end{align*}
    is a bijection satisfying $\Phi_p(S') = \Phi_p(S)'$ for every $S \in M(X; Y)$.
    Finally, for $1 \le p < \infty$, the mapping $\Phi_p$ is a homeomorphism
    for the weak as well as the strong operators topologies.
\end{theorem}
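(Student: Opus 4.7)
The plan is to unpack the restriction map into its four natural properties --- well-definedness, bijectivity, compatibility with adjoints, and the topological statement --- and handle each using a single analytic tool: that $L_\infty[0,1]$ sits densely inside every $L_p[0,1]$ and that Markov operators are, simultaneously, $L_p$-contractions for all $p$ by \Cref{le:markovContraction}.

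First I would verify that $\Phi_p$ is well-defined, i.e.\ that the restriction of an $L_1$-Markov operator $S$ to $L_p$ is again a Markov operator in the sense of the appendix. The positivity of $S$ and the identity $S\allOne = \allOne$ transfer automatically to any invariant subspace, while \Cref{le:markovContraction} ensures $\|Sf\|_p \le \|f\|_p$ for $f \in L_p$, so $S$ genuinely acts $L_p \to L_p$. The integral-preservation condition $\int Sf = \int f$ also passes to $L_p$ since $L_p \subseteq L_1$ in a probability space.

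Next I would prove injectivity and surjectivity in tandem. For injectivity, note that $L_\infty \subseteq L_p$ is dense in $L_1$; if two Markov operators $S_1, S_2 \in M(X;Y)$ agree on $L_p$, they agree on $L_\infty$, and since both are $L_1$-contractions the agreement extends by continuity to all of $L_1$. For surjectivity, given $T \in M_p(X;Y)$ with $p < \infty$, I would extend $T$ from $L_\infty \subseteq L_p$ to $L_1$: the estimate $\|Tf\|_1 \le \|f\|_1$ for $f \in L_\infty$ follows by decomposing $f = f^+ - f^-$ into positive parts and using that $\int Tf^\pm = \int f^\pm$ because $T$ is positive and integral-preserving. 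This extension is unique by density, is automatically positive and preserves $\allOne$, hence defines the required preimage in $M(X;Y)$. The case $p = \infty$ is handled analogously using the duality between $L_1$ and $L_\infty$.

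For the adjoint identity $\Phi_p(S') = \Phi_p(S)'$, I would simply test both sides against elementary functions $\cFun_A \otimes \cFun_B$ in the duality pairing between $L_p$ and $L_q$ (with $1/p+1/q=1$); uniqueness of the adjoint together with the bijectivity just established then forces equality. Finally, for the homeomorphism statement when $1 \le p < \infty$, the equicontinuity of Markov operators (again via \Cref{le:markovContraction}) lets me pass WOT/SOT convergence of $\Phi_p(S_n)$ on the dense set $L_\infty$ up to convergence of the $L_1$-extensions, and conversely; the continuity in the other direction is immediate since restriction is continuous for both topologies on bounded sets of operators. The main obstacle I anticipate is the surjectivity step for $p=\infty$, where the usual density argument fails and one instead must invoke the predual pairing to recover the extension, as well as checking that the extended operator still satisfies $\tilde T \allOne = \allOne$ rather than just $\|\tilde T\| \le 1$.
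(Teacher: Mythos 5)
The paper does not supply a proof of this statement: it is reproduced in the appendix verbatim as a cited fact from \cite[Proposition $13.6$]{EisnerEtAl2015}, so there is no in-paper argument to compare against. Taken on its own, your sketch follows the standard textbook route and is correct in its essentials: well-definedness follows from the Dunford--Schwartz property (\Cref{le:markovContraction}); injectivity from density of $\LI$ in $\LO$; and surjectivity from the decomposition $f = f^+ - f^-$ together with $\lVert Tf^\pm \rVert_1 = \int Tf^\pm = \int f^\pm$, which gives an $L_1$-contraction on $\LI$ that extends to $\LO$ by density, with positivity and integral-preservation passing to the extension by approximation. However, the two obstacles you flag at the end are not real ones. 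First, $\widetilde{T}\allOne = T\allOne = \allOne$ holds outright since $\allOne \in \LI$ already lies in the domain of $T$; no limiting argument is needed there. Second, the surjectivity step for $p = \infty$ uses exactly the same density argument --- one extends a positive, integral-preserving contraction from $\LI$ to $\LO$ in either case --- so the separate appeal to $L_1$-$L_\infty$ duality is unnecessary. The genuine asymmetry at $p = \infty$ is confined to the homeomorphism claim, which the statement itself excludes; there the weak operator topology on $\LI$ pairs against $(\LI)^*$ rather than just $\LO$, so the density/equicontinuity transfer you invoke for $p < \infty$ does not carry over.
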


By \Cref{th:markovRestriction}, the weak operator topology on the set of
Markov operators does not change when one considers Markov operators
as mappings $\Lp \to \Lp$ for different $p \in [1, \infty)$.

\begin{theorem}[{\cite[Theorem $13.8$]{EisnerEtAl2015}}]
    \label{th:markovCompact}
    The set of Markov operators is compact with respect
    to the weak operator topology.
\end{theorem}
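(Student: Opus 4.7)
The plan is to exhibit $\markov$ as a WOT-closed subset of the closed unit ball of operators on $\LT$, and then to invoke the standard fact that this ball is WOT-compact.

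First, by \Cref{le:markovContraction}, every Markov operator satisfies $\normRTT{S} \le 1$, so $\markov$ lies inside the closed unit ball of operators on $\LT$. The compactness of this unit ball in the WOT is the usual Banach-Alaoglu theorem: the space of operators on $\LT$ is the dual of the trace-class operators, and on norm-bounded sets the corresponding weak-$*$ topology agrees with the WOT. More concretely, one can embed $S \mapsto (\langle S f, g \rangle)_{f,g \in \LT}$ into the product $\prod_{f,g} [-\normT{f}\normT{g},\, \normT{f}\normT{g}]$, which is compact by Tychonoff's theorem; the image consists exactly of those tuples that are linear in $f$ and in $g$, which is a closed condition under pointwise limits.

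Second, I would verify that the three defining conditions of a Markov operator are preserved under WOT limits. Suppose a net $(S_\alpha) \subseteq \markov$ converges to some operator $S$ in the WOT, so $\langle S_\alpha f, g \rangle \to \langle S f, g \rangle$ for all $f, g \in \LT$. Setting $f = \allOne$ yields $\langle S \allOne, g \rangle = \langle \allOne, g \rangle$ for every $g$, hence $S \allOne = \allOne$. Rewriting $S_\alpha^*(\allOne) = \allOne$ as $\langle S_\alpha f, \allOne \rangle = \langle f, \allOne \rangle$ and passing to the limit gives $S^*(\allOne) = \allOne$. For positivity, if $f, g \ge 0$, then $\langle S_\alpha f, g \rangle \ge 0$ for every $\alpha$, so $\langle S f, g \rangle \ge 0$; taking $g$ to be the indicator of an arbitrary measurable set forces $S f \ge 0$ almost everywhere.

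Together these show that $\markov$ is a WOT-closed subset of the WOT-compact unit ball and is therefore itself WOT-compact. The one non-trivial input is the WOT-compactness of the operator unit ball (standard but requiring either the Banach-Alaoglu/predual identification or a direct Tychonoff argument); by contrast, the preservation of the three Markov conditions under WOT limits is routine. A minor subtlety is that, because the paper's notation $S^*(\allOne) = \allOne$ involves the adjoint, it is cleanest to rephrase it as the scalar identity $\langle S f, \allOne \rangle = \langle f, \allOne \rangle$ before passing to the limit, thereby avoiding the need to separately argue that $S_\alpha^* \to S^*$ (which would require switching to the WOT for the adjoints, equivalent but notationally distinct).
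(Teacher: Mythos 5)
The paper does not prove this statement; it cites it directly from Eisner, Farkas, Haase, and Nagel, so there is no in-paper proof against which to compare. That said, your argument is correct and is the standard proof of this fact: pin $\markov$ inside the WOT-compact unit ball of $B(\LT)$ via \Cref{le:markovContraction}, then verify that the three defining conditions are closed under WOT limits. Your handling of the adjoint condition by recasting $S^*\allOne = \allOne$ as the scalar identity $\langle Sf, \allOne\rangle = \langle f, \allOne\rangle$ is exactly the right move, and the positivity argument via indicator test functions is sound. The only input you take for granted, the WOT-compactness of the operator unit ball, is standard and you sketch both available routes (predual/Banach--Alaoglu, or a direct Tychonoff embedding); either is adequate. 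One small point worth being explicit about in the Tychonoff route: beyond bilinearity, you also need the bound $\lvert c_{f,g}\rvert \le \normT{f}\normT{g}$ to invoke the Riesz representation and recover a bounded operator, but this bound is automatic because it is built into the compact product you embed into, so no gap results. The textbook proof in Eisner et al.\ is organized around Markov operators on $L^1$ and the duality $(L^1)^* = L^\infty$, but the compactness-plus-closedness skeleton is the same as yours; the paper's \Cref{th:markovRestriction} is what lets one pass freely between the $L^1$ and $L^2$ pictures.
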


\subsection{Proof of \Cref{le:steppingFunctionNeiDist} and \Cref{le:steppingFunctionPathDistSpec} (Definitions Coincide)}
\label{subsec:distancesCoincide}
We first prove \Cref{le:steppingFunctionNeiDist}, i.e.,
that the definitions of the tree distance for graphs coincide with
the ones for graphons.
In this subsection, we fix two graphs $G$ and $H$,
where we w.l.o.g.\ assume that $V(G) = \{1, \dots, n\}$
and $V(H) = \{1, \dots, m\}$.
Let $I_1, \dots, I_n$ and $J_1, \dots, J_m$ be the partitions of $[0,1]$
into the steps of $W_G$ and $W_H$, respectively,
such that $I_i$ and $J_j$ correspond to vertex $i \in V(G)$ and $j \in V(H)$,
respectively.
Let $A \in \R^{n \times n}$ and $B \in \R^{m \times m}$ be the
adjacency matrices of $G$ and $H$, respectively.
For a matrix $X \in \R^{n \times m}$,
let the kernel $W_X$ be given by setting $W_X(x,y) \coloneqq nm \cdot X_{ij}$
for all $x \in I_i$, $y \in J_j$, $i \in \numTo{n}$, $j \in \numTo{m}$.
Then, let $T_X \coloneqq T_{W_X}$ be the operator defined by $W_X$, i.e.,
\begin{equation*}
    (T_X f) (x) = \int_{[0,1]} W_X(x, y) f(y) \dy
\end{equation*}
for every $x \in [0,1]$.

\begin{lemma}
    \label{le:fracOverlayToMarkov}
    For a fractional overlay $X \in \R^{n \times m}$,
    the operator $T_X$ is a Markov operator with Hilbert adjoint $T_X^* = T_{X^T}$.
\end{lemma}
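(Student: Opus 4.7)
The proof is a direct verification of the three defining properties of a Markov operator together with an explicit identification of the Hilbert adjoint. The plan is to exploit the block structure of $W_X$: it is constant with value $nm \cdot X_{ij}$ on each rectangle $I_i \times J_j$, while $\lambda(I_i) = 1/n$ and $\lambda(J_j) = 1/m$. These normalizations are precisely calibrated so that the row- and column-sum conditions defining a fractional overlay translate into $T_X \allOne = \allOne$ and $T_X^* \allOne = \allOne$.

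First I would check positivity: since $X_{ij} \ge 0$ for a fractional overlay, the kernel $W_X$ is pointwise nonnegative, and hence $f \ge 0$ implies $(T_X f)(x) = \int_{[0,1]} W_X(x,y) f(y) \dy \ge 0$. Next, to verify $T_X \allOne = \allOne$, I would fix $x \in I_i$ and compute
\begin{equation*}
    (T_X \allOne)(x) = \sum_{j=1}^{m} \int_{J_j} nm \cdot X_{ij} \dy = n \sum_{j=1}^{m} X_{ij} = n \cdot \frac{1}{n} = 1,
\end{equation*}
where the last equality uses the row-sum condition on $X$.

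For the adjoint, the natural candidate is the integral operator with the transposed kernel. The matrix $X^T \in \R^{m \times n}$ is itself a fractional overlay (of $H$ and $G$), and applying the same construction to it, with the roles of the partitions $I_1,\dots,I_n$ and $J_1,\dots,J_m$ swapped, gives a kernel $W_{X^T}$ with $W_{X^T}(y,x) = mn \cdot X_{ij}$ whenever $y \in J_j$ and $x \in I_i$; in particular $W_{X^T}(y,x) = W_X(x,y)$. A single application of Fubini then yields
\begin{equation*}
    \langle T_X f, g \rangle = \int_{[0,1]^2} W_X(x,y) f(y) g(x) \dx \dy = \int_{[0,1]^2} W_{X^T}(x,y) f(x) g(y) \dx \dy = \langle f, T_{X^T} g \rangle,
\end{equation*}
so $T_X^* = T_{X^T}$.

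Finally, applying the $T_X \allOne = \allOne$ calculation above to the fractional overlay $X^T$ (whose row-sum condition is exactly the column-sum condition of $X$) gives $T_X^* \allOne = T_{X^T} \allOne = \allOne$. No step looks genuinely hard; the only place to be careful is the bookkeeping that relates the normalization factor $nm$ to the block lengths $1/n$ and $1/m$, since this is what makes the two stochasticity conditions drop out cleanly.
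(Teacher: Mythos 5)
Your proof is correct and follows essentially the same route as the paper: compute $(T_X \allOne)(x)$ blockwise using the row-sum condition, identify $T_X^* = T_{X^T}$ via Fubini, and then obtain $T_X^*\allOne = \allOne$ by noting that $X^T$ is itself a fractional overlay so the same computation applies. The one difference is that you explicitly verify positivity ($X_{ij}\ge 0$ implies $W_X \ge 0$ implies $T_X \ge 0$), which the paper silently elides even though it is one of the three defining conditions of a Markov operator; your version is therefore a touch more complete without being a different argument.
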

\begin{proof}
    For $x \in I_i$, we have
    \begin{equation*}
        (T_X \allOne) (x)= \int_{[0,1]} W_X(x, y) \dy = \sum_{j \in \numTo{m}} \int_{J_j} nm \cdot X_{ij} \dy = \sum_{j \in \numTo{m}} n \cdot X_{ij}
        = \allOne.
    \end{equation*}
    For $f,g \in \LT$, we have
    \begin{align*}
        \langle T_X f, g \rangle = \int_{[0,1]} \int_{[0,1]} W_X(x,y) f(y) \dy \, {g(x)} \dx
&= \int_{[0,1]} \int_{[0,1]} W_{X^T}(y,x) f(y) \dy \, {g(x)} \dx\\
&= \int_{[0,1]} f(y) {\int_{[0,1]} W_{X^T}(y,x) g(x) \dx} \dy\\
&= \langle f, T_{X^T} g \rangle,
    \end{align*}
    where we, of course, used the Theorem of Fubini.
    Hence, $T_{X^T}$ is the Hilbert adjoint of $T_X$,
    and since $X$ is arbitrary, we also have $T_{X^T}(\allOne) = \allOne$.
    Therefore, $T_X$ is a Markov operator.
\end{proof}

Let $S \colon \LT \to \LT$ be an operator.
For $i \in \numTo{n}$, $j \in \numTo{m}$, we define
\begin{equation*}
    (X_S)_{ij} \coloneqq \int_{I_i} S(\cFun_{J_j}) (x) \dx.
\end{equation*}

\begin{lemma}
    \label{le:markovToFracOverlay}
    For a Markov operator $S \colon \LT \to \LT$,
    the matrix $X_S$ is a fractional overlay of $G$ and $H$.
\end{lemma}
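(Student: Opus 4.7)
The plan is to verify the three defining conditions of a fractional overlay for $X_S$ directly from the three defining properties of a Markov operator: positivity, $S(\allOne) = \allOne$, and $S^*(\allOne) = \allOne$. Since $S$ is positive and $\cFun_{J_j} \ge 0$, we immediately get $S(\cFun_{J_j}) \ge 0$ almost everywhere, and hence the integral $(X_S)_{ij} = \int_{I_i} S(\cFun_{J_j})(x) \dx \ge 0$, which handles non-negativity.

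For the row-sum condition, I would fix $i \in \numTo{n}$ and compute
\begin{equation*}
    \sum_{j \in \numTo{m}} (X_S)_{ij} = \sum_{j \in \numTo{m}} \int_{I_i} S(\cFun_{J_j})(x) \dx = \int_{I_i} S\Bigl(\sum_{j \in \numTo{m}} \cFun_{J_j}\Bigr)(x) \dx = \int_{I_i} S(\allOne)(x) \dx,
\end{equation*}
using linearity of $S$ and the fact that $J_1, \dots, J_m$ partition $[0,1]$. Since $S(\allOne) = \allOne$, this equals $\lambda(I_i) = 1/n = 1/\vs(G)$, as required.

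For the column-sum condition, I would fix $j \in \numTo{m}$ and use the partition $I_1, \dots, I_n$ of $[0,1]$ together with the relation $\int_{[0,1]} (Sf)(x)\dx = \langle S f, \allOne \rangle = \langle f, S^*(\allOne)\rangle$:
\begin{equation*}
    \sum_{i \in \numTo{n}} (X_S)_{ij} = \int_{[0,1]} S(\cFun_{J_j})(x) \dx = \langle S(\cFun_{J_j}), \allOne \rangle = \langle \cFun_{J_j}, S^*(\allOne) \rangle = \langle \cFun_{J_j}, \allOne \rangle = \lambda(J_j) = \frac{1}{m}.
\end{equation*}

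There is no real obstacle here; the only small point worth being careful about is that the linearity step in the row computation uses finite additivity of $S$, and that the interchange of $\sum_j$ and $\int_{I_i}$ in the row-sum is justified because the sum is finite. Everything else is immediate from the Markov operator axioms.
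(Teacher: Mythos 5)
Your proof is correct and follows essentially the same route as the paper's: non-negativity from positivity of $S$, the row-sum condition from linearity and $S(\allOne) = \allOne$, and the column-sum condition from the fact that Markov operators preserve integrals. The only cosmetic difference is that you explicitly derive the integral-preservation step from $S^*(\allOne) = \allOne$ via the inner product, whereas the paper invokes the equality $\int_{[0,1]} S(\cFun_{J_j})\dx = \int_{[0,1]} \cFun_{J_j}\dx$ directly as part of the Markov operator axioms.
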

\begin{proof}
    We have $X_S \in \R^{n \times m}$
    with non-negative entries.
    For $i \in \numTo{n}$, the linearity of $S$ yields
    \begin{align*}
        \sum_{j \in \numTo{m}} (X_S)_{ij}= \sum_{j \in \numTo{m}} \int_{I_i} S(\cFun_{J_j}) (x) \dx = \int_{I_i} S(\allOne) (x) \dx = \int_{I_i} 1 \dx = \frac{1}{n}.
    \end{align*}
    For $j \in \numTo{m}$, we get
    \begin{align*}
        \sum_{i \in \numTo{n}} (X_S)_{ij}= \sum_{i \in \numTo{n}} \int_{I_i} S(\cFun_{J_j}) (x) \dx = \int_{[0,1]} S(\cFun_{J_j}) (x) \dx = \int_{[0,1]} \cFun_{J_j} (x) \dx = \frac{1}{m}.
    \end{align*}
\end{proof}

\begin{proof}[Proof of \Cref{le:steppingFunctionNeiDist}, First Equality]
    First, we prove that $\neiDist(G, H) \ge \neiDist(W_G, W_H)$.
    Let $X \in \R^{n \times m}$ be a fractional overlay of $G$ and $H$.
    By \Cref{le:fracOverlayToMarkov},
    the operator $T_X$ is a Markov operator.
    For a measurable function $f \colon [0,1] \to [0,1]$, let
    $u^f \in \R^n$ be the vector with $u^f_i = n \cdot \int_{I_i} f(x) \dx$ for every $i \in \numTo{i}$,
    and let $v^f \in \R^m$ be the vector
    with $v^f_j = m \cdot \int_{J_j} f(x) \dx$ for every $j \in \numTo{m}$.
    For $x \in I_i$ and a measurable function $g \colon [0,1] \to [0,1]$, we have
    \begin{align*}
        ((T_{W_G} \circ T_X) g) (x)&= \int_{[0,1]} W_G(x,y) \left(\int_{[0,1]} W_X(y,z) g(z) \dz \right) \dy\\
&= \sum_{k \in \numTo{n}} \int_{I_k}  A_{ik} \left(\sum_{j \in \numTo{m}} \int_{J_j} nm \cdot X_{kj} g(z) \dz \right) \dy\\
        &= \sum_{k \in \numTo{n}} A_{ik} \int_{I_k}  \left(\sum_{j \in \numTo{m}} n \cdot X_{kj} v^g_j\right) \dy\\
&= (A X v^g)_{i}.
    \end{align*}
    Hence, we get
    \begin{align*}
        \langle f, (T_{W_G} \circ T_X) g \rangle &= \int_{[0,1]} f(x) ((T_{W_G} \circ T_X) g) (x) \dx\\
&= \sum_{i \in \numTo{n}} \int_{I_i} f(x) (A X v^g)_i \dx\\
&= \frac{1}{n} \cdot \sum_{i \in \numTo{n}} (A X v^g)_i u^f_i\\
        &= \frac{1}{nm} \cdot {u^f}^T (m \cdot A X) v^g
    \end{align*}
    for all measurable $f,g \colon [0,1] \to [0,1]$.
    In a similar fashion, one can verify that,
    for $x \in I_i$ and a measurable function $g \colon [0,1] \to [0,1]$, we have
    \begin{equation*}
        ((T_X \circ T_{W_H}) g) (x)= (\frac{n}{m} \cdot X B v^g)_{i}
    \end{equation*}
    and, thus,
    \begin{equation*}
        \langle f, (T_{W_G} \circ T_X) g \rangle = \frac{1}{nm} \cdot {u^f}^T (n \cdot X B) v^g
    \end{equation*}
    for all measurable $f,g \colon [0,1] \to [0,1]$.
    Combining this yields
    \begin{align*}
        \neiDist(W_G, W_H)&= \inf_{S \in \markov} \sup_{f, g \colon [0,1] \to [0,1]} \lvert \langle f, (T_{W_G} \circ S - S \circ T_{W_H}) g \rangle \rvert\\
        &\le \inf_{X \in \mathcal{X}(G, H)} \sup_{f, g \colon [0,1] \to [0,1]} \lvert \langle f, (T_{W_G} \circ T_X - T_X \circ T_{W_H}) g \rangle \rvert\\
        &= \inf_{X \in \mathcal{X}(G, H)} \frac{1}{nm} \cdot \sup_{f, g \colon [0,1] \to [0,1]} \lvert {u^f}^T (m \cdot AX - n \cdot XB) v^g \rvert\\
        &= \inf_{X \in \mathcal{X}(G, H)} \frac{1}{nm} \normC{m \cdot AX - n \cdot XB}\\
        &= \neiDist(G, H)
    \end{align*}
    since the maximum is attained at $0$-$1$-vectors.

    To prove that $\neiDist(G, H) \le \neiDist(W_G, W_H)$,
    let $S \colon \LT \to \LT$ be a Markov operator.
    By \Cref{le:markovToFracOverlay}, $X_S \in \R^{n \times m}$
    is a fractional overlay of $G$ and $H$.
    For a set $P \subseteq \numTo{n}$, let
    $\cFun^I_{P} \coloneqq \sum_{i \in P} \cFun_{I_i}$
    and, for a set $Q \subseteq \numTo{m}$,
    let $\cFun^J_{Q} \coloneqq \sum_{j \in Q} \cFun_{J_j}$.
    Then, for $x \in I_i$, the linearity of $S$ yields
    \begin{align*}
        ((T_{W_G} \circ S) \cFun^J_{Q}) (x)= \int_{[0,1]} W_G(x,y) (S \cFun^J_Q) (y) \dy
        &= \sum_{j \in Q} \int_{[0,1]} W_G(x,y) (S \cFun_{J_j}) (y) \dy\\
        &= \sum_{j \in Q} \sum_{k \in \numTo{n}} \int_{I_k} A_{ik} (S \cFun_{J_j}) (y) \dy\\
        &= \sum_{j \in Q} \sum_{k \in \numTo{n}} A_{ik} (X_S)_{kj}\\
        &= \sum_{j \in Q} (AX_S)_{ij}
    \end{align*}
    and, thus,
    \begin{align*}
        \langle \cFun^I_{P}, (T_{W_G} \circ S) \cFun^J_{Q} \rangle = \int_{[0,1]} \cFun^I_{P} (x) ((T_{W_G} \circ S) \cFun^J_{Q})(x) \dx
        &= \sum_{i \in \numTo{n}} \int_{I_i} \cFun^I_{P} (x) \big(\sum_{j \in Q} (AX_S)_{ij}\big) \dx\\
        &= \sum_{\substack{i \in \numTo{n},\\ j \in Q}} (AX_S)_{ij} \int_{I_i} \cFun^I_{P} (x) \dx\\
        &= \frac{1}{nm} \sum_{\substack{i \in P,\\ j \in Q}} (m \cdot AX_S)_{ij}.
    \end{align*}
    In a similar fashion, one can show that, for a set $Q \subseteq \numTo{m}$
    and $x \in [0,1]$,
    we have
    \begin{align*}
        ((S \circ T_{W_H}) \cFun^J_{Q}) (x)= \sum_{k \in \numTo{m}} \big(\sum_{j \in T} \frac{1}{m} B_{kj}\big) S(\cFun_{J_k}) (x)
    \end{align*}
    and, thus,
    \begin{align*}
        \langle \cFun^I_{P}, (S \circ T_{W_H}) \cFun^J_{Q} \rangle = \frac{1}{nm} \sum_{\substack{i \in P,\\ j \in Q}} (n \cdot X_S B)_{ij}.
    \end{align*}
    Combining this yields
    \begin{align*}
        \neiDist(G, H)
        &= \inf_{X \in \mathcal{X}(G, H)} \frac{1}{nm} \normC{m \cdot AX - n \cdot XB}\\
        &\le \inf_{S \in \markov} \frac{1}{nm} \normC{m \cdot A X_S - n \cdot X_S B}\\
        &= \inf_{S \in \markov} \sup_{\substack{P \subseteq \numTo{n}, Q \subseteq \numTo{m}}} \lvert \langle \cFun^I_P, (T_{W_G} \circ S - S \circ T_{W_H}) \cFun^J_Q \rangle \rvert\\
        &\le \neiDist(W_G, W_H).
    \end{align*}
\end{proof}

Proving the second equality is similar, although a bit more complicated
due to the non-linearity of the square function.
Here, we have to apply the
the Cauchy-Schwarz inequality
at certain points.
\begin{proof}[Proof of \Cref{le:steppingFunctionNeiDist}, Second Equality]
    First, we prove that $\neiDistSpec(G, H) \ge \neiDistROp(W_G, W_H)$.
    Let $X \in \R^{n \times m}$ be a fractional overlay of $G$ and $H$.
    By \Cref{le:fracOverlayToMarkov},
    the operator $T_X$ is a Markov operator.
    For $g \colon [0,1] \to \R$ with $\normT{g} \le 1$,
    let $v^g \in \R^m$ be given by $v^g_j \coloneqq \sqrt{m} \cdot \int_{J_j} g(x) \dx$ for every $j \in \numTo{m}$.
    Then,
    \begin{align*}
        \normT{v^g}^2
= \sum_{j \in [m]} m \cdot \big( \int_{J_j} g(x) \dx\big)^2
        &\le \sum_{j \in [m]} m \cdot \frac{1}{m} \cdot \int_{J_j} g(x)^2 \dx \tag*{(Cauchy-Schwarz)}\\
&= \int_{[0,1]} g(x)^2 \dx\\
        &= \normT{g}^2,
    \end{align*}
    that is, $\normT{v^g} \le \normT{g} \le 1$.
    For $x \in I_i$, we have
    \begin{align*}
        ((T_{W_G} \circ T_X) g) (x)&= \int_{[0,1]} W_G(x,y) \left(\int_{[0,1]} W_X(y,z) g(z) \dz \right) \dy\\
&= \sum_{k \in \numTo{n}} \int_{I_k} A_{ik} \left(\sum_{j \in \numTo{m}} \int_{J_j} nm \cdot X_{kj} g(z) \dz \right) \dy\\
        &= \sum_{k \in \numTo{n}} A_{ik} \int_{I_k}  \left(\sum_{j \in \numTo{m}} n \cdot \frac{m}{\sqrt{m}} \cdot X_{kj} v^g_j\right) \dy\\
&= \frac{1}{\sqrt{m}}(m \cdot A X v^g)_{i}.
    \end{align*}
    In a similar fashion, one can verify that,
    for $x \in I_i$, we have
    \begin{equation*}
        ((T_X \circ T_{W_H}) g) (x)= \frac{1}{\sqrt{m}}(n\cdot X B v^g)_{i}
    \end{equation*}
    and, thus,
    \begin{align*}
        \normT{(T_{W_G} \circ T_X - T_X \circ T_{W_H}) g}^2&= \int_{[0,1]} ((T_{W_G} \circ T_X - T_X \circ T_{W_H}) g(x))^2 \dx\\
        &= \sum_{k \in [n]} \int_{I_k} (\frac{1}{\sqrt{m}}(m \cdot A X - n \cdot X B) v^g)_{i}^2 \dx\\
        &= \sum_{k \in [n]} (\frac{1}{\sqrt{nm}}(m \cdot A X - n \cdot X B) v^g)_{i}^2\\
        &= \normT{\frac{1}{\sqrt{nm}}(m \cdot A X - n \cdot X B)v^g}^2.
    \end{align*}

    Hence,
    \begin{align*}
        \neiDistROp(W_G, W_H)&= \inf_{S \in \markov} \sup_{\normT{g} \le 1} \normT{(T_{W_G} \circ S - S \circ T_{W_H}) g}\\
        &\le \inf_{X \in \mathcal{X}(G, H)} \sup_{\normT{g} \le 1} \normT{(T_{W_G} \circ T_X - T_X \circ T_{W_H})g}\\
        &= \inf_{X \in \mathcal{X}(G, H)} \sup_{\normT{g} \le 1} \normT{\frac{1}{\sqrt{nm}}(m \cdot A X - n \cdot X B)v^g}\\
        &\le \inf_{X \in \mathcal{X}(G, H)} \sup_{\substack{v \in \R^m,\\ \normT{v} \le 1}} \normT{\frac{1}{\sqrt{nm}}(m \cdot A X - n \cdot X B)v}\\
        &= \neiDistSpec(G, H).
    \end{align*}

    To prove that $\neiDistSpec(G, H) \le \neiDistROp(W_G, W_H)$,
    let $S \colon \LT \to \LT$ be a Markov operator.
    By \Cref{le:markovToFracOverlay}, $X_S \in \R^{n \times m}$
    is a fractional overlay of $G$ and $H$.
    For $v \in \R^m$, let $g_v \coloneqq \sum_{j \in [m]} \sqrt{m} \cdot v_j \cFun_{J_j}$.
    Then,
    \begin{align*}
        \normT{g_v}^2= \int_{[0,1]} g_v(x)^2 \dx
        &= \sum_{j \in [m]} \int_{J_j} m \cdot v_j^2 \dx\\
        &= \sum_{j \in [m]} v_j^2\\
        &= \normT{v}^2,
    \end{align*}
    that is, $\normT{g_v} = \normT{v} \le 1$.
    Then, for $x \in I_i$, the linearity of $S$ yields
    \begin{align*}
        ((T_{W_G} \circ S) g_v) (x)&= \int_{[0,1]} W_G(x,y) (S g_v) (y) \dy\\
        &= \sqrt{m} \sum_{j \in [m]} v_j \int_{[0,1]} W_G(x,y) (S \cFun_{J_j}) (y) \dy\\
        &= \sqrt{m} \sum_{j \in [m]} v_j \sum_{k \in \numTo{n}} \int_{I_k} A_{ik} (S \cFun_{J_j}) (y) \dy\\
        &= \sqrt{m} \sum_{j \in [m]} v_j \sum_{k \in \numTo{n}} A_{ik} (X_S)_{kj}\\
        &= \frac{1}{\sqrt{m}} (m \cdot AX_Sv)_i.
    \end{align*}
    In a similar fashion, one can show that,
    for $x \in [0,1]$,
    we have
$((S \circ T_{W_H}) g_v) (x)= \frac{1}{\sqrt{m}} \sum_{k \in \numTo{m}} (Bv)_k S(\cFun_{J_k}) (x)$.
In the following,
    let $a_i \coloneqq \frac{1}{\sqrt{m}} (m \cdot AX_Sv)_i$
    and $b_i \coloneqq \frac{1}{\sqrt{n}} (n \cdot X_S B v)_i$
    for $i \in [n]$.
    Moreover,
    let $b(x) \coloneqq \frac{1}{\sqrt{m}} \sum_{k \in \numTo{m}} (Bv)_k S(\cFun_{J_k}) (x)$
    for $x \in [0,1]$.
    Then,
    \begin{align*}
        \normT{(T_{W_G} \circ S - S \circ T_{W_H}) g_v}^2
        &= \int_{[0,1]} \Big(((T_{W_G} \circ S - S \circ T_{W_H}) g_v) (x)\Big)^2 \dx\\
        &= \sum_{i \in [n]} \int_{I_i} \big(a_i - b(x)\big)^2 \dx\\
        &= \sum_{i \in [n]} \Bigg( \begin{aligned}[t]
               &\int_{I_i} a_i^2 \dx - \int_{I_i} 2 a_i b(x) \dx
               + \int_{I_i} b(x)^2 \dx \Bigg)
           \end{aligned}\\
&= \sum_{i \in [n]} \Bigg( \begin{aligned}[t]
               &\frac{a_i^2}{n}
               - 2 \frac{a_i}{\sqrt{n}} \frac{b_i}{\sqrt{m}}
               + \int_{I_i} b(x)^2 \dx \Bigg)
           \end{aligned}\\
        &\ge \sum_{i \in [n]} \Bigg( \begin{aligned}[t]
               &\frac{a_i^2}{n}- 2 \frac{a_i}{\sqrt{n}} \frac{b_i}{\sqrt{m}}
               + n \Big(\int_{I_i} b(x) \dx \Big)^2 \Bigg)
           \end{aligned} \tag*{\vspace{4em}(C.-S.)}\\
        &= \sum_{i \in [n]} \Bigg( \begin{aligned}[t]
               &\frac{a_i^2}{n}- 2 \frac{a_i}{\sqrt{n}} \frac{b_i}{\sqrt{m}}
                + \frac{b_i^2}{m}\Bigg)
           \end{aligned}\\
        &= \sum_{i \in [n]} \Big(\frac{1}{\sqrt{nm}} (m \cdot AX_Sv)_i - \frac{1}{\sqrt{nm}} (n \cdot X_S B v)_i \Big)^2\\
        &= \normT{\frac{1}{\sqrt{nm}} (m \cdot AX_S - n \cdot X_S B) v}^2.
    \end{align*}
    Combining this yields
    \begin{align*}
        \neiDistSpec(G, H)
        &= \inf_{X \in \mathcal{X}(G, H)} \sup_{\substack{v \in \R^m,\\ \normT{v} \le 1}} \normT{\frac{1}{\sqrt{nm}}(m \cdot A X - n \cdot X B)v}\\
        &\le \inf_{S \in \markov} \sup_{\substack{v \in \R^m,\\ \normT{v} \le 1}} \normT{\frac{1}{\sqrt{nm}}(m \cdot A X_S - n \cdot X_S B)v}\\
        &\le \inf_{S \in \markov} \sup_{\substack{v \in \R^m,\\ \normT{v} \le 1}} \normT{(T_{W_G} \circ S - S \circ T_{W_H}) g_v}\\
        &\le \inf_{S \in \markov} \sup_{\substack{\normT{g} \le 1}} \normT{(T_{W_G} \circ S - S \circ T_{W_H}) g}\\
        &\le \neiDistROp(W_G, W_H).
    \end{align*}
\end{proof}

For the path distance,
we have to verify that a signed fractional overlay
can be turned into a signed Markov operator (\Cref{le:signedFracOverlayToSignedMarkov})
and vice versa (\Cref{le:signedMarkovToSignedFracOverlay}).
Then, the proof of \Cref{le:steppingFunctionPathDistSpec}
is essentially analogous to the one of the second equality
of \Cref{le:steppingFunctionNeiDist},
which is why we omit it.

\begin{lemma}
    \label{le:signedFracOverlayToSignedMarkov}
    For a signed fractional overlay $X \in \R^{n \times m}$,
    the operator $T_X$ is a signed Markov operator with Hilbert adjoint $T_X^* = T_{X^T}$.
\end{lemma}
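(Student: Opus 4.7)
The plan is to essentially reuse the proof of \Cref{le:fracOverlayToMarkov} for the two ``stochastic'' properties and the adjoint identity, while supplementing it with a new contraction argument that converts the matrix spectral bound into an $L_2$ contraction bound on $T_X$. Concretely, I would proceed as follows.

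First I would observe that the identity $T_X^* = T_{X^T}$ follows from exactly the same Fubini computation as in the proof of \Cref{le:fracOverlayToMarkov}, since that argument only uses symmetry of the kernel construction and never the non-negativity of $X$. Similarly, the row-sum condition $\sum_{j} X_{ij} = 1/n$ gives $(T_X \allOne)(x) = \sum_{j \in [m]} n X_{ij} = 1$ on $x \in I_i$, hence $T_X \allOne = \allOne$, and the column-sum condition together with $T_X^* = T_{X^T}$ gives $T_X^* \allOne = T_{X^T} \allOne = \allOne$. None of this requires $X$ to have non-negative entries.

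The only genuinely new step is the $L_2$-contraction property, and this is where the spectral-norm condition on $X$ enters. For $f \in \LT$, define the vector $v^f \in \R^m$ by $v^f_j \coloneqq \sqrt{m}\int_{J_j} f(y)\dy$. By Cauchy--Schwarz (with $\lambda(J_j) = 1/m$),
\begin{equation*}
\normT{v^f}^2 = \sum_{j \in [m]} m \bigl(\textstyle\int_{J_j} f\bigr)^2 \le \sum_{j \in [m]} \int_{J_j} f^2 = \normT{f}^2.
\end{equation*}
A direct computation shows that $(T_X f)(x) = n\sqrt{m}\,(Xv^f)_i$ whenever $x \in I_i$, so $T_X f$ is constant on each interval $I_i$ of length $1/n$ and therefore
\begin{equation*}
\normT{T_X f}^2 = \sum_{i \in [n]} \tfrac{1}{n} \cdot n^2 m (Xv^f)_i^2 = nm\,\normT{Xv^f}^2.
\end{equation*}
Applying the signed-fractional-overlay contraction bound $\normT{Xv^f} \le \normT{v^f}/\sqrt{nm}$ yields $\normT{T_X f}^2 \le \normT{v^f}^2 \le \normT{f}^2$, which is what we need.

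The only place where any care is required is in tracking the normalization factors $\sqrt{nm}$ between the discrete and continuous contraction statements; this is exactly why the signed-fractional-overlay definition uses the factor $1/\sqrt{nm}$. Aside from this bookkeeping, the argument is routine: the row/column sum conditions translate directly into $T_X \allOne = \allOne$ and $T_X^*\allOne = \allOne$, the Fubini computation from \Cref{le:fracOverlayToMarkov} gives the adjoint identity, and the vector $v^f$ is the obvious ``block average'' of $f$ that makes the discrete and continuous norms comparable via Cauchy--Schwarz.
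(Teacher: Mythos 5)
Your proposal is correct and takes essentially the same approach as the paper: the adjoint identity and the two stochasticity conditions are observed to carry over verbatim from \Cref{le:fracOverlayToMarkov}, and the $L_2$-contraction is derived by combining the spectral bound $\normT{Xv}\le\normT{v}/\sqrt{nm}$ with Cauchy--Schwarz on the block averages $\int_{J_j} f$. The only cosmetic difference is that you package the computation via the auxiliary vector $v^f$ and prove $\normT{v^f}\le\normT{f}$ up front, whereas the paper writes out the same two inequalities in a single chain; the substance and the order of the two estimates are identical.
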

\begin{proof}
    We verify that $T_X$ is an $L_2$-contraction;
    the remaining part of the proof is the same as the proof of \Cref{le:fracOverlayToMarkov}.
    For an $f \in \LT$, we have
    \begin{align*}
        \normT{T_X f}^2&= \sum_{i \in [n]} \int_{I_i} \Bigg( \sum_{j \in [m]} \int_{J_j} nm \cdot X_{ij} \cdot f(y) \dy \Bigg)^2 \dx\\
        &= n m^2 \cdot \sum_{i \in [n]} \Bigg( \sum_{j \in [m]} X_{ij} \cdot \int_{J_j} f(y) \dy \Bigg)^2\\
        &\le n m^2 \cdot \frac{1}{nm} \cdot \sum_{j \in [m]} \left(\int_{J_j} f(y) \dy\right)^2 \tag*{($\normT{Xv}^2 \le \frac{1}{nm} \normT{v}^2$ for every $v \in \R^m$)}\\
        &\le n m^2 \cdot \frac{1}{nm} \cdot \sum_{j \in [m]} \frac{1}{m} \int_{J_j} f(y)^2 \dy \tag*{(Cauchy-Schwarz)}\\
        &= \normT{f}^2.
    \end{align*}
\end{proof}

\begin{lemma}
    \label{le:signedMarkovToSignedFracOverlay}
    For a signed Markov operator $S \colon \LT \to \LT$,
    the matrix $X_S$ is a signed fractional overlay of $G$ and $H$.
\end{lemma}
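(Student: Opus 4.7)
The plan is to verify the three defining conditions of a signed fractional overlay for the matrix $X_S$, where the first two (row and column sum conditions) are inherited verbatim from the proof of \Cref{le:markovToFracOverlay} (those arguments only use $S(\allOne) = \allOne$, $S^*(\allOne) = \allOne$, and linearity of $S$, all of which hold for a signed Markov operator). So the only new content is to check the $L_2$-contraction inequality
\[
\normT{X_S y} \le \normT{y}/\sqrt{nm} \quad \text{for every } y \in \R^m,
\]
and this is where the $L_2$-contraction assumption on $S$ must enter.

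The key idea is to realize the linear action $y \mapsto X_S y$ as an integral against $S$ applied to a step function. Given $y \in \R^m$, set $h_y \coloneqq \sum_{j \in [m]} y_j \cFun_{J_j} \in \LT$. Because the intervals $J_j$ have length $1/m$, a direct computation gives $\normT{h_y}^2 = \normT{y}^2/m$. Using linearity of $S$ and the definition of $(X_S)_{ij}$,
\[
(X_S y)_i = \sum_{j \in [m]} y_j \int_{I_i} S(\cFun_{J_j})(x)\dx = \int_{I_i} S(h_y)(x)\dx.
\]

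Now I would apply the Cauchy-Schwarz inequality on the interval $I_i$ (which has length $1/n$) to each coordinate:
\[
(X_S y)_i^2 = \left(\int_{I_i} S(h_y)(x)\dx\right)^2 \le \frac{1}{n}\int_{I_i} S(h_y)(x)^2 \dx.
\]
Summing over $i \in [n]$ telescopes the integrals over $I_i$ into the integral over $[0,1]$, giving
\[
\normT{X_S y}^2 \le \frac{1}{n}\normT{S h_y}^2.
\]
Finally, invoking the assumption that $S$ is an $L_2$-contraction yields $\normT{S h_y}^2 \le \normT{h_y}^2 = \normT{y}^2/m$, so $\normT{X_S y}^2 \le \normT{y}^2/(nm)$, as required.

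No real obstacle is expected: the first two overlay conditions are an immediate quotation of the unsigned case, and the third follows from the two-step estimate above (Cauchy-Schwarz on each $I_i$, then the $L_2$-contractivity of $S$). The only thing to be mindful of is to correctly track the normalization factors $1/n$ and $1/m$ coming from the step-function encoding, and to make sure that Cauchy-Schwarz is applied in the direction that converts the length $1/n$ of $I_i$ into the desired factor.
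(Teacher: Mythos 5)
Your proof is correct and follows essentially the same route as the paper's: you represent $(X_S y)_i$ as $\int_{I_i} S(h_y)$ via linearity, apply Cauchy-Schwarz on each interval $I_i$ of length $1/n$, sum over $i$, invoke the $L_2$-contractivity of $S$, and finish with the computation $\normT{h_y}^2 = \normT{y}^2/m$. The paper's proof performs precisely these same steps in the same order, merely without giving the step function a separate name.
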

\begin{proof}
    We have $X_S \in \R^{n \times m}$ and
    verify that $\normT{X_S v} \le {\normT{v}}/{\sqrt{nm}}$
    for every $v \in \R^{m}$.
    The remaining part of the proof is the same as the proof of \Cref{le:markovToFracOverlay}.
    For $v \in \R^m$, we have
    \begin{align*}
        \normT{X_S v}^2&= \sum_{i \in [n]} \Bigg( \sum_{j \in [m]} v_j \cdot \int_{I_i} S(\cFun_{J_j})(x) \dx \Bigg)^2\\
        &= \sum_{i \in [n]} \Bigg( \int_{I_i} S \Big(\sum_{j \in [m]} v_j \cdot \cFun_{J_j}\Big)(x) \dx \Bigg)^2\\
        &\le \sum_{i \in [n]} \frac{1}{n} \int_{I_i} S \Big(\sum_{j \in [m]} v_j \cdot \cFun_{J_j}\Big)(x)^2 \dx \tag*{(Cauchy-Schwarz)}\\
&\le \frac{1}{n} \int_{[0,1]} \Big(\sum_{j \in [m]} v_j \cdot \cFun_{J_j} (x) \Big)^2 \dx \tag*{($\normT{Sf}^2 \le \normT{f}^2$ for every $f \in \LT$)}\\
        &= \frac{1}{n} \sum_{j \in [m]} \int_{J_j} v_j^2 \dx\\
&= \frac{1}{nm} \normT{v}^2.
    \end{align*}
\end{proof}

\subsection{Proof of \Cref{le:neiDistPseudoMetric} and \Cref{le:pathDistTPseudoMetric} (Pseudometrics)}
\label{subsec:missingPseudoMetricProof}

\begin{proof}[Proof of \Cref{le:neiDistPseudoMetric}]
    First, let $U \in \graphons$ be a graphon.
    Since the identity operator is a Markov operator,
    we immediately get $\neiDist(U, U) = \neiDistROp(U, U) = 0$.
    Second, let $U, W \in \graphons$ be graphons.
    Then, $(T_U \circ S - S \circ T_W)^* = S^* \circ T_U - T_W \circ S^*$
    since $T_U$ and $T_W$ are self-adjoint.
    Moreover, $S^*$ is a Markov operator by \Cref{le:markovClosure}.
    Thus,
    \begin{align*}
        \neiDist(U, W) &= \inf_{S \in \markov} \supFG \lvert \langle f, (T_U \circ S - S \circ T_W) g \rangle \rvert\\
&= \inf_{S \in \markov} \supFG \lvert \langle (S^* \circ T_U - T_W \circ S^*) f, g \rangle \rvert\\
        &= \inf_{S \in \markov} \supFG \lvert \langle g, (T_W \circ S^* - S^* \circ T_U) f \rangle \rvert \tag*{(Linearity and symmetry)}\\
        &= \inf_{S \in \markov} \supFG \lvert \langle g, (T_W \circ S - S \circ T_U) f \rangle \rvert\tag{$S^*$ is Markov and $S^{**} = S$}\\
        &= \neiDist(W, U).
    \end{align*}
    For $\neiDistROp$, proving symmetry is analogous as the operator norm is invariant
    under Hilbert adjoints.

    Let $U, V, W \in \graphons$ be graphons.
    For all Markov operators $S_1, S_2 \colon \LT \to \LT$,
    their composition $S_1 \circ S_2$ is also a Markov operator by \Cref{le:markovClosure}
    and
    \begin{align*}
        \neiDist(U, W) &= \inf_{S \in \markov} \supFG \lvert \langle f, (T_U \circ S - S \circ T_W) g \rangle \rvert\\
        &\le \supFG \lvert \langle f, (T_U \circ S_1 \circ S_2 - S_1 \circ S_2 \circ T_W) g \rangle \rvert\\
&= \supFG \lvert \langle f, ((T_U \circ S_1 - S_1 \circ T_V) \circ S_2 + S_1 \circ (T_V \circ S_2 - S_2 \circ T_W)) g \rangle \rvert\\
        &\le \supFG \lvert \langle f, ((T_U \circ S_1 - S_1 \circ T_V) \circ S_2) g \rangle \rvert\\
        &\qquad+ \supFG \lvert \langle f, (S_1 \circ (T_V \circ S_2 - S_2 \circ T_W)) g \rangle \rvert.
    \end{align*}
    For measurable functions $f, g \colon [0,1] \to [0,1]$,
    we have that $S_1^* f$ and $S_2 g$ again are measurable functions $[0,1] \to [0,1]$
    by \Cref{le:markovClosure} (in the case of $S_1^*$)
    and \Cref{le:markovContraction}.
    Thus,
    \begin{align*}
        &\supFG \lvert \langle f, ((T_U \circ S_1 - S_1 \circ T_V) \circ S_2) g \rangle \rvert\\
        \le {}&\supFG \lvert \langle f, (T_U \circ S_1 - S_1 \circ T_V) g \rangle \rvert
    \end{align*}
    and
    \begin{align*}
        &\supFG \lvert \langle f, (S_1 \circ (T_V \circ S_2 - S_2 \circ T_W)) g \rangle \rvert\\
        = {}&\supFG \lvert \langle S_1^* f, (T_V \circ S_2 - S_2 \circ T_W) g \rangle \rvert\\
        \le {}&\supFG \lvert \langle f, (T_V \circ S_2 - S_2 \circ T_W) g \rangle \rvert
    \end{align*}
    Hence, $\neiDist(U, W) \le \neiDist(U, V) + \neiDist(V, W)$.
    For $\neiDistROp$, the proof is analogous using the sub-multiplicativity
    of the operator norm and \Cref{le:markovContraction}, the fact that a Markov operator is a contraction,
    cf.\ also the proof of \Cref{le:pathDistTPseudoMetric}.
\end{proof}

\begin{proof}[Proof of \Cref{le:pathDistTPseudoMetric}]
    First, for a graphon $U \in \graphons$, we immediately get $\pathDistROp(U, U) = 0$
    since the identity operator is a signed Markov operator.
    Second, let $U, W \in \graphons$ be graphons.
    Then, $(T_U \circ S - S \circ T_W)^* = S^* \circ T_U - T_W \circ S^*$
    since $T_U$ and $T_W$ are self adjoint.
    The operator norm is invariant under taking the Hilbert adjoint, and we get
    \begin{align*}
        \pathDistROp(U, W)&= \inf_{S \in \signedMarkovs} \normRTT{T_U \circ S - S \circ T_W}\\
&= \inf_{S \in \signedMarkovs} \normRTT{S^* \circ T_U - T_W \circ S^*}\\
        &= \inf_{S \in \signedMarkovs} \normRTT{T_W \circ S^* - S^* \circ T_U}\\
        &= \inf_{S \in \signedMarkovs} \normRTT{T_W \circ S - S \circ T_U} \tag{$S^*$ is signed Markov and $S^{**} = S$}\\
        &= \pathDistROp(W, U).
    \end{align*}
    Third, let $U, V, W \in \graphons$ be graphons.
    For all signed Markov operators $S_1, S_2 \colon \LT \to \LT$,
    their composition $S_1 \circ S_2$ is also a signed Markov operator,
    and
    \begin{align*}
        \pathDistROp(U, W)&= \inf_{S \in \markov} \normRTT{T_U \circ S - S \circ T_W}\\
        &\le \normRTT{T_U \circ S_1 \circ S_2 - S_1 \circ S_2 \circ T_W}\\
        &\le \normRTT{T_U \circ S_1 \circ S_2 - S_1 \circ T_V \circ S_2 + S_1 \circ T_V \circ S_2 - S_1 \circ S_2 \circ T_W}\\
        &= \normRTT{(T_U \circ S_1 - S_1 \circ T_V) \circ S_2 + S_1 \circ (T_V \circ S_2 - S_2 \circ T_W)}\\
        &\le \normRTT{(T_U \circ S_1 - S_1 \circ T_V) \circ S_2} + \normT{S_1 \circ (T_V \circ S_2 - S_2 \circ T_W)}\\
        &\le \normRTT{T_U \circ S_1 - S_1 \circ T_V} \normT{S_2} + \normT{S_1} \normT{T_V \circ S_2 - S_2 \circ T_W} \tag*{(sub-mult.)}\\
        &\le \normRTT{T_U \circ S_1 - S_1 \circ T_V} + \normT{T_V \circ S_2 - S_2 \circ T_W}. \tag*{($S_1, S_2$ contractions)}
    \end{align*}
    Thus, $\pathDistROp(U, W) \le \pathDistROp(U, V) + \pathDistROp(V, W)$.
\end{proof}

\subsection{Proof of \Cref{le:treeDistInequalitiesSimple} and \Cref{le:neiDistZero} (Tree Distance Zero)}
\label{subsec:treeDistanceNorms}

\Cref{le:treeDistInequalitiesSimple} is a special case of the following
\Cref{le:treeDistInequalities}.
Note that
a Markov operator $S \colon \LT \to \LT$
uniquely extends to a Markov operator $S_1 \colon \LO \to \LO$, which
restricts to a Markov operator $S_p \colon \Lp \to \Lp$ for any $1 \le p \le \infty$,
cf.\ \Cref{th:markovRestriction}.
For graphons $U, W \in \graphons$, we can view $T_U$ and $T_W$
as operator $T_U, T_W \colon \LO \to \LI$.
Hence, we can view $T_U \circ S - S \circ T_W$ as an operator $\LO \to \LI$,
and letting
$\neiDistRpq(U, W) \coloneqq \inf_{S \in \markov} \normRpq{T_U \circ S - S \circ T_W}$
and
$\neiDistpq(U, W) \coloneqq \inf_{S \in \markov} \normpq{T_U \circ S - S \circ T_W}$
for all $U, W \in \graphons$ is well-defined for all $1 \le p,q \le \infty$;
the Riesz-Thorin Interpolation Theorem makes the detour via complex
$\Lp$ spaces necessary.
For the proof of \Cref{le:neiDistZero}, another variant of the tree distance is helpful, and we let
\begin{equation*}
    \neiDistEu(U, W) \coloneqq \inf_{S \in \markov} \sup_{\normT{f} \le 1, \normT{g} \le 1} \lvert \langle f, (T_U \circ S - S \circ T_W) g \rangle \rvert
\end{equation*}
for all $U, W \in \graphons$.
As with the cut distance, one can prove that these variants
of the tree distance yield the same topology, cf.\
\cite[Lemma $8.11$]{Lovasz2012}, and \cite[E.$2$ and E.$3$]{Janson2013}, and  in particular, \Cref{le:normsForGraphons}.

\begin{lemma}
    \label{le:neiDistCutEuOp}
    \label{le:neiDistReal}
    \label{le:neiDistOpIO}
    \label{le:treeDistInequalities}
    We have
    \begin{enumerate}
        \item $\neiDist \le \neiDistEu \le \neiDistROp \le \neiDistOp \le \sqrt{2} (\neiDistIO)^{1/2}$, \label{le:treeInequalities:one}
        \item $\neiDistIO \le 2 \neiDistIOR$, \label{le:treeInequalities:two}
        \item $\neiDistIOR \le 4 \neiDist$, and \label{le:treeInequalities:three}
        \item $\neiDistROp \le \neiDistOp \le 2 \neiDistROp$. \label{le:treeInequalities:four}
    \end{enumerate}
\end{lemma}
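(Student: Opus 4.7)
The plan is to fix an arbitrary Markov operator $S \in \markov$ and establish each of the claimed inequalities pointwise for the residual operator $R_S \coloneqq T_U \circ S - S \circ T_W$; taking the infimum over $S$ then yields the stated tree-distance inequalities, since $x \mapsto x^{1/2}$ is monotone. The argument mirrors the classical chain relating the cut norm of a kernel to its $L_p \to L_q$ operator norms (compare \Cref{le:normsForGraphons}), with the new ingredient being that a Markov operator is a contraction on every $\Lp$ space via \Cref{th:markovRestriction} and \Cref{le:markovContraction}.

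For part~(1), the first three inclusions are routine. Every function $f \colon [0,1] \to [0,1]$ satisfies $\normT{f}^2 = \int f^2 \le \int f \le 1$, giving $\neiDist \le \neiDistEu$. Cauchy-Schwarz $|\langle f, R_S g \rangle| \le \normT{f}\,\normRTT{R_S}\,\normT{g}$ gives $\neiDistEu \le \neiDistROp$. Restricting the complex operator norm to real test functions can only decrease it, so $\neiDistROp \le \neiDistOp$. The fourth inequality $\neiDistOp \le \sqrt{2}\,\neiDistIO^{1/2}$ is the Riesz-Thorin step (\Cref{th:rieszThorin}) applied at $\theta = 1/2$ between the endpoint pairs $(\infty,1)$ and $(1,\infty)$, yielding $\normTT{R_S} \le \normIO{R_S}^{1/2}\normOI{R_S}^{1/2}$. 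Here the auxiliary endpoint $\normOI{R_S}$ is at most $2$: since $U, W$ take values in $[0,1]$ we have $\normOI{T_U}, \normOI{T_W} \le 1$, and since $S$ is a contraction on both $\LO$ and $\LI$, both $T_U \circ S$ and $S \circ T_W$ have $\normOI{\cdot} \le 1$.

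Parts~(2) and~(3) are the standard real-versus-complex and positive-negative splittings. For~(2), writing a complex $f$ with $\normI{f} \le 1$ as $f = f_1 + i f_2$ with $\normI{f_j} \le 1$, and using that $R_S$ maps real-valued functions to real-valued functions (by \Cref{le:posOperators} applied to each of $T_U$, $S$, $T_W$), one obtains $\normO{R_S f} \le \normO{R_S f_1} + \normO{R_S f_2} \le 2\,\normRIO{R_S}$. For~(3), a real $f$ with $\normI{f} \le 1$ decomposes as $f_+ - f_-$ with $f_\pm \colon [0,1] \to [0,1]$, and similarly for the second test function $g$; expanding $\langle g, R_S f \rangle$ into four terms, each bounded by $\sup_{f, g \colon [0,1] \to [0,1]} |\langle f, R_S g \rangle|$, produces the factor~$4$.

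For part~(4), the first inequality $\neiDistROp \le \neiDistOp$ is again the real-to-complex restriction. For the second, the same real/imaginary splitting as in~(2), together with the orthogonality identity $\normT{R_S f}^2 = \normT{R_S f_1}^2 + \normT{R_S f_2}^2$ (valid because $R_S$ preserves real-valuedness), in fact gives $\normTT{R_S} \le \normRTT{R_S}$, so the factor $2$ stated in the lemma is a harmless overshoot. The main obstacle throughout is that $R_S$ is not itself of the form $T_V$ for a graphon $V$, so \Cref{le:normsForGraphons} does not apply as a black box; the fix is simply to observe that the endpoint bounds on $R_S$ in the $\infty \to 1$ and $1 \to \infty$ norms still hold because $T_U$, $T_W$, and $S$ are each appropriately bounded, which is exactly what both Riesz-Thorin and the splitting arguments need.
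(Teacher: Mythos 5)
Your proof is correct and follows essentially the same route as the paper: the chain in part~(1) (cut $\le$ Euclidean $\le$ real operator $\le$ complex operator via Hölder and Cauchy-Schwarz, then Riesz--Thorin at $\theta = 1/2$ with the $\normOI{\cdot}$ endpoint bounded by $2$ via Markov contractivity), and the real/imaginary and positive/negative splittings for parts~(2) and~(3). One small refinement worth noting: for part~(4), the paper uses only the triangle inequality $\normT{R_S f} \le \normT{R_S(\Re f)} + \normT{R_S(\Im f)}$, producing the factor $2$, whereas you additionally exploit that $R_S$ preserves real-valuedness to get the Pythagorean identity $\normT{R_S f}^2 = \normT{R_S(\Re f)}^2 + \normT{R_S(\Im f)}^2$, which together with $\normT{f}^2 = \normT{\Re f}^2 + \normT{\Im f}^2$ yields $\normTT{R_S} \le \normRTT{R_S}$ and thus $\neiDistOp = \neiDistROp$. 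This is a strictly stronger conclusion, although harmless for the purposes of the lemma, since only equivalence of topologies is needed downstream.
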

\begin{proof}[Proof of \Cref{le:treeDistInequalities}]
    Let $U,W \in \graphons$ be graphons.

    $(\ref{le:treeInequalities:one})$:
    For a measurable function $f \colon [0,1] \to [0,1]$,
    we have $\normT{f} \le \normI{f} \le 1$ and trivially get
    $\neiDist(U, W) \le \neiDistEu(U, W)$.
    Furthermore, for an operator $T \colon \LT \to \LT$, we have
    \begin{equation*}
        \sup_{\normT{f} \le 1, \normT{g} \le 1}\left\lvert \langle f, T g \rangle \right\rvert
        \le \sup_{\normT{f} \le 1, \normT{g} \le 1} \normT{f} \normT{\overline{T g}}
        \le \sup_{\normT{g} \le 1} \normT{T g}
        = \normRTT{T}
        \le \normTT{T}
    \end{equation*}
    by the Cauchy-Schwarz inequality,
    which that $\neiDistEu(U, W) \le \neiDistROp(U, W) \le \neiDistOp(U, W)$.
    The last inequality is a consequence of the Riesz-Thorin Interpolation Theorem.
    Let $S \colon \LT \to \LT$ be a Markov operator,
    and let $S_1$ be its unique extension to a Markov operator
    $S_1 \colon \LO \to \LO$, cf.\ \Cref{th:markovRestriction}.
    Let $p = q = 2$, $p_0 = 1$, $q_0 = \infty$, $p_1 = \infty$, $q_1 = 1$,
    and $\theta = 1/2$.
    Then,
    \begin{align*}
        &\frac{1}{p} = \frac{1-\theta}{p_0} + \frac{\theta}{p_1}& &\text{and}&
        &\frac{1}{q} = \frac{1-\theta}{q_0} + \frac{\theta}{q_1},&
    \end{align*}
    that is, the Riesz-Thorin Interpolation Theorem, \Cref{th:rieszThorin}, is applicable,
    and we get
    \begin{equation*}
        \normTT{T_U \circ S_1 - S_1 \circ T_W}
        \le \normOI{T_U \circ S_1 - S_1 \circ T_W}^{1/2}\normIO{T_U \circ S_1 - S_1 \circ T_W}^{1/2}.
    \end{equation*}
    As
    \begin{align*}
        &\normOI{T_U \circ S_1 - S_1 \circ T_W}\\
        \le {}&\normOI{T_U \circ S_1} + \normOI{S_1 \circ T_W}\\
        \le {}&\normOI{T_U} \normOO{S_1} + \normII{S_1} \normOI{T_W} \tag{sub-multiplicativity}\\
        \le {}&\normOI{T_U} + \normOI{T_W} \tag{\Cref{le:markovContraction}}\\
        \le {}&2, \tag{$U, W$ graphons}
    \end{align*}
    we get $\normTT{T_U \circ S_1 - S_1 \circ T_W} \le \sqrt{2} \normIO{T_U \circ S_1 - S_1 \circ T_W}^{1/2}$ and, hence,
    \begin{equation*}
        \normTT{T_U \circ S - S \circ T_W} \le \sqrt{2} \normIO{T_U \circ S - S \circ T_W}^{1/2}.
    \end{equation*}
    This means that $\neiDistOp(U, W) \le \sqrt{2} \neiDistIO(U, W)^{1/2}$.

    $(\ref{le:treeInequalities:two})$:
    Let $S \colon \LT \to \LT$ be a Markov operator.
    Then,
    \begin{align*}
        &\normIO{T_U \circ S - S \circ T_W}\\
        = {}&\sup_{\substack{g \colon [0,1] \to \C,\\\normI{g} \le 1}} \normO{(T_U \circ S - S \circ T_W) (\Re g + \iu \Im g)}\\
        = {}&\sup_{\substack{g \colon [0,1] \to \C,\\\normI{g} \le 1}} \normO{(T_U \circ S - S \circ T_W) (\Re g) + \iu (T_U \circ S - S \circ T_W) (\Im g)}\\
        \le {}&\sup_{\substack{g \colon [0,1] \to \C,\\\normI{g} \le 1}} \big(\normO{(T_U \circ S - S \circ T_W) (\Re g)} + \normO{(T_U \circ S - S \circ T_W) (\Im g)}\big)\\
        \le {}&2 \cdot \sup_{\substack{g \colon [0,1] \to \R,\\\normI{g} \le 1}} \normO{(T_U \circ S - S \circ T_W) g},\tag*{($\normI{\Re g}, \normI{\Im g} \le \normI{g}$)}
    \end{align*}
    which yields $\neiDistIO(U, W) \le 2 \neiDistIOR(U, W)$.

    $(\ref{le:treeInequalities:three})$:
    Let $S \colon \LT \to \LT$ be
    a Markov operator.
    The one-dimensional cut norm coincides
    with the $L_1$-norm \cite[Remark $4.4$]{Janson2013}, i.e.,
    for a function $g \in \LO$, we have
$\normO{g} = \sup_{\normI{f} \le 1} \big\lvert \int_{[0,1]} f(x) g(x) \dx \big\rvert$.
and get
    \begin{align*}
        &\sup_{\substack{g \colon [0,1] \to \R,\\\normI{g} \le 1}} \normO{(T_U \circ S - S \circ T_W) g}\\
        = {}&\sup_{f, g \colon [0,1] \to [-1,1]} \lvert\langle f, (T_U \circ S - S \circ T_W) g \rangle\rvert\\
        = {}&\sup_{f,f', g,g' \colon [0,1] \to [0,1]} \lvert\langle f - f', (T_U \circ S - S \circ T_W) (g - g') \rangle\rvert\\
        \le {}&4 \cdot \sup_{f, g \colon [0,1] \to [0,1]} \lvert\langle f, (T_U \circ S - S \circ T_W) g \rangle\rvert\\
    \end{align*}
    since, for $T \coloneqq T_U \circ S - S \circ T_W$,
    \begin{equation*}
        \langle f - f', T (g - g') \rangle = \langle f, T g \rangle - \langle f', T g \rangle - \langle f, T g' \rangle + \langle f', T g' \rangle.
    \end{equation*}
    Hence, $\neiDistIOR(U, W) \le 4 \neiDist(U, W)$.

    $(\ref{le:treeInequalities:four})$:
    The first inequality is trivial.
    To prove the second, let $S \in \markov$ be a Markov operator
    and $g \colon [0,1] \to \C$ be a function in $\LT$ with $\normT{g} \le 1$.
    Then, $\Re g, \Im g \in \LT$ with $\normT{\Re g}, \normT{\Im g} \le 1$.
    Moreover,
    \begin{align*}
        \normT{(T_U \circ S - S \circ T_W) g}&= \normT{(T_U \circ S - S \circ T_W) (\Re g + \iu \Im g)}\\
        &\le \normT{(T_U \circ S - S \circ T_W) (\Re g)} + \normT{\iu (T_U \circ S - S \circ T_W) (\Im g)}\\
        &= \normT{(T_U \circ S - S \circ T_W) (\Re g)} + \normT{(T_U \circ S - S \circ T_W) (\Im g)}.
    \end{align*}
    and, hence,
    \begin{align*}
        \normTT{T_U \circ S - S \circ T_W}&= \sup_{\substack{g \colon [0,1] \to \C,\\ \normT{g} \le 1}} \normT{(T_U \circ S - S \circ T_W) g}\\
        &\le 2 \sup_{\substack{g \colon [0,1] \to \R,\\ \normT{g} \le 1}} \normT{(T_U \circ S - S \circ T_W) g}\\
        &= 2 \normRTT{T_U \circ S - S \circ T_W}.
    \end{align*}
\end{proof}

\label{subsec:infimumAttained}

Recall that the set of Markov operators is compact
in the weak operator topology, cf.\ \Cref{th:markovCompact}.
This is the reason why we consider $\neiDistEu$
instead of $\neiDistROp$ in the following lemma;
for it, compactness in the weak operator topology suffices
to prove that the infimum in its definition is attained.

\begin{lemma}
    \label{le:neiDistInfMin}
    The infimum in the definition of $\neiDistEu$ is attained.
\end{lemma}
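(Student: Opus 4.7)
The plan is to identify the functional being minimized as a lower-semicontinuous function on the WOT-compact set $\markov$, and then invoke the standard fact that a lower-semicontinuous function on a compact set attains its infimum.

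For fixed $f, g \in \LT$ with $\normT{f}, \normT{g} \le 1$, I would first verify that the real-valued functional $F_{f,g}(S) \coloneqq \lvert \langle f, (T_U \circ S - S \circ T_W) g \rangle \rvert$ is continuous on $\markov$ in the weak operator topology. Using self-adjointness of $T_U$, we have $\langle f, T_U S g \rangle = \langle T_U f, S g \rangle$, which is WOT-continuous in $S$ by the very definition of the WOT applied to the vectors $T_U f, g \in \LT$. Similarly, $\langle f, S T_W g \rangle = \langle f, S(T_W g) \rangle$ is WOT-continuous, this time applied to the vectors $f, T_W g \in \LT$. Taking the difference and then the modulus preserves continuity, so $F_{f,g}$ is WOT-continuous on $\markov$.

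Consequently, $F(S) \coloneqq \sup_{\normT{f} \le 1, \normT{g} \le 1} F_{f,g}(S)$, whose infimum over $\markov$ equals $\neiDistEu(U, W)$ by definition, is a pointwise supremum of WOT-continuous functions and is therefore WOT-lower-semicontinuous on $\markov$. By \Cref{th:markovCompact}, $\markov$ is WOT-compact, so $F$ attains its infimum at some Markov operator $\widetilde{S} \in \markov$, which is the required minimizer.

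No serious obstacle arises in this plan; the only point worth emphasizing is why one must work with the $\LT$-version $\neiDistEu$ rather than attempt a direct analogue for $\neiDist$ or $\neiDistROp$. WOT-continuity of $S \mapsto \langle f, S g \rangle$ for $f, g \in \LT$ is built into the definition of the WOT on $B(\LT)$, and boundedness of $T_U, T_W$ as $\LT$-operators lets composition on either side preserve this continuity. This is precisely why the road to \Cref{le:neiDistZero} proceeds via $\neiDistEu$, the comparable statements for $\neiDist$ and $\neiDistROp$ then being recovered through the quantitative inequalities of \Cref{le:treeDistInequalities}.
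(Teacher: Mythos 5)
Your proposal is correct and follows essentially the same route as the paper: you use WOT-compactness of $\markov$ (\Cref{th:markovCompact}), observe via self-adjointness of $T_U$ and $T_W$ that $S \mapsto \langle f, (T_U \circ S - S \circ T_W)g\rangle$ is WOT-continuous for each fixed $f,g$, and conclude lower semicontinuity of the supremum. The only cosmetic difference is that you invoke the standard fact that a pointwise supremum of continuous functions is lower semicontinuous, whereas the paper unpacks this fact by hand with an explicit $\liminf$ argument over nets.
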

\begin{proof}[Proof of \Cref{le:neiDistInfMin}]
    Let $U, W \in \graphons$ be graphons.
    As the set $\markov$ is compact in the weak operator topology by
    \Cref{th:markovCompact},
    it suffices to prove that the function $h$ defined by
    \begin{align*}
        h(S) &= \sup_{\normT{f} \le 1, \normT{g} \le 1}\big\lvert \langle f, (T_U \circ S) g \rangle - \langle f, (S \circ T_W) g \rangle \big\rvert
    \end{align*}
    is lower semi-continuous.
    To this end, let $\{S_i\}_{i \in I}$ be a net of Markov operators converging
    to a Markov operator $S^* \in \markov$ in the weak operator topology,
    i.e.,
    we have $\langle f, S_i g \rangle \rightarrow \langle f, S^* g \rangle$
    for all $f,g \in \LT$.
    We have to show that $\liminf_{i} h(S_i) \ge h(S^*)$.
    Let $f,g \in \LT$ with $\normT{f}, \normT{g} \le 1$.
    We have
    \begin{align*}
        \langle f, (T_U \circ S_i) g \rangle = \langle T_U f, S_i g \rangle \rightarrow \langle  T_U f, S^* g \rangle
        =\langle f, (T_U \circ S^*) g)\rangle,
    \end{align*}
    where we used that $T_U$ is self-adjoint, and
    \begin{align*}
        \langle f, (S_i \circ T_W) g \rangle = \langle f, S_i (T_W g) \rangle \rightarrow \langle f, S^* (T_W g) \rangle = \langle f, (S^* \circ T_W) g \rangle.
    \end{align*}
    Together, this yields
    \begin{align*}
        \big\lvert \langle f, (T_U \circ S_i) g \rangle - \langle f, (S_i \circ T_W) g \rangle \big\rvert \rightarrow \big\lvert \langle f, (T_U \circ S^*) g \rangle - \langle f, (S^* \circ T_W) g \rangle \big\rvert,
    \end{align*}
    which gives us
    \begin{align*}
        \liminf_{i} h(S_i)&\ge \liminf_{i} \big\lvert \langle f, (T_U \circ S_i) g \rangle - \langle f, (S_i \circ T_W) g \rangle \big\rvert\\
        &= \big\lvert \langle f, (T_U \circ S^*) g \rangle - \langle f, (S^* \circ T_W) g \rangle \big\rvert.
    \end{align*}
    Since this holds for all $f, g \in \LT$ with $\normT{f}, \normT{g} \le 1$,
    we get $\liminf_{i} h(S_i) \ge h(S^*)$
    by definition of the supremum.
\end{proof}

\label{subsec:distanceZero}

With \Cref{le:treeDistInequalities} and \Cref{le:neiDistInfMin},
proving \Cref{le:neiDistZero} is easy.

\begin{proof}[Proof of \Cref{le:neiDistZero}]
    Let $U, W \in \graphons$ be graphons.
    If $t(T, U) = t(T, W)$ for every tree $T$,
    then, by \Cref{th:fracIsoGraphons}, there is a Markov operator
    $S \in \markov$ such that $T_U \circ S = S \circ T_W$,
    which directly yields $\neiDist(U, W) = 0$ by the definition of $\neiDist$.

    For the other direction, assume that $\neiDist(U, W) = 0$.
    Then, by \Cref{le:treeDistInequalities}, we have $\neiDistEu(U, W) = 0$.
    By \Cref{le:neiDistInfMin}, there is a Markov operator
    $S \in \markov$ such that
    $\langle f, (T_U \circ S) g \rangle = \langle f, (S \circ T_W) g \rangle$
    for all $f,g \in \LT$ with $\normT{f}, \normT{g} \le 1$.
    Since this, in particular, holds for $f = g$ and, since we can just normalize
    an arbitrary $g \in \LT$, linearity of the operators and the inner product yields
    $\langle g, (T_U \circ S) g \rangle = \langle g, (S \circ T_W) g \rangle$
    for every $g \in \LT$.
    Hence, \Cref{le:operatorIsZero} yields $T_U \circ S = S \circ T_W$, and
    by \Cref{th:fracIsoGraphons}, we have
    $t(T, U) = t(T, W)$ for every tree $T$.
\end{proof}

\subsection{Proof of \Cref{le:neiDistLeCutDist} (\texorpdfstring{$\neiDist \le \dist$}{Tree and Cut Distance})}
\label{subsec:treeDistLeCutDist}

\begin{proof}[Proof of \Cref{le:neiDistLeCutDist}]
    Let $U, W \in \graphons$ be graphons, and let
    $\varphi \in \measPres$ be an invertible measure-preserving map.
    The Koopman operator $T_\varphi$ of $\varphi$ is a Markov operator,
    and we observe that
    \begin{align*}
        T_U \circ T_\varphi - T_\varphi \circ T_W&= (T_U \circ T_\varphi - T_\varphi \circ T_W) \circ T_{\varphi^{-1}} \circ T_\varphi\\
        &= (T_U - T_\varphi \circ T_W \circ T_\varphi^{-1}) \circ T_\varphi\\
        &= (T_U - T_{W^\varphi}) \circ T_\varphi\\
        &= T_{U-W^\varphi} \circ T_\varphi.
    \end{align*}
    Then, we get
    \begin{align*}
        &\supFG \lvert \langle f, (T_U \circ T_\varphi - T_\varphi \circ T_W) g \rangle \rvert\\
= {}&\supFG \lvert \langle f, T_{U-W^\varphi} (T_\varphi g) \rangle \rvert\\
        = {}&\supFG \lvert \langle f, T_{U-W^\varphi} g \rangle \rvert. \tag{$\varphi$ measure preserving}
    \end{align*}
    Thus,
    \begin{align*}
        \neiDist(U, W) &= \inf_{S \in \markov} \supFG \lvert \langle f, (T_U \circ S - S \circ T_W) g \rangle \rvert\\
        &\le \inf_{\varphi \in \measPres} \supFG \lvert \langle f, (T_U \circ T_\varphi - T_\varphi \circ T_W) g \rangle \rvert\\
        &= \inf_{\varphi \in \measPres} \supFG \lvert \langle f, T_{U-W^\varphi} g \rangle \rvert\\
        &= \dist(U, W).
    \end{align*}
\end{proof}

\subsection{Proof of \Cref{th:pathHomsGraphons} (Path Densities)}
\label{subsec:pathDensitiesProof}

The following lemma presents a generalization
of the interpolation technique of Dell, Grohe, and Rattan \cite[Lemma $10$]{Dell2018}
from finite sums to convergent series
and is needed for the proof of \Cref{th:pathHomsGraphons}.

\begin{lemma}[{\cite[Proposition A$.21$]{Lovasz2012}}]
    \label{le:interpolationLemma}
    Let $a_i, b_i, c_i, d_i$ be sequences of non-zero real numbers
    such that $b_i \neq b_j$ and $d_i \neq d_j$ for $i \neq j$.
    Assume that there is a $k_0 \ge 0$ such that, for every $k \ge k_0$,
    the sums $\sum_{i = 1}^\infty a_i b_i^k$ and $\sum_{i = 1}^\infty c_i d_i^k$
    are convergent and equal.
    Then, there is a permutation $\pi \colon \N \to \N$ such that
    $a_i = c_{\pi(i)}$ and $b_i = d_{\pi(i)}$ for every $i \ge 0$.
\end{lemma}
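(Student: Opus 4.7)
The plan is to prove the lemma by matching, stage by stage, the terms of largest absolute value on each side and subtracting them to reduce to a smaller problem of the same shape.

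\emph{Setup and reindexing.} First I would argue that only finitely many $|b_i|$ and $|d_i|$ exceed any prescribed $\epsilon > 0$, so that in particular $|b_i| \to 0$ and $|d_i| \to 0$, and one may reindex with $|b_1| \ge |b_2| \ge \cdots$ and $|d_1| \ge |d_2| \ge \cdots$. This uses the convergence hypothesis applied to consecutive exponents $k_0$ and $k_0+1$ together with $a_i \ne 0$; in the intended application this is immediate from absolute convergence coming from a Hilbert-Schmidt operator. By distinctness of the $b_i$, for each value $B > 0$ at most two indices on the $b$-side satisfy $|b_i| = B$, namely one with $b_i = B$ and one with $b_i = -B$; likewise on the $d$-side.

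\emph{Matching the leading absolute value.} Set $B := |b_1|$ and $D := |d_1|$, and let $j_1, j_2$ (resp.\ $l_1, l_2$) be the up-to-two indices with $|b| = B$ (resp.\ $|d| = D$), using the convention $b_{j_1} = B$, $b_{j_2} = -B$, etc., and setting $a_{j_2} := 0$ (resp.\ $c_{l_2} := 0$) if the antipodal index is absent. Dividing the identity $\sum_i a_i b_i^k = \sum_j c_j d_j^k$ by $\max(B, D)^k$ and letting $k \to \infty$ along even and odd $k$ separately, dominated convergence kills all sub-dominant terms and yields
\[
    a_{j_1} + a_{j_2} \;=\; c_{l_1} + c_{l_2}, \qquad a_{j_1} - a_{j_2} \;=\; c_{l_1} - c_{l_2},
\]
provided $B = D$; if instead $B > D$, the right-hand side would vanish along both parities, forcing $a_{j_1} = a_{j_2} = 0$ and contradicting $a_i \ne 0$. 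Hence $B = D$, and the two equations give $a_{j_1} = c_{l_1}$ and $a_{j_2} = c_{l_2}$. The case where one side has the antipodal index but the other does not is excluded because it would force a genuinely nonzero coefficient to equal $0$.

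\emph{Induction.} After subtracting the matched terms from both series, the residual identities $\sum_{i \notin \{j_1, j_2\}} a_i b_i^k = \sum_{l \notin \{l_1, l_2\}} c_l d_l^k$ still hold for every $k \ge k_0$, with strictly smaller maximum absolute value. Iterating the previous step stage by stage through the decreasing sequence $B_1 > B_2 > \cdots$ of distinct absolute values appearing in $(|b_i|)$, every index $i$ is paired at the unique finite stage $\ell$ for which $|b_i| = B_\ell$, and the accumulated pairing defines the required permutation $\pi$ with $a_i = c_{\pi(i)}$ and $b_i = d_{\pi(i)}$.

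\emph{Main obstacle.} The subtle point is the initial reindexing together with the use of dominated convergence inside the sum as $k \to \infty$: both rely on $|b_i|, |d_i| \to 0$ and on an absolute-summability estimate that the statement leaves implicit but that is available in the intended application. A secondary but necessary detail is the bookkeeping, at every inductive step, of the case where the antipodal value $-B_\ell$ is present on only one side; this must lead to a contradiction rather than survive into the next stage, and is handled via the two parity equations above.
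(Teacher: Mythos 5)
The paper does not actually prove this lemma---it is quoted verbatim from \cite[Proposition A.21]{Lovasz2012}---so your attempt can only be judged on its own terms. The peeling strategy you use (isolate the terms of maximal absolute value, separate even and odd $k$, subtract, induct) is the natural one and mirrors the finite-sum interpolation argument of \cite{Dell2018}, but its very first step is not available under the stated hypotheses. Convergence of $\sum_i a_i b_i^k$ for each fixed $k \ge k_0$ only forces the individual terms $a_i b_i^k$ to tend to $0$ for that $k$; since the $a_i$ may be extremely small, this gives no control whatsoever on the $b_i$. For instance $a_i = 2^{-i^2}$, $b_i = i$ satisfies the convergence hypothesis for every $k$ (even with absolute convergence), yet $|b_i| \to \infty$: there is no reindexing with $|b_1| \ge |b_2| \ge \cdots$, no maximal value $B$, and the peeling process cannot start. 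In particular your claimed justification---convergence at the consecutive exponents $k_0$ and $k_0+1$ together with $a_i \neq 0$---does not yield $|b_i| \to 0$, and the same example shows the problem does not disappear if one strengthens the hypothesis to absolute convergence.

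The second load-bearing step has the same character: ``dominated convergence kills all sub-dominant terms'' after dividing by $\max(B,D)^k$ needs a summable dominating bound, e.g.\ $\sum_i |a_i| |b_i|^{k_1} < \infty$ for some $k_1$ together with $\sup_{i \notin F} |b_i| < B$ for the finitely many dominant indices $F$; with only (possibly conditional) convergence it is not clear that $\sum_{i \notin F} a_i b_i^k = o(B^k)$, and a uniform-in-$i$ bound on single terms does not suffice because infinitely many terms remain. You acknowledge both points and defer to ``the intended application'', where they do hold (in the proof of \Cref{th:pathHomsGraphons} the coefficients are $\lVert f_i \rVert_2^2$ with $\sum_i \lVert f_i \rVert_2^2 \le 1$ and the $b_i$ are eigenvalues of a compact graphon operator, so $|b_i| \le 1$ and $b_i \to 0$). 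But then what you have proved is a weaker lemma carrying those extra hypotheses, not the statement as given; to establish the lemma in the stated generality you would need either to derive boundedness/decay of the $b_i$ and an absolute-summability estimate from the hypotheses (impossible, by the example above) or to switch to an argument that does not presuppose a maximal $|b_i|$ at all.
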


Note that, if $k_0 = 0$, we can also allow the number zero to appear
in the sequences $b_i$ and $d_i$;
equality of the two series for $k = 0$ directly implies that
the coefficients of the zeros are the same.

\begin{proof}[Proof of \Cref{th:pathHomsGraphons}]
    First, assume that there is some operator $S \colon \LT \to \LT$
    with $S \allOne = \allOne$ and $S^* \allOne = \allOne$
    such that $T_U \circ S = S \circ T_W$.
    Then, induction yields $T_U^\ell \circ S = S \circ T_W^\ell$ for every $\ell \ge 0$.
    Hence,
    \begin{equation*}
        \langle \allOne, T_U^\ell \allOne \rangle = \langle \allOne, (T_U^\ell \circ S) \allOne \rangle = \langle \allOne, (S \circ T_W^\ell) \allOne \rangle = \langle S^* \allOne, T_W^\ell \allOne \rangle = \langle \allOne, T_W^\ell \allOne \rangle.
    \end{equation*}

    For the backward direction, assume that $t(P_\ell, U) = t(P_\ell, W)$ for every $\ell \ge 0$.
    Let $\{f'_i\}_{i \in \N}$ and $(\lambda'_i)_{i \in \N}$
    be the orthonormal basis of the $\LT$
    consisting of eigenfunctions of $T_U$ and the corresponding
    sequence of real eigenvalues obtained from the Spectral Theorem for $T_U$.
    By definition of an orthonormal basis, we have
    $\allOne = \sum_{i \in \N} \langle \allOne, f'_i \rangle f'_i$.
    We call an eigenvalue $\lambda$ from $(\lambda'_i)_{i \in \N}$ \textit{useful}
    if the finite sum $\sum_{i \in \N, \lambda'_i = \lambda} \langle \allOne, f_i \rangle f_i$
    is non-zero, i.e., one of the eigenfunctions corresponding to $\lambda$ is
    not orthogonal to $\allOne$.
    Let $\{\lambda_i\}_{i \ge 0} \subseteq \{\lambda'_1, \lambda'_2, \dots\}$
    be the set of these (pairwise distinct) useful eigenvalues
    and let $f_i \coloneqq \sum_{j \in \N, \lambda'_j = \lambda_i} \langle \allOne, f'_j \rangle f'_j$
    for $i \ge 0$.
    Then, $\allOne = \sum_{i \ge 0} f_i$, where $f_i$ is an eigenfunction of $T_U$ with eigenvalue
    $\lambda_i$, and the set $\{f_i\}_{i \ge 0}$ of these eigenfunctions is orthogonal.
    Note that the set $\{\lambda_i\}_{i \ge 0}$ may be finite;
    in terms of notation, we do not treat this case differently.
    In the same way, apply the Spectral Theorem to $T_W$ to obtain
    another orthonormal basis and sequence of eigenvalues and define the useful
    eigenvalues $\{\mu_i\}_{i \ge 0}$ and the functions $\{g_i\}_{i \ge 0}$ analogously.

    Then, since we have $T_U(f_i) = \lambda_i f_i$ for every $i \ge 0$,
    we get
    \begin{align*}
        \langle \allOne, T_U^\ell \allOne \rangle &= \langle \allOne, T_U^\ell \big(\sum_{i \ge 0} f_i \big)\rangle\\
        &= \langle \allOne, \sum_{i \ge 0} \lambda_i^\ell f_i \rangle \tag*{($T_U$ linear and continuous)}\\
        &= \langle \sum_{i \ge 0} f_i, \sum_{i \ge 0} \lambda_i^\ell f_i \rangle\\
        &= \sum_{i \ge 0} \lambda_i^\ell \langle f_i, f_i \rangle \tag*{($\langle \cdot, \cdot \rangle$ bilinear and continuous)}\\
        &= \sum_{i \ge 0} \normT{f_i}^2 \lambda_i^\ell
    \end{align*}
    for every $\ell \ge 0$.
    Analogously, we get $\langle \allOne, T_W^\ell \allOne \rangle = \sum_{i \ge 0} \normT{g_i}^2 \mu_i^\ell$,
    and the assumption can be formulated as
$\sum_{i \ge 0} \normT{f_i}^2 \lambda_i^\ell = \sum_{i \ge 0} \normT{g_i}^2 \mu_i^\ell$
for every $\ell \ge 0$.

    We argue that \Cref{le:interpolationLemma} is applicable.
    If both sets $\{\lambda_i\}_{i \ge 0}$ and $\{\mu_i\}_{i \ge 0}$
    are infinite, this is clear.
    If both sets are finite, the lemma also applies
    as we can simply append a sequence like $(2^{-i})_{i \ge i_0}$ for some $i_0 \ge 0$
    to both sequences.
    We argue that the remaining case,
    where one of the sets is finite while the other
    one is infinite, cannot occur.
    To this end, assume without loss of generality that
    $\{\lambda_i\}_{i \ge 0}$ is the finite set $\{\lambda_0, \dots, \lambda_n\}$.
    Then, the assumption reads as
    $\sum_{i = 0}^{n} \normT{f_i}^2 \lambda_i^\ell = \sum_{i = 0}^{\infty} \normT{g_i}^2 \mu_i^\ell$ for every $\ell \ge 0$, which implies that
    \begin{equation*}
        \sum_{i = 0}^{n} \normT{f_i}^2 \lambda_i^\ell + \sum_{i = 0}^{\infty} \normT{g_i}^2 \mu_i^\ell = \sum_{i = 0}^{\infty} 2  \normT{g_i}^2 \mu_i^\ell
    \end{equation*}
    for every $\ell \ge 0$.
    By combining the finite sum and the infinite series on the left-hand side,
    we are again in the situation of \Cref{le:interpolationLemma},
    where the sequences $a_i$ and $b_i$ for the left-hand side have finitely many elements
    of the form $\normT{f_i}^2$ and $\lambda_i$,
    finitely many elements of the form $\normT{f_i}^2 + \normT{g_j}^2$ and $\lambda_i = \mu_j$,
    and infinitely many elements of the form $\normT{g_i}^2$ and $\mu_i$,
    respectively.
    In contrast, the elements of the sequences $c_i$ and $d_i$ for the right-hand side
    are of the form $2\normT{g_i}^2$ and $\mu_i$, respectively.
    Hence, the resulting bijection has to map
    one of the infinitely many pairs of elements of the form $\normT{g_i}^2$ and $\mu_i$ to
    a pair $2\normT{g_j}^2$ and $\mu_j$.
    Then, $i = j$ since the $\{\mu_i\}_{i \ge 0}$ are pairwise distinct
    and the lemma guarantees that $\mu_i = \mu_j$.
    But, we have $\normT{g_i}^2 \neq 2 \normT{g_i}^2$, which contradicts the lemma.

    Now, \Cref{le:interpolationLemma}
    yields a permutation $\pi \colon \N \to \N$ such that
    $\lambda_i = \mu_{\pi(i)}$ and $\normT{f_i}^2 = \normT{g_{\pi(i)}}^2$
    for every $i \ge 0$.
    By relabeling, we can assume $\lambda_i = \mu_i$ and $\normT{f_i} = \normT{g_i}$
    for every $i \ge 0$.
    Note that, as the convergence in an orthonormal basis is unconditional by
    definition, this does not change the fact that we have $\allOne = \sum_{i \ge 0} g_i$.
    For a function $f \in \LT$, define
    \begin{equation*}
        S f \coloneqq \sum_{i \ge 0} {\big\langle f, \frac{g_i}{\normT{g_i}} \big\rangle} \frac{f_i}{\normT{f_i}}.
    \end{equation*}
    This actually defines a mapping $\LT \to \LT$:
    As $\{{f_i}/{\normT{f_i}}\}_{i \ge 0}$ is orthonormal,
    the Riesz-Fischer Theorem yields that the sum converges to a function in $\LT$
    if and only if we have $\sum_{i \ge 0} \left\lvert {\langle f, g_i/{\normT{g_i}} \rangle} \right\rvert^2 < \infty$.
    This however, follows immediately from Bessel's inequality as
    the set $\{{g_i}/{\normT{g_i}}\}_{i \ge 0}$ is also orthonormal, i.e.,
    we have
    $\sum_{i \ge 0} \left\lvert {\langle f, g_i/{\normT{g_i}\rangle}} \right\rvert^2 \le \normT{f}^2 < \infty$.
    The linearity of the inner product in its first argument yields that
    $S$ is linear.
    A closer analysis yields that
    \begin{align*}
        \normT{S f}^2
        &= \langle S f, S f \rangle\\
        &= \big\langle \sum_{i \ge 0} {\big\langle f, \frac{g_i}{\normT{g_i}} \big\rangle} \frac{f_i}{\normT{f_i}}, S f \big\rangle\\
        &= \sum_{i \ge 0} {\big\langle f, \frac{g_i}{\normT{g_i}} \big\rangle} \big\langle \frac{f_i}{\normT{f_i}}, S f \big\rangle \tag*{($\langle \cdot, \cdot \rangle$ bilinear and continuous)}\\
        &= \sum_{i \ge 0} {\big\langle f, \frac{g_i}{\normT{g_i}} \big\rangle} \sum_{j \ge 0} {\big\langle f, \frac{g_j}{\normT{g_j}} \big\rangle} \big\langle \frac{f_i}{\normT{f_i}}, \frac{f_j}{\normT{f_j}} \big\rangle \tag*{($\langle \cdot, \cdot \rangle$ bilinear and continuous)}\\
        &= \sum_{i \ge 0} {\big\langle f, \frac{g_i}{\normT{g_i}} \big\rangle} {\big\langle f, \frac{g_i}{\normT{g_i}} \big\rangle} \tag*{($\{f_i/\normT{f_i}\}_{i \ge 0}$ orthonormal)}\\
        &= \sum_{i \ge 0} \left \lvert {\big\langle f, \frac{g_i}{\normT{g_i}} \big\rangle} \right\rvert^2\\
        &\le \normT{f}^2 \tag*{(Bessel's inequality)}
    \end{align*}
    for every $f \in \LT$, i.e., $\normT{S f} \le \normT{f}$.
    Hence, $S \colon \LT \to \LT$ is not only a bounded linear operator but also a
    contraction.
    Moreover, we have
    \begin{align*}
        S \allOne
        &= \sum_{i \ge 0} {\big\langle \allOne, \frac{g_i}{\normT{g_i}} \big\rangle} \frac{f_i}{\normT{f_i}}\\
        &= \sum_{i \ge 0} \frac{\langle \allOne, g_i \rangle}{\normT{g_i}^2} f_i \tag*{($\normT{f_i} = \normT{g_i}$ for every $i \ge 0$)}\\
&= \sum_{i \ge 0} \frac{\langle g_i, g_i \rangle}{\normT{g_i}^2} f_i \tag*{($\{g_i\}_{i \ge 0}$ orthogonal, $\langle \cdot, \cdot \rangle$ continuous and linear)}\\
        &= \sum_{i \ge 0} f_i\\
        &= \allOne.
    \end{align*}
    It is easy to verify that the Hilbert adjoint $S^*$ of $S$ is given by
    \begin{equation*}
        S^* f = \sum_{i \ge 0} {\big\langle f, \frac{f_i}{\normT{f_i}} \big\rangle} \frac{g_i}{\normT{g_i}}
    \end{equation*}
    for every $f \in \LT$,
    and hence, by symmetry, we also have $S^* \allOne = \allOne$.
    Therefore, $S$ is a signed Markov operator.
    It remains to prove that $T_U \circ S = S \circ T_W$.
    We have
    \begin{align*}
        (T_U \circ S) f&= T_U \big(\sum_{i \ge 0} {\big\langle f, \frac{g_i}{\normT{g_i}} \big\rangle} \frac{f_i}{\normT{f_i}}\big)\\
        &= \sum_{i \ge 0} {\big\langle f, \frac{g_i}{\normT{g_i}} \big\rangle} \frac{T_U(f_i)}{\normT{f_i}} \tag*{($T_U$ linear and continuous)}\\
        &= \sum_{i \ge 0} {\big\langle f, \frac{g_i}{\normT{g_i}} \big\rangle} \frac{\lambda_i f_i}{\normT{f_i}}\\
        &= \sum_{i \ge 0} {\big\langle f, \frac{\mu_i g_i}{\normT{g_i}} \big\rangle} \frac{f_i}{\normT{f_i}} \tag*{($\lambda_i = \mu_i \in \R$ for every $i \ge 0$)}\\
        &= \sum_{i \ge 0} {\big\langle f, \frac{T_W(g_i)}{\normT{g_i}} \big\rangle} \frac{f_i}{\normT{f_i}}\\
        &= \sum_{i \ge 0} {\big\langle T_W f, \frac{g_i}{\normT{g_i}} \big\rangle} \frac{f_i}{\normT{f_i}} \tag*{($T_W$ self-adjoint)}\\
        &= (S \circ T_W) f
    \end{align*}
    for every $f \in \LT$.
\end{proof}

\subsection{Proof of \Cref{th:colRefApproxInv} (Approximate Inversion)}
\label{subsec:inverseApproximationProof}

Let us state the inversion result that \Cref{th:colRefApproxInv}
is based on.

\begin{theorem}[{\cite[Corollary 4]{KieferSchweitzerSelman15}}]
    \label{th:inversion}
    $\mathcal{I}^2_{C}$ admits linear time inversion on the class of graphs.
\end{theorem}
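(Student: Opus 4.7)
The plan is to describe a direct constructive algorithm: given an abstract invariant $\mathcal{I}$ (consisting of a number $k$ of colors, class sizes $n_1, \dots, n_k$, and an $M$-matrix $(M_{ij})_{i,j\in[k]}$) that is a valid output of color refinement, synthesise a graph $G$ realising it in time linear in $|E(G)|$. Since any valid invariant must satisfy the edge-counting identities $n_i M_{ij} = n_j M_{ji}$ for all $i,j$ and the parity condition $n_i M_{ii} \equiv 0 \pmod{2}$, these are exactly the conditions needed for the required regular bipartite and regular graphs to exist.

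First I would allocate $\sum_i n_i$ vertices partitioned into classes $C_1, \dots, C_k$ with $|C_i| = n_i$ and fix a labelling $0, 1, \dots, n_i - 1$ inside each class. Then, for every unordered pair $i < j$, I would install a $(M_{ij}, M_{ji})$-biregular bipartite graph between $C_i$ and $C_j$ by means of an explicit circulant rule (e.g., connecting vertex $u \in C_i$ to the $M_{ij}$ vertices $r\cdot u + s \pmod{n_j}$ for suitable $r,s$, with a straightforward adjustment when $\gcd(n_i,n_j) > 1$); the consistency identity guarantees feasibility, and the circulant can be emitted in time linear in the number of edges produced. In parallel, for each $i$, I would install an $M_{ii}$-regular graph on $C_i$ by the analogous cyclic construction, which exists and is emitted in linear time because $n_i M_{ii}$ is even. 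The total running time is $O\bigl(\sum_{i\le j} n_i M_{ij}\bigr) = O(|E(G)|)$, which matches the size of any reasonable encoding of the output.

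Correctness splits into two claims. By construction every vertex in $C_i$ has exactly $M_{ij}$ neighbours in $C_j$, so the partition $\{C_1,\dots,C_k\}$ is equitable; hence the color refinement partition of $G$ refines it, i.e.\ the stable colouring at worst splits classes further. What I then need is that the stable colouring neither splits nor merges, so that its associated invariant equals $\mathcal{I}$. No merging occurs because, in a valid color refinement invariant, each color is listed once and thus the profile rows $(M_{i1}, \dots, M_{ik})$ are pairwise distinct; a coarsening would fuse two indices $i \ne j$, which the distinctness of profiles prevents. No further splitting occurs because each circulant block is vertex-transitive within the relevant class, so color refinement sees every vertex of $C_i$ with an identical multiset of adjacencies to every $C_j$ at every round.

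The main obstacle will be the second claim, namely preventing accidental splitting. A naively chosen biregular bipartite graph could introduce second-order asymmetries (for instance, two vertices of $C_i$ with different multisets of common neighbours in $C_j \cup C_k$), and this would be detected by color refinement in round $2$ or later, producing a strictly finer partition. The remedy is to pick the circulants simultaneously so that the joint action of all intra- and inter-class blocks on $C_i$ is vertex-transitive for each $i$; a single cyclic $\mathbb{Z}/n_i$-action that commutes with every block achieves this, and it is exactly this uniform cyclic rule (applied coordinate-wise) that makes both the construction and the verification run in linear time.
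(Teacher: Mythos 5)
Your construction plan (classes $C_1,\dots,C_k$ carrying $M_{ii}$-regular graphs inside and $(M_{ij},M_{ji})$-biregular graphs between classes, emitted in time linear in the number of edges) is in the spirit of the Kiefer--Schweitzer--Selman construction that \Cref{th:inversion} is cited for, but your correctness argument has the direction of color refinement backwards. Color refinement computes the \emph{coarsest} equitable partition of $G$: if an equitable partition refines the current colouring, it still refines it after one refinement round, so by induction the stable colouring is \emph{coarser} than any equitable partition, in particular coarser than your $\{C_1,\dots,C_k\}$. Hence the ``accidental splitting'' that you treat as the main obstacle, and try to exclude by choosing all blocks invariant under a common cyclic action on each class, simply cannot occur for \emph{any} realization of the prescribed degree data; that vertex-transitivity machinery is both unnecessary and not obviously achievable for arbitrary class sizes. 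The genuine danger is the opposite one, merging --- exactly the issue this paper flags immediately after \Cref{th:inversion}, namely that color refinement may compute a partition strictly coarser than $C_1,\dots,C_m$. (You also omit the feasibility conditions $M_{ii}<n_i$ and $M_{ij}\le n_j$, though that is minor.)

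Your argument against merging is also insufficient. Pairwise distinct rows of $M$ do not prevent color refinement from fusing classes, because after fusing, neighbour counts are taken with respect to the fused classes only: for $M$ with rows $(1,0)$ and $(0,1)$ the rows are distinct, yet merging the two classes is equitable (the graph is $1$-regular) and color refinement returns a single class. What is actually needed is that no nontrivial grouping of the indices makes the group-wise row sums constant within each group. Since equitability of a grouped partition of the constructed $G$ depends only on $M$ and the grouping, this follows from the assumption that the input is genuinely in the image of $\mathcal{I}^2_{C}$: a nontrivial admissible grouping would yield a coarser equitable partition of the original graph, contradicting that color refinement computes the coarsest one. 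Alternatively one can perturb the data so that the row sums (i.e.\ the degrees) of distinct classes differ, which is precisely the route this paper takes in its application (proof of \Cref{th:colRefApproxInv}), where the prescribed $(\bar{s},M)$ need not be a valid invariant. Without one of these arguments, your proof that the stable colouring neither splits nor merges does not go through.
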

They show that, given
$\bar{s} \in \N^{m}$ and $M \in \N^{m \times m}$ such that
\begin{multicols}{2}
\begin{enumerate}
    \item $M_{ii} < s_i$ for every $i \in \numTo{m}$,
    \item $M_{ii} \cdot s_i$ is even for every $i \in \numTo{m}$,
    \item $M_{ij} \le s_j$ for all $i,j \in \numTo{m}$, and
    \item $M_{ij} \cdot s_i = M_{ji} \cdot s_j$ for all $i,j \in \numTo{m}$,
\end{enumerate}
\end{multicols}
\noindent one can construct a graph $G$ in linear time where
$V(G)$ can be partitioned into sets $C_1, \dots, C_m$ of sizes
$s_1, \dots, s_m$, respectively, such that
$G[C_i]$ is a $M_{ii}$-regular graph for every $i \in \numTo{m}$ and
$G[C_i \cup C_j]$ is a $(M_{ij}, M_{ji})$-biregular graph for all $i,j \in \numTo{m}$.
That is, these conditions, which are clearly necessary,
are also sufficient for such a graph to exist.

For a graph $G$ constructed from $\bar{s} \in \N^m$
and $M \in \N^{m \times m}$ via the criteria of \Cref{th:inversion},
the weighted graph $\ColG$ might not be isomorphic to
$G_{\bar{s},M} \coloneqq (\numTo{m}, ({s_i}/{\sum_{i} s_i})_{i}, ({M_{ij}}/{s_j})_{ij})$
as color refinement might compute a coarser partition than $C_1, \dots, C_m$.
The proof of \Cref{th:colRefApproxInv} proceeds
in three steps:
First, we round the vertex weights of the given weighted graph $H$.
\Cref{le:roundingVertexWeights} shows that this is possible
with an error of $1/n$ while using $n$ vertices.
Second, we round the edge weights of $H$.
Finally, we show that our resulting invariant is actually close to $H$.
Ideally, one would like to use the (scaled) identity matrix as a fractional overlay
for this. However, due to the rounded vertex weights,
we have to settle for a fractional overlay that is close to the identity matrix,
cf.\ \Cref{le:fracOverlay}.

\begin{lemma}
    \label{le:roundingVertexWeights}
    Let $\bar{\alpha} \in \Rnn^{m}$ such that $\sum_{i \in \numTo{m}} \alpha_i = 1$.
    For every $n \ge 1$, there is an $\bar{s} \in \N^n$ such that
    $\sum_{i \in \numTo{m}} s_i = n$ and $\lvert \frac{s_i}{n} - \alpha_i\rvert < \frac{1}{n}$
    for every $i \in \numTo{m}$.
\end{lemma}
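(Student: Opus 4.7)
The plan is to apply the standard largest-remainder rounding argument (Hamilton's method). First, set $a_i \coloneqq n\alpha_i$ for each $i \in \numTo{m}$, so that $\sum_i a_i = n$. Write $a_i = \lfloor a_i \rfloor + r_i$ with fractional remainder $r_i \in [0,1)$, and let $k \coloneqq n - \sum_i \lfloor a_i \rfloor$. Since $\sum_i a_i = n$ is an integer, $k = \sum_i r_i$, and in particular $k$ is a non-negative integer.

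Next, I would choose a set $I \subseteq \{i : r_i > 0\}$ of size exactly $k$, say the indices with the $k$ largest fractional parts. This is feasible because, whenever $k > 0$, each contributing $r_i$ is strictly less than $1$, and hence $k = \sum_i r_i < \lvert \{ i : r_i > 0\} \rvert$. Now define
\begin{equation*}
    s_i \coloneqq \begin{cases} \lfloor a_i \rfloor + 1, & i \in I,\\ \lfloor a_i \rfloor, & i \notin I.\end{cases}
\end{equation*}
Then $s_i \in \N$ and $\sum_i s_i = \sum_i \lfloor a_i \rfloor + k = n$, as required.

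It remains to verify the error bound. For $i \notin I$ with $r_i > 0$, we have $\lvert s_i - a_i \rvert = r_i < 1$; for $i \in I$ we have $\lvert s_i - a_i \rvert = 1 - r_i < 1$ because $r_i > 0$ by construction of $I$; and for the remaining indices where $r_i = 0$, the value $a_i$ is already an integer and $s_i = a_i$ exactly. In every case $\lvert s_i - a_i \rvert < 1$, so dividing by $n$ gives $\lvert s_i/n - \alpha_i \rvert < 1/n$.

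There is no real obstacle here; the only subtle point is obtaining the strict inequality in the conclusion, which is why the plan carefully restricts $I$ to indices with $r_i > 0$ instead of rounding up arbitrary coordinates.
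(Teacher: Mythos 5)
Your proof is correct, and it takes a genuinely different route from the paper's. The paper uses a \emph{sequential greedy} rounding: it processes indices $j = 1, \ldots, m$ in order and at each step chooses either $\lfloor n\alpha_j \rfloor$ or $\lceil n\alpha_j \rceil$ according to the sign of the accumulated error, maintaining the invariant that the running partial sum of errors stays strictly below $1/n$ in absolute value; at the end this invariant forces $\sum_i s_i = n$. You instead use a \emph{global} largest-remainder (Hamilton) rounding: take all floors, compute the integer deficit $k = n - \sum_i \lfloor n\alpha_i \rfloor = \sum_i r_i$, and round up exactly $k$ entries chosen from those with strictly positive fractional part. The key subtlety in both proofs is the same, namely securing the \emph{strict} inequality $\lvert s_i/n - \alpha_i \rvert < 1/n$: the paper achieves this through its strict running-error invariant, while you achieve it by carefully restricting the set $I$ of rounded-up indices to those with $r_i > 0$, and verifying feasibility of that choice via $k = \sum_i r_i < \lvert\{i : r_i > 0\}\rvert$ whenever $k > 0$. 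Your argument is somewhat cleaner to state and verify since the per-coordinate error bound is immediate by case analysis, whereas the paper's requires an inductive invariant; the paper's version has the minor advantage of being an online, single-pass algorithm, but that plays no role in the application.
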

\begin{proof}
    Clearly, the bound $\lvert \frac{s_i}{n} - \alpha_i \rvert < \frac{1}{n}$
    can be satisfied by setting $s_i \coloneqq \lfloor n \cdot \alpha_i \rfloor$
    or $s_i \coloneqq \lceil n \cdot \alpha_i \rceil$ for every $i \in \numTo{m}$.
    However, to also satisfy $\sum_{i \in \numTo{m}} s_i = n$, one has to
    choose correctly between these two alternatives.
    For $j = 1, \ldots, m$, we proceed as follows:
    If $\sum_{i \in \numTo{j}} (\frac{s_i}{n} - \frac{n \cdot \alpha_i}{n} ) > 0$,
    then we set $s_j \coloneqq \lfloor n \cdot \alpha_i \rfloor$.
    Otherwise, we set $s_j \coloneqq \lceil n \cdot \alpha_i \rceil$.
    A simple inductive argument yields that, for every $j \in \numTo{m}$, the invariant
    $\lvert \sum_{i \in \numTo{j}} (\frac{s_i}{n} - \frac{n \cdot \alpha_i}{n} ) \rvert < \frac{1}{n}$ is satisfied.
    In particular, we have $\lvert \sum_{i \in \numTo{m}} (\frac{s_i}{n} - \frac{n \cdot \alpha_i}{n} ) \rvert < \frac{1}{n}$.
    Since $\sum_{i \in \numTo{m}} \alpha_i = 1$, we get
    $\lvert \sum_{i \in \numTo{m}} \frac{s_i}{n} - 1 \rvert < \frac{1}{n}$,
    and by multiplying with $n$, also
    $\lvert \sum_{i \in \numTo{m}} s_i - n \rvert < 1$.
    Since $s_i \in \N$ for every $i \in \numTo{m}$, this implies
    $\sum_{i \in \numTo{m}} s_i = n$.
\end{proof}

\begin{lemma}
    \label{le:fracOverlay}
    Let $\bar{s} \in \Rnn^m$ and $\bar{t} \in \Rnn^n$ such that $\sum_{j = 1}^{m} s_j = \sum_{i = 1}^{n} t_i$.
    Then, there is an $X \in \Rnn^{m \times n}$ such that
    \begin{enumerate}
        \item $\sum_{j = 1}^{n} X_{ij} = s_i$ for every $i \in \numTo{m}$,
        \item $\sum_{i = 1}^{m} X_{ij} = t_j$ for every $j \in \numTo{n}$, and
        \item $X_{ii} = \min \{s_i, t_i\}$ for every $i \in \numTo{\min \{m, n\}}$.
    \end{enumerate}
\end{lemma}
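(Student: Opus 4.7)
The plan is to construct $X$ explicitly by first filling in the diagonal with the required maximum values, then distributing the residual row and column sums via a simple product construction. Specifically, I would set $X_{ii} := \min\{s_i, t_i\}$ for every $i \in [\min\{m,n\}]$ and define the residuals $s_i' := s_i - X_{ii}$ (with the convention $X_{ii} = 0$ when $i > \min\{m,n\}$) and analogously $t_j' := t_j - X_{jj}$. The key observation, which forces condition~(3) to survive the second step, is that for each $i \in [\min\{m,n\}]$ one of $s_i' = \max\{s_i - t_i, 0\}$ and $t_i' = \max\{t_i - s_i, 0\}$ must vanish, so $s_i' t_i' = 0$.

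Next I would distribute the remaining mass. Set $T' := \sum_i s_i'$; by the hypothesis $\sum_i s_i = \sum_j t_j$ and the definition of the residuals, this equals $\sum_j t_j'$. If $T' = 0$ we are already done, since all residuals vanish and the diagonal matrix $X$ already satisfies the required row and column sums. Otherwise, I would define the off-diagonal part by $X_{ij} := X_{ij} + s_i' t_j' / T'$ for $(i,j)$ with $i \neq j$ (and add the term $s_i' t_i'/T' = 0$ on the diagonal for notational uniformity). Non-negativity is immediate; the row sum over $j$ yields $X_{ii} + s_i' \cdot (\sum_j t_j')/T' = X_{ii} + s_i' = s_i$, and symmetrically for columns. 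The diagonal entries remain $\min\{s_i, t_i\}$ precisely because $s_i' t_i' = 0$.

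There is no real obstacle here; the only subtlety is ensuring that the residual product construction does not disturb the already-fixed diagonal, which is exactly what the identity $s_i' t_i' = 0$ takes care of. The boundary cases (when $i > n$ or $j > m$, so the "diagonal" is not defined, and when $T' = 0$) need to be handled, but they are completely routine.
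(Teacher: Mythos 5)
Your proof is correct, and it takes a genuinely different route from the paper's. The paper proves the lemma by induction on the total number of non-zero entries of $\bar{s}$ and $\bar{t}$: at each step it selects indices $k,\ell$ with $s_k, t_\ell > 0$ (preferring $\ell = k$ when possible), places $\min\{s_k, t_\ell\}$ at position $(k,\ell)$, subtracts, and recurses. Your construction is closed-form: after fixing the diagonal, you fill the off-diagonal part with the rank-one matrix $s_i' t_j'/T'$ built from the residuals. The key observation that $s_i' t_i' = 0$ for every $i$ — one of the two residuals at each index must vanish — is precisely what makes this work, since it guarantees the rank-one term does not perturb the diagonal. The residual-sum identity $\sum_i s_i' = \sum_j t_j'$ is a one-line consequence of the hypothesis and the fact that both sides subtract the same diagonal mass, and your handling of the edge cases ($T' = 0$ and $i > \min\{m,n\}$) is sound. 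The paper's inductive argument has the advantage that it produces $X$ with many zero entries (essentially a sparse transportation plan), whereas your $X$ is generically dense; for the purpose this lemma serves — bounding a cut-distance term — sparsity is irrelevant, so your construction is arguably cleaner and more transparent.
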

\begin{proof}
    We prove the statement by induction on the total number of non-zero entries
    of $\bar{s}$ and $\bar{t}$.
    If $\bar{s}$ and $\bar{t}$ are all-zero vectors, then the desired $X$
    is obtained by choosing the all-zero matrix.
    Now, assume that $\bar{s}$ or $\bar{t}$ has a non-zero entry.
    Then, since $\sum_{j = 1}^{m} s_j = \sum_{i = 1}^{n} t_i$, both $\bar{s}$ and $\bar{t}$ have a non-zero entry.
    Since we can just transpose $X$ and swap the roles of $\bar{s}$ and $\bar{t}$,
    we may assume $m \ge n$ without loss of generality.

    Case $1$: There is no $k \in \numTo{n}$ such that $s_k > 0$ and $t_k > 0$.\\
    Let $k \in \numTo{m}$ such that $s_k > 0$ and let $\ell \in \numTo{n}$
    such that $t_\ell > 0$.
    Consider
    \begin{align*}
        &s_j' \coloneqq \begin{cases}
            s_j &\text{if } j \neq k,\\
            s_k - \min\{s_k, t_\ell\} &\text{if } j = k,
        \end{cases}&
        &\text{and}&
        &t_i' \coloneqq \begin{cases}
            t_i &\text{if } i \neq \ell,\\
            t_\ell - \min\{s_k, t_\ell\} &\text{if } i = \ell.
        \end{cases}&
    \end{align*}
    Then, $\sum_{j = 1}^{m} s'_j = \sum_{i = 1}^{n} t'_i$
    and, since $s'_k = 0$ or $t'_\ell = 0$, in total $\bar{s'}$ and $\bar{t'}$ have one less non-zero entry than $\bar{s}$ and $\bar{t}$.
    The induction hypothesis yields an $X' \in \Rnn^{m \times n}$ such that
    \begin{enumerate}
        \item $\sum_{j = 1}^{n} X'_{ij} = s'_i$ for every $i \in \numTo{m}$,
        \item $\sum_{i = 1}^{m} X'_{ij} = t'_j$ for every $j \in \numTo{n}$, and
        \item $X'_{jj} = \min \{s'_j, t'_j\}$ for every $j \in \numTo{n}$.
    \end{enumerate}
    Since $s'_k = 0$ or $t'_\ell = 0$, we have $X'_{k\ell} = 0$.
    Let $X$ be the matrix obtained from $X'$
    by replacing $X'_{k\ell}$ with $\min\{s_k, t_\ell\}$.
    By the case assumption, we have $k \neq \ell$
    and also $s'_j = 0$ or $t'_j = 0$ for every $j \in \numTo{n}$.
    Thus, $X_{jj} = X'_{jj} = \min\{s'_j, t'_j\} = 0 = \min\{s_j, t_j\}$ for every $j \in \numTo{n}$.
    Hence, $X$ has the desired properties.

    Case $2$: There is an $k \in \numTo{n}$ such that $s_k > 0$ and $t_k > 0$.\\
    We proceed as in the first case, where we choose $\ell \coloneqq k$.
    Then, for the constructed $X$, we have $X_{kk} = \min\{s_k, t_k\}$
    and $X_{jj} = X'_{jj} = \min\{s'_j, t'_j\} = \min\{s_j, t_j\}$
    for $j \in \numTo{n} \setminus \{k\}$.
\end{proof}

\begin{proof}[Proof of \Cref{th:colRefApproxInv}]
    Let $n \ge 2 \cdot \vs(H)$.
    Assume w.l.o.g.\ that $H$ is normalized.
    As a first step, we round the vertex weights of $H$.
    By \Cref{le:roundingVertexWeights}, we can choose $\bar{s} \in \Rnn^{V(H)}$
    such that $\sum_{u \in V(H)} s_u = n$ and
    $\lvert \frac{s_u}{n} - \alpha_u(H) \rvert < \frac{1}{n}$ for every $u \in V(H)$.
    Then, we also have
    \begin{equation*}
        \lvert \frac{n \cdot s_u}{n^2} - \alpha_u(H) \rvert < \frac{1}{n}
    \end{equation*}
    for every $u \in V(H)$.
    In the following, $n \cdot s_u$ is the size of the color class we construct
    for the vertex $u$.
    This blow-up of every color class by $n$ is crucial in the next step.
    Note that it is perfectly fine if we have $s_u = 0$ for some $u \in V(H)$
    in the following;
    we just choose the corresponding values of $M$ as $0$.

    As a second step, we round the edge weights of non-loops of $H$.
    For $u, v \in V(H)$ with $u \neq v$, we have to choose
    $M_{uv} \in \{0, \dots, n \cdot s_v\}$ and $M_{vu} \in \{0, \dots, n \cdot s_u\}$
    such that $M_{uv} \cdot n \cdot s_u = M_{vu} \cdot n \cdot s_v$.
    Note that $M_{uv} \cdot n \cdot s_u = M_{vu} \cdot n \cdot s_v$ is a common multiple of
    $n \cdot s_v$ and $n \cdot s_u$, i.e., we have
    $M_{uv} \cdot n \cdot s_u = M_{vu} \cdot n \cdot s_v = k \cdot \lcm (n \cdot s_u, n \cdot s_v)$
    for some $k \in \{0, \dots, \frac{n \cdot s_u \cdot n \cdot s_v}{\lcm(n \cdot s_u, n \cdot s_v)}\}$.
    Hence, our choice of $M_{uv}$ and $M_{vu}$ is limited to the choice of such a $k$,
    giving us
    \begin{align*}
        &M_{uv} = k \cdot \frac{\lcm(n \cdot s_u, n \cdot s_v)}{n \cdot s_u}&&\text{and}&&M_{vu} = k \cdot \frac{\lcm(n \cdot s_u, n \cdot s_v)}{n \cdot s_v}.&
    \end{align*}
    As
$\frac{\lcm(n \cdot s_u, n \cdot s_v)}{n \cdot s_u}= \frac{n \cdot \lcm(s_u, s_v)}{n \cdot s_u}= \frac{\lcm(s_u, s_v)}{s_u}\le \frac{s_u \cdot s_v}{s_u}= s_v$,
we can choose $M_{uv}$ in steps of at most $s_v$ and, symmetrically, $M_{vu}$
    in steps of $s_u$.
    This means that we can choose a $k$ such that
    \begin{equation*}
        \left\lvert \frac{M_{uv}}{n \cdot s_v} - \beta_{uv}(H) \right\rvert = \left\lvert \frac{M_{vu}}{n \cdot s_u} - \beta_{vu}(H) \right\rvert \le \frac{1}{2n}.
    \end{equation*}

    As a third step, we round the edge weights of the loops of $H$.
    For $u \in V(H)$, note that we could choose $M_{uu} \in \{0, \dots, n \cdot s_u - 1\}$ such that
    $M_{uu} \cdot n \cdot s_u$ is even and
    \begin{equation*}
        \left\lvert \frac{M_{uu}}{n \cdot s_u} - \beta_{uu}(H) \right\rvert \le \frac{1}{n \cdot s_u}.
    \end{equation*}
    However, to ensure that color refinement refines the resulting graph
    to the color classes specified by $\bar{s}$ and $M$
    and not to some coarser partition, we tweak the diagonal entries $M_{uu}$ a bit.
    The matrix $M$ is of dimension $\vs(H) \times \vs(H)$, i.e., we can obtain
    pairwise distinct row sums by choosing a value that is close to the value $M_{uu}$
    chosen above.
    More precisely, as $n \ge 2 \cdot  \vs(H)$, we always have at least $\vs(H) - 1$
    valid choices
    that deviate from the above choice of $M_{uu}$ by at most $2 (\vs(H) - 1)$.
    Hence, we can choose $M_{uu} \in \{0, \dots, n \cdot s_u - 1\}$ such that
    $M_{uu} \cdot n \cdot s_u$ is even,
    \begin{equation*}
        \left\lvert \frac{M_{uu}}{n \cdot s_u} - \beta_{uu}(H) \right\rvert \le \frac{2 (\vs(H) - 1)}{n \cdot s_u} \le \frac{2 \vs(H)}{n},
    \end{equation*}
    and all row sums of $M$ are pairwise distinct.

    By the criteria of \cite{KieferSchweitzerSelman15}, cf.\ \Cref{th:inversion},
    we obtain a graph $G$ for $(n \cdot s_u)_{u \in V(H)}$ and $M$
    on $\sum_{u \in V(H)} n \cdot s_u = n^2$ vertices
    with the corresponding partition $(C_u)_{u \in V(H)}$.
    Note that, since all row sums of $M$ are pairwise distinct,
    vertices in different sets of the partition have different degrees, i.e.,
    there is no coarser stable coloring than the one induced by
    $(C_u)_{u \in V(H)}$.
    Hence, $\ColG$ is isomorphic to the weighted graph
    $G_{n \cdot \bar{s}, M}$.

    It remains to prove that $\Col{G}$ and $H$ are actually close in the
    cut distance.
    As $(\frac{n \cdot s_u}{n^2})_{u \in V(H)} = (\frac{s_u}{n})_{u \in V(H)}$ and $(\alpha_u(H))_{u \in V(H)}$
    sum to $1$, \Cref{le:fracOverlay} yields a matrix $X \in \Rnn^{V(H) \times V(H)}$
    with
    \begin{enumerate}
        \item $\sum_{v \in V(H)} X_{uv} = \frac{s_u}{n}$ for every $u \in V(H)$,
        \item $\sum_{u \in V(H)} X_{uv} = \alpha_v(H)$ for every $v \in V(H)$, and
        \item $X_{uu} = \min \{\frac{s_u}{n}, \alpha_u(H)\}$ for every $u \in V(H)$.
    \end{enumerate}
    We have
    \begin{align*}
        \dist(\ColG, H)
        &= \dist(G_{n \cdot \bar{s}, M}, H)\nonumber\\
        &\le \fracdist(G_{n \cdot \bar{s}, M}, H, X)\nonumber\\
        &= \max_{Q, R \subseteq V(H) \times V(H)} \Big\lvert \sum_{\substack{iu \in Q,\\ jv \in R}} X_{iu} X_{jv} (\frac{M_{ij}}{n \cdot s_j} - \beta_{uv}(H)) \Big\rvert\nonumber\\
        &\le \sum_{i,j,u,v \in V(H)} X_{iu} X_{jv} \Big\lvert\frac{M_{ij}}{n \cdot s_j} - \beta_{uv}(H) \Big\rvert\nonumber\\
        &=  \begin{aligned}[t]
                \sum_{i,j \in V(H)} X_{ii} X_{jj} \Big\lvert\frac{M_{ij}}{n \cdot s_j} - \beta_{ij}(H) \Big\rvert
               + \;\;\;\sum_{\mathclap{\substack{i,j,u,v \in V(H),\\ i \neq u \text{ or } j \neq v}}}\;\; X_{iu} X_{jv} \Big\lvert\frac{M_{ij}}{n \cdot s_j} - \beta_{uv}(H) \Big\rvert.
        \end{aligned}
    \end{align*}
    For the first of these two sums, we get
    \begin{align*}
        \sum_{i,j \in V(H)} X_{ii} X_{jj} \Big\lvert\frac{M_{ij}}{n \cdot s_j} - \beta_{ij}(H) \Big\rvert &\le \sum_{i,j \in V(H)} \alpha_i(H) \alpha_j(H) \cdot \frac{2 \vs(H)}{n}\\
        &=  \frac{2 \vs(H)}{n} \cdot \sum_{i \in V(H)} \Big( \alpha_i(H) \cdot \sum_{j \in V(H)} \alpha_j(H) \Big)\\
&=  \frac{2 \vs(H)}{n}.
    \end{align*}
    For the second sum, we note that, for $u \in V(H)$, we have
    \begin{align*}
        \sum_{\substack{v \in V(H),\\ v \neq u}} X_{uv} + \sum_{\substack{v \in V(H),\\ v \neq u}} X_{vu}
        &= \sum_{v \in V(H)} X_{uv} + \sum_{v \in V(H)} X_{vu} - 2 \cdot X_{uu}\\
        &= \frac{s_u}{n} + \alpha_u(H) - 2 \cdot \min \{\frac{s_u}{n}, \alpha_u(H)\}\\
        &< \frac{1}{n}
    \end{align*}
    and, hence,
    \begin{align*}
        \sum_{\substack{u,v \in V(H),\\u \neq v}} X_{uv}= \frac{1}{2} \cdot \sum_{u \in V(H)} \Big( \sum_{\substack{v \in V(H),\\ v \neq u}} X_{uv} + \sum_{\substack{v \in V(H),\\ v \neq u}} X_{vu} \Big)
        \le \frac{1}{2} \cdot \sum_{u \in V(H)} \frac{1}{n}
        = \frac{\vs(H)}{2n}
    \end{align*}
    Then, for the second sum, we get
    \begin{align*}
        \sum_{\mathclap{\substack{i,j,u,v \in V(H),\\ i \neq u \text{ or } j \neq v}}}\;\; X_{iu} X_{jv} \Big\lvert\frac{M_{ij}}{n \cdot s_j} - \beta_{uv}(H) \Big\rvert
        &\le \;\;\;\sum_{\mathclap{\substack{i,j,u,v \in V(H),\\ i \neq u \text{ or } j \neq v}}}\;\; X_{iu} X_{jv}\\
&= \;\;\;\sum_{\mathclap{\substack{i,j,u,v \in V(H),\\ i \neq u \text{ and } j \neq v}}}\;\; X_{iu} X_{jv}
            + \;\;\;\sum_{\mathclap{\substack{i,j,v \in V(H),\\ j \neq v}}}\;\; X_{ii} X_{jv}
            + \;\;\;\sum_{\mathclap{\substack{i,j,u \in V(H),\\ i \neq u }}}\;\; X_{iu} X_{jj}\\
&= \Big(\sum_{\substack{i,u \in V(H),\\ i \neq u}} X_{iu}\Big)^2
            + 2 \cdot \Big(\sum_{i \in V(H)} X_{ii} \Big) \cdot \Big( \sum_{\substack{j,v \in V(H)\\ j \neq v}} X_{jv} \Big)\\
        &\le  \Big(\frac{\vs(H)}{2n}\Big)^2 + 2 \cdot \Big(\sum_{i \in V(H)} \alpha_i(H) \Big) \cdot \frac{\vs(H)}{2n}\\
        &= \frac{1}{4} \cdot \Big(\frac{\vs(H)}{n}\Big)^2 + \frac{\vs(H)}{n}.
    \end{align*}
    Summing up these two bounds, we get an overall upper bound of
    \begin{align*}
        3 \cdot \frac{\vs(H)}{n} + \frac{1}{4} \cdot \Big(\frac{\vs(H)}{n}\Big)^2.
    \end{align*}
\end{proof}


\begin{thebibliography}{10}

\bibitem{Bergh1976}
Jöran Bergh and Jörgen L{\"o}fstr{\"o}m.
\newblock {\em Interpolation Spaces: An Introduction}.
\newblock Die Grundlehren Der Mathematischen Wissenschaften in
  Einzeldarstellungen. Springer-Verlag, 1976.
\newblock \href {https://doi.org/10.1007/978-3-642-66451-9}
  {\path{doi:10.1007/978-3-642-66451-9}}.

\bibitem{BorgsEtAl2006GraphLimits}
Christian Borgs, Jennifer Chayes, L\'{a}szl\'{o} Lov\'{a}sz, Vera~T. S\'{o}s,
  Bal\'{a}zs Szegedy, and Katalin Vesztergombi.
\newblock Graph limits and parameter testing.
\newblock In {\em Proceedings of the Thirty-Eighth Annual ACM Symposium on
  Theory of Computing}, STOC '06, page 261–270, New York, NY, USA, 2006.
  Association for Computing Machinery.
\newblock \href {https://doi.org/10.1145/1132516.1132556}
  {\path{doi:10.1145/1132516.1132556}}.

\bibitem{BorgsEtAl2008}
Christian Borgs, Jennifer~T. Chayes, L{\'a}szl{\'o} Lov{\'a}sz, Vera~T.
  S{\'o}s, and Katalin Vesztergombi.
\newblock Convergent sequences of dense graphs i: Subgraph frequencies, metric
  properties and testing.
\newblock {\em Advances in Mathematics}, 219(6):1801--1851, December 2008.
\newblock \href {https://doi.org/10.1016/j.aim.2008.07.008}
  {\path{doi:10.1016/j.aim.2008.07.008}}.

\bibitem{Curticapean2017}
Radu Curticapean, Holger Dell, and D\'{a}niel Marx.
\newblock Homomorphisms are a good basis for counting small subgraphs.
\newblock In {\em Proceedings of the 49th Annual ACM SIGACT Symposium on Theory
  of Computing}, STOC 2017, page 210–223, New York, NY, USA, 2017.
  Association for Computing Machinery.
\newblock \href {https://doi.org/10.1145/3055399.3055502}
  {\path{doi:10.1145/3055399.3055502}}.

\bibitem{DalmauJonson2004}
Víctor Dalmau and Peter Jonsson.
\newblock The complexity of counting homomorphisms seen from the other side.
\newblock {\em Theoretical Computer Science}, 329(1):315 -- 323, 2004.
\newblock \href {https://doi.org/10.1016/j.tcs.2004.08.008}
  {\path{doi:10.1016/j.tcs.2004.08.008}}.

\bibitem{Dell2018}
Holger Dell, Martin Grohe, and Gaurav Rattan.
\newblock {Lov{\'a}sz Meets Weisfeiler and Leman}.
\newblock In Ioannis Chatzigiannakis, Christos Kaklamanis, D{\'a}niel Marx, and
  Donald Sannella, editors, {\em 45th International Colloquium on Automata,
  Languages, and Programming (ICALP 2018)}, volume 107 of {\em Leibniz
  International Proceedings in Informatics (LIPIcs)}, pages 40:1--40:14,
  Dagstuhl, Germany, 2018. Schloss Dagstuhl--Leibniz-Zentrum fuer Informatik.
\newblock \href {https://doi.org/10.4230/LIPIcs.ICALP.2018.40}
  {\path{doi:10.4230/LIPIcs.ICALP.2018.40}}.

\bibitem{Dudley2002}
Richard~M. Dudley.
\newblock {\em Real Analysis and Probability}.
\newblock Cambridge Studies in Advanced Mathematics. Cambridge University
  Press, 2 edition, 2002.
\newblock \href {https://doi.org/10.1017/CBO9780511755347}
  {\path{doi:10.1017/CBO9780511755347}}.

\bibitem{DunfordSchwartz1988}
Nelson Dunford and Jacob~T. Schwartz.
\newblock {\em Linear Operators, Part 2: Spectral Theory, Self Adjoint
  Operators in Hilbert Space}.
\newblock Linear Operators. Interscience Publishers, 1963.

\bibitem{Dvorak2010}
Zdeněk Dvořák.
\newblock On recognizing graphs by numbers of homomorphisms.
\newblock {\em Journal of Graph Theory}, 64(4):330--342, 2010.
\newblock \href {https://doi.org/10.1002/jgt.20461}
  {\path{doi:10.1002/jgt.20461}}.

\bibitem{EisnerEtAl2015}
Tanja Eisner, Bálint Farkas, Markus Haase, and Rainer Nagel.
\newblock {\em Operator Theoretic Aspects of Ergodic Theory}.
\newblock Graduate Texts in Mathematics. Springer International Publishing,
  2015.
\newblock \href {https://doi.org/10.1007/978-3-319-16898-2}
  {\path{doi:10.1007/978-3-319-16898-2}}.

\bibitem{FriezeKannan1999}
Alan Frieze and Ravindran Kannan.
\newblock Quick approximation to matrices and applications.
\newblock {\em Combinatorica}, 19:175--220, 02 1999.
\newblock \href {https://doi.org/10.1007/s004930050052}
  {\path{doi:10.1007/s004930050052}}.

\bibitem{GrebikRocha2019}
Jan Grebík and Israel Rocha.
\newblock Fractional isomorphism of graphons, 2021.
\newblock Accepted to Combinatorica.
\newblock \href {http://arxiv.org/abs/1909.04122} {\path{arXiv:1909.04122}}.

\bibitem{Grohe2020TreeDepth}
Martin Grohe.
\newblock Counting bounded tree depth homomorphisms.
\newblock In {\em Proceedings of the 35th Annual ACM/IEEE Symposium on Logic in
  Computer Science}, LICS '20, page 507–520, New York, NY, USA, 2020.
  Association for Computing Machinery.
\newblock \href {https://doi.org/10.1145/3373718.3394739}
  {\path{doi:10.1145/3373718.3394739}}.

\bibitem{Grohe2020}
Martin Grohe.
\newblock Word2vec, node2vec, graph2vec, x2vec: Towards a theory of vector
  embeddings of structured data.
\newblock In {\em Proceedings of the 39th ACM SIGMOD-SIGACT-SIGAI Symposium on
  Principles of Database Systems}, PODS'20, page 1–16, New York, NY, USA,
  2020. Association for Computing Machinery.
\newblock \href {https://doi.org/10.1145/3375395.3387641}
  {\path{doi:10.1145/3375395.3387641}}.

\bibitem{Janson2013}
Svante Janson.
\newblock Graphons, cut norm and distance, couplings and rearrangements.
\newblock {\em New York Journal of Mathematics}, 4, 2013.

\bibitem{KieferSchweitzerSelman15}
Sandra Kiefer, Pascal Schweitzer, and Erkal Selman.
\newblock Graphs identified by logics with counting.
\newblock In Giuseppe~F Italiano, Giovanni Pighizzini, and Donald~T. Sannella,
  editors, {\em Mathematical Foundations of Computer Science 2015}, pages
  319--330, Berlin, Heidelberg, 2015. Springer Berlin Heidelberg.
\newblock \href {https://doi.org/10.1007/978-3-662-48057-1_25}
  {\path{doi:10.1007/978-3-662-48057-1_25}}.

\bibitem{Lovasz1967}
L{\'a}szl{\'o} Lov{\'a}sz.
\newblock Operations with structures.
\newblock {\em Acta Mathematica Hungarica}, 18(3-4):321--328, 1967.
\newblock \href {https://doi.org/10.1007/BF02280291}
  {\path{doi:10.1007/BF02280291}}.

\bibitem{LovaszSzegedy2007}
L{\'a}szl{\'o} Lov{\'a}sz and Bal{\'a}zs Szegedy.
\newblock Szemer{\'e}di’s lemma for the analyst.
\newblock {\em GAFA Geometric And Functional Analysis}, 17:252--270, 2007.
\newblock \href {https://doi.org/10.1007/s00039-007-0599-6}
  {\path{doi:10.1007/s00039-007-0599-6}}.

\bibitem{Lovasz2012}
László Lovász.
\newblock {\em Large Networks and Graph Limits.}, volume~60 of {\em Colloquium
  Publications}.
\newblock American Mathematical Society, 2012.
\newblock \href {https://doi.org/10.1090/coll/060}
  {\path{doi:10.1090/coll/060}}.

\bibitem{LovaszSzegedy2006}
László Lovász and Balázs Szegedy.
\newblock Limits of dense graph sequences.
\newblock {\em Journal of Combinatorial Theory, Series B}, 96(6):933 -- 957,
  2006.
\newblock \href {https://doi.org/10.1016/j.jctb.2006.05.002}
  {\path{doi:10.1016/j.jctb.2006.05.002}}.

\bibitem{MancinskaEtAl2019}
Laura Mančinska and David~E. Roberson.
\newblock Quantum isomorphism is equivalent to equality of homomorphism counts
  from planar graphs.
\newblock In {\em 2020 IEEE 61st Annual Symposium on Foundations of Computer
  Science (FOCS)}, pages 661--672, 2020.
\newblock \href {https://doi.org/10.1109/FOCS46700.2020.00067}
  {\path{doi:10.1109/FOCS46700.2020.00067}}.

\bibitem{Nagarajan2009}
Viswanath Nagarajan and Maxim Sviridenko.
\newblock On the maximum quadratic assignment problem.
\newblock {\em Mathematics of Operations Research}, 34(4):859--868, 2009.
\newblock \href {https://doi.org/10.1287/moor.1090.0418}
  {\path{doi:10.1287/moor.1090.0418}}.

\bibitem{NesterovNemirovskii1994}
Yurii Nesterov and Arkadii Nemirovskii.
\newblock {\em Interior-point Polynomial Algorithms in Convex Programming}.
\newblock Studies in Applied Mathematics. Society for Industrial and Applied
  Mathematics, 1994.
\newblock \href {https://doi.org/10.1137/1.9781611970791}
  {\path{doi:10.1137/1.9781611970791}}.

\bibitem{Otto97}
Martin Otto.
\newblock Canonization for two variables and puzzles on the square.
\newblock {\em Annals of Pure and Applied Logic}, 85(3):243 -- 282, 1997.
\newblock \href {https://doi.org/10.1016/S0168-0072(96)00047-4}
  {\path{doi:10.1016/S0168-0072(96)00047-4}}.

\bibitem{Rudin1991}
Walter Rudin.
\newblock {\em Functional Analysis}.
\newblock International series in pure and applied mathematics. McGraw-Hill,
  1991.

\bibitem{Tinhofer1986}
Gottfried Tinhofer.
\newblock Graph isomorphism and theorems of birkhoff-type.
\newblock {\em Computing}, 36(4):285–300, June 1986.
\newblock \href {https://doi.org/10.1007/BF02240204}
  {\path{doi:10.1007/BF02240204}}.

\bibitem{Tinhofer1991}
Gottfried Tinhofer.
\newblock A note on compact graphs.
\newblock {\em Discrete Applied Mathematics}, 30(2):253 -- 264, 1991.
\newblock \href {https://doi.org/10.1016/0166-218X(91)90049-3}
  {\path{doi:10.1016/0166-218X(91)90049-3}}.

\end{thebibliography}
\end{document}